\title{Finding~a~Small~Number~of~Colourful~Components\footnote{This paper was supported by Research Project Grant RPG-2016-258 of the Leverhulme Trust.}}
\titlerunning{Finding a Small Number of Colourful Components}
\author{Laurent Bulteau}{Universit\'e Paris-Est, LIGM (UMR 8049), CNRS, ENPC, UPEM, ESIEE Paris, France}{laurent.bulteau@u-pem.fr}{}{}
\author{Konrad K. Dabrowski}{Department of Computer Science, Durham University, Durham, UK}{konrad.dabrowski@durham.ac.uk}{}
{Also supported by EPSRC (EP/K025090/1)}
\author{Guillaume Fertin}{Universit\'e de Nantes, LS2N (UMR 6004), CNRS, Nantes, France}{guillaume.fertin@univ-nantes.fr}{}{}
\author{Matthew Johnson}{Department of Computer Science, Durham University, Durham, UK}{matthew.johnson2@durham.ac.uk}{}{}
\author{Dani\"el Paulusma}{Department of Computer Science, Durham University, Durham, UK}{daniel.paulusma@durham.ac.uk}{}
{Also supported by EPSRC (EP/K025090/1)}
\author{St\'ephane Vialette}{Universit\'e Paris-Est, LIGM (UMR 8049), CNRS, ENPC, UPEM, ESIEE Paris, France}{vialette@u-pem.fr}{}{}
\authorrunning{L. Bulteau, K.\,K. Dabrowski, G. Fertin, M. Johnson, D. Paulusma and S. Vialette}
\subjclass{\ccsdesc[500]{Mathematics of computing~Graph theory}}
\keywords{colourful component, colourful partition, tree, vertex cover}
\newcounter{ctrclaim}[theorem]
\newcounter{ctrcase}[theorem]
\newcounter{ctrrule}[theorem]
\newcommand{\clm}[1]{\medskip\phantomsection\refstepcounter{ctrclaim}\noindent\textcolor{darkgray}{$\blacktriangleright$\nobreakspace\sffamily\bfseries Claim \thectrclaim. }{\em #1}}
\newcommand{\case}[1]{\medskip\phantomsection\refstepcounter{ctrcase}\noindent\textcolor{darkgray}{$\blacktriangleright$\nobreakspace\sffamily\bfseries Case \thectrcase. }{\em #1}}
\newcommand{\ourrule}[1]{\medskip\phantomsection\refstepcounter{ctrrule}\noindent\textcolor{darkgray}{$\blacktriangleright$\nobreakspace\sffamily\bfseries Rule \thectrrule. }{\em #1}}
\newcommand{\True}{\ensuremath{\text{\tt true}}}
\newcommand{\False}{\ensuremath{\text{\tt false}}}
\newcommand \dia{\hfill{\textcolor{darkgray}{$\diamond$}}}
\tikzset{every node/.style={circle, inner sep=0pt, minimum size=3pt, draw=black, fill=white}}
\tikzset{every path/.style={draw=black}}
\definecolor{colour1}{rgb}{.8,0,0}
\definecolor{colour2}{rgb}{0.7,0.7,1}
\tikzset{comp1/.style = {every path/.style={
			draw=colour1,
			line width=3pt
}}}
\tikzset{comp2/.style = {every path/.style={
			draw=colour2,
			line width=3pt
}}}
\tikzset{maybe/.style = {dashed }}
\newcommand{\vertexGadgetThreeComponents}{
\begin{scope}[scale=.6, label distance=1mm]
\foreach \d/\l/\a/\ca/\cb/\cc in {150/60/a/1/3/2, 270/180/b/3/2/1,  30/300/c/2/1/3 } {
 \path [] (\d:.7) node (\a1)[label=\l:\ca]{}
        -- ++ (\d:1)node (\a2)[label=\l:\cb]{}
        -- ++ (\d:1)node (\a3)[label=\l:\cc]{}
        -- ++ (\d:1)node (\a4)[label=\l:\ca, label={[label distance=8mm]\l:$v_{e_\ca}$}]{}  ;
 \path (\a4) -- ++ (\d:0.6);
 \path [dotted] (\d:4.3) -- ++ (\d:0.6) node[draw=none] {};

 \begin{scope}[on background layer]
 	\draw[draw=none, fill=red!40!white, rotate=\l] (\a4) ellipse (1 and .5);
 \end{scope}
 }
 \path[rounded corners] (a1) --(b1)--(c1)--(a1) ;
 \begin{scope}[on background layer]
 	\path [draw=none, fill=blue!20!white, on background layer] (0,0) circle (3);
 \end{scope}
 \end{scope}
 }
\newcommand{\vargadgetPathwidth}{

 \path [] (0,0) node (a)[label=above:$a_i$, label=left:$\cdots$]{}
 	   -- ++(1,0) node (b)  [label=above:$b_i$] {}
 	   -- ++(1,0) node (c) [label=above:$c_i$] {}
 	   -- ++(5,0) node (d) [label=above:$d_i$,label=right:$\cdots$] {};

\path  (c)
 	   -- ++(1,-1) node (x1) [label=below:$\alpha_i^1$] {}
 	   -- ++(1,0) node  [label=below:$\alpha_i^2$] {}
 	   -- ++(1,0) node  [label=below:$\cdots$] {}
 	   -- ++(1,0) node  (xl) [label=below:$\alpha_i^{2\ell_i}$] {};

\path (0,-2) node (a2)[label=below:$a_i$, label=left:$\cdots$]{}
 	   -- ++(1,0) node (b2)  [label=below:$b_i$] {}
 	   -- ++(1,0) node (c2) [label=below:$c_i$] {}
 	   -- ++(5,0) node (d2) [label=below:$d_i$, label=right:$\cdots$] {};

 \path (a)--(b2);
 \path (b)--(a2);
 }
\newcommand{\vargadgetPathwidthnolines}{

 \path [] (0,0) node (a)[label=above:$a_i$, label=left:$\cdots$]{}
 	   ++(1,0) node (b)  [label=above:$b_i$] {}
 	   ++(1,0) node (c) [label=above:$c_i$] {}
 	   ++(5,0) node (d) [label=above:$d_i$,label=right:$\cdots$] {};

\path  (c)
 	   ++(1,-1) node (x1) [label=below:$\alpha_i^1$] {}
 	   ++(1,0) node  [label=below:$\alpha_i^2$] {}
 	   ++(1,0) node  [label=below:$\cdots$] {}
 	   ++(1,0) node  (xl) [label=below:$\alpha_i^{2\ell_i}$] {};

\path (0,-2) node (a2)[label=below:$a_i$, label=left:$\cdots$]{}
 	   ++(1,0) node (b2)  [label=below:$b_i$] {}
 	   ++(1,0) node (c2) [label=below:$c_i$] {}
 	   ++(5,0) node (d2) [label=below:$d_i$, label=right:$\cdots$] {};

 }
\newcommand{\vargadgetPathwidthnonodes}{

 \path [] (0,0) node[minimum size=0pt] (Za){}
 	   ++(1,0) node[minimum size=0pt] (Zb)   {}
 	   ++(1,0) node[minimum size=0pt] (Zc)  {}
 	   ++(5,0) node[minimum size=0pt] (Zd)  {};

\path  (Zc)
 	   ++(1,-1) node[minimum size=0pt] (Zx1)  {}
 	   ++(1,0) node[minimum size=0pt]   {}
 	   ++(1,0) node[minimum size=0pt]   {}
 	   ++(1,0) node[minimum size=0pt]  (Zxl)  {};

\path (0,-2) node[minimum size=0pt] (Za2){}
 	   ++(1,0) node[minimum size=0pt] (Zb2)   {}
 	   ++(1,0) node[minimum size=0pt] (Zc2)  {}
 	   ++(5,0) node[minimum size=0pt] (Zd2)  {};

 \path (Za) -- (Zb2) -- (Za2) -- (Zb) -- (Za);
 }
\newcommand{\clausegadgetPathwidth}{

 \path (0,0) node (a) [label=above:$e_j$, label=left:$\cdots$]{}
 	   -- ++(1,0) node (b)  [label=above:$f_j$] {}
 	   -- ++(1.5,0) node (c) [label=above:$g_j$] {}
 	   -- ++(1.5,0) node (f) [label=above:$h_j$] {}
 	   -- ++(1.5,0) node (g) [label=above:$i_j$, label=right:$\cdots$] {};

  \path (0,-2) node (a2) [label=below:$e_j$, label=left:$\cdots$]{}
 	   -- ++(1,0) node (b2)  [label=below:$f_j$] {}
 	   -- ++(1.5,0) node (c2) [label=below:$g_j$] {}
 	   -- ++(1.5,0) node (f2) [label=below:$h_j$] {}
 	   -- ++(1.5,0) node (g2) [label=below:$i_j$, label=right:$\cdots$] {};
 \path (c)
       -- ++ (0,-.5) node (x1) [label=right:$\alpha_g^{2r-1}$] {}
       -- ++ (0,-.5) node (C1) [label=right:$\beta_j$] {}
       -- ++ (0,-.5) node (y) [label=right:$\alpha_h^{2s}$] {}
       -- (c2);
 \path (a)--(b2);
 \path (b)--(a2);
 \path(f)
       -- ++ (0,-.5) node (x2) [label=right:$\alpha_g^{2r}$] {}
       -- ++ (0,-.5) node (C2) [label=right:$\beta_j$] {}
       -- ++ (0,-.5) node (z) [label=right:$\alpha_i^{2t}$] {}
       -- (f2);
}
\newcommand{\clausegadgetPathwidthnolines}{

 \path (0,0) node (a) [label=above:$e_j$, label=left:$\cdots$]{}
 	   ++(1,0) node (b)  [label=above:$f_j$] {}
 	   ++(1.5,0) node (c) [label=above:$g_j$] {}
 	   ++(1.5,0) node (f) [label=above:$h_j$] {}
 	   ++(1.5,0) node (g) [label=above:$i_j$, label=right:$\cdots$] {};

  \path (0,-2) node (a2) [label=below:$e_j$, label=left:$\cdots$]{}
 	   ++(1,0) node (b2)  [label=below:$f_j$] {}
 	   ++(1.5,0) node (c2) [label=below:$g_j$] {}
 	   ++(1.5,0) node (f2) [label=below:$h_j$] {}
 	   ++(1.5,0) node (g2) [label=below:$i_j$, label=right:$\cdots$] {};
 \path (c)
       ++ (0,-.5) node (x1) [label=right:$\alpha_g^{2r-1}$] {}
       ++ (0,-.5) node (C1) [label=right:$\beta_j$] {}
       ++ (0,-.5) node (y) [label=right:$\alpha_h^{2s}$] {}
       (c2);
 \path (a) (b2);
 \path (b) (a2);
 \path(f)
       ++ (0,-.5) node (x2) [label=right:$\alpha_g^{2r}$] {}
       ++ (0,-.5) node (C2) [label=right:$\beta_j$] {}
       ++ (0,-.5) node (z) [label=right:$\alpha_i^{2t}$] {}
       (f2);
}
\newcommand{\clausegadgetPathwidthnonodes}{

 \path (0,0) node[minimum size=0pt] (Za) {}
 	   ++(1,0) node[minimum size=0pt] (Zb)   {}
 	   ++(1.5,0) node[minimum size=0pt] (Zc)  {}
 	   ++(1.5,0) node[minimum size=0pt] (Zf)  {}
 	   ++(1.5,0) node[minimum size=0pt] (Zg)  {};

  \path (0,-2) node[minimum size=0pt] (Za2) {}
 	   ++(1,0) node[minimum size=0pt] (Zb2)   {}
 	   ++(1.5,0) node[minimum size=0pt] (Zc2)  {}
 	   ++(1.5,0) node[minimum size=0pt] (Zf2)  {}
 	   ++(1.5,0) node[minimum size=0pt] (Zg2)  {};
 \path (Zc)
       ++ (0,-.5) node[minimum size=0pt] (Zx1)  {}
       ++ (0,-.5) node[minimum size=0pt] (ZC1)  {}
       ++ (0,-.5) node[minimum size=0pt] (Zy)  {}
       (Zc2);
 \path (Zc) -- (Zc2);
 \path (Za) -- (Zb2) -- (Za2) -- (Zb) -- (Za);
 \path(Zf)
       ++ (0,-.5) node[minimum size=0pt] (Zx2)  {}
       ++ (0,-.5) node[minimum size=0pt] (ZC2)  {}
       ++ (0,-.5) node[minimum size=0pt] (Zz)  {}
       (Zf2);
 \path (Zf) -- (Zf2);
}
\theoremstyle{plain}
\theoremstyle{definition}
\newcommand{\ol}[1]{\overline{#1}}
\newcommand{\NP}{{\sf NP}}
\newcommand{\FPT}{{\sf FPT}}
\DeclareMathOperator{\tw}{tw}
\DeclareMathOperator{\vc}{vc}
\newcommand{\problemdef}[3]{
	\begin{center}
		\begin{boxedminipage}{.99\textwidth}
			\textsc{{#1}}\\[2pt]
			\begin{tabular}{ r p{0.8\textwidth}}
				\textit{~~~~Instance:} & {#2}\\
				\textit{Question:} & {#3}
			\end{tabular}
		\end{boxedminipage}
	\end{center}
}
\begin{document}

\maketitle

\begin{abstract}
A partition $(V_1,\ldots,V_k)$ of the vertex set of a graph~$G$ with a (not necessarily proper) colouring~$c$ is colourful if no two vertices in any~$V_i$ have the same colour and every set~$V_i$ induces a connected graph.
The {\sc Colourful Partition} problem is to decide whether a coloured graph $(G,c)$ has a colourful partition of size at most~$k$.
This problem is closely related to the {\sc Colourful Components} problem, which is to decide whether a graph can be modified into a graph whose connected components form a colourful partition by deleting at most~$p$ edges.
Nevertheless we show that {\sc Colourful Partition} and {\sc Colourful Components} may have different complexities for restricted instances.
We tighten known \NP-hardness results for both problems and in addition we prove new hardness and tractability results for {\sc Colourful Partition}.
Using these results we complete our paper with a thorough parameterized study of {\sc Colourful Partition}.
\end{abstract}

\section{Introduction}\label{s-intro}

Research in comparative genomics, which studies the structure and evolution of genomes from different species, has motivated a number of interesting graph colouring problems.
In this paper we focus on the multiple genome alignment problem, where one takes a set of sequenced genomes, lets the genes be the vertex set of a graph~$G$, and joins by an edge any pair of genes whose \emph{similarity} (determined by their nucleotide sequences) exceeds a given threshold.
The vertices are also coloured to indicate the species to which each gene belongs.
This leads to a {\em coloured graph} $(G,c)$, where $c:V(G)\to \{1,2,\ldots\}$ denotes the colouring of~$G$.
We emphasize that~$c$ is not necessarily proper (adjacent vertices may have the same colour).

One seeks to better understand the evolutionary processes affecting these genomes by attempting to partition the genes into \emph{orthologous} sets (that is, collections of genes that originated from the same ancestral species but diverged following a speciation event) that are sufficiently similar.
This translates into partitioning~$V(G)$ such that each part
\begin{enumerate}[(i)]
\item contains no more than one vertex of each colour, and
\item induces a connected component.
\end{enumerate}
These two conditions ensure that each part contains vertices representing an orthologous set of similar genes.
In addition, one seeks to find a partition that is in some sense optimal.

When Zheng et al.~\cite{ZSLS11} considered this model, one approach they followed was to try to delete as few edges as possible such that the connected components of the resulting graph~$G'$ are \emph{colourful}, that is, contain no more than one vertex of any colour; in this case the connected components give the partition of~$V(G)$.
This led them to the {\sc Orthogonal Partition} problem, introduced in~\cite{HLZ00} and also known as the {\sc Colourful Components} problem~\cite{AP15,BHKNTU12,DS18}. (Note that a graph is {\em colourful} if each of its components is colourful.)

\problemdef{Colourful Components}{A coloured graph~$(G,c)$ and an integer $p\geq 1$}{Is it possible to modify~$G$ into a colourful graph~$G'$ by deleting at most~\mbox{$p$ edges}?}
The central problem of this paper is a natural variant, {\sc Colourful Partition}, which was introduced by Adamaszek and Popa~\cite{AP15}.
A partition is {\em colourful} if every partition class induces a connected colourful graph.
The {\em size} of a partition is its number of partition classes.

\problemdef{Colourful Partition}{A coloured graph~$(G,c)$ and an integer $k\geq 1$}{Does~$(G,c)$ have a colourful partition of size at most~$k$?}
The following example demonstrates that {\sc Colourful Components} and {\sc Colourful Partition} are not the same:
a colourful partition with the fewest parts might require the deletion of more edges than the minimum needed to obtain a colourful graph.
Later we find a family of instances on which the problems have different complexities (see Corollary~\ref{c-diff}).

\begin{example}\label{e-intro}
Let~$G$ have vertices $u_1,\ldots,u_k$, $v_1,\ldots,v_k$, $w,w'$ and edges~$ww'$ and $u_iw$, $u_iw'$, $v_iw$, $v_iw'$ for $i\in\{1,\ldots,k\}$.
Let~$c$ assign colour~$i$ to each~$u_i$ and~$v_i$, colour~$k+\nobreak 1$ to~$w$ and colour~$k+\nobreak 2$ to~$w'$.
Then $(\{u_1,\ldots,u_k,w\},\{v_1,\ldots,v_k,w'\})$ is a colourful partition for~$(G,c)$ of size~$2$, which we obtain by deleting~$2k+\nobreak 1$ edges.
However, deleting the~$2k$ edges~$v_iw$ and~$v_iw'$ for $i\in\{1,\ldots,k\}$ also yields a colourful graph
(with a colourful partition of size~$k+1$).
See \figurename~\ref{fig:example} for an illustration.
\end{example}

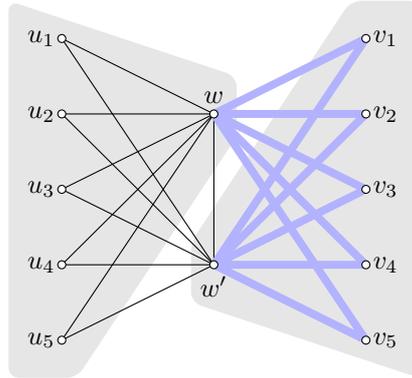
\begin{figure}
\begin{center}
{\begin{tikzpicture}

\node[minimum size=0pt] (Zw) at (0,1) {};
\node[minimum size=0pt] (Zwa) at (0,-1) {};

\node[minimum size=0pt] (Zu1) at (-2,2) {};
\node[minimum size=0pt] (Zu2) at (-2,1) {};
\node[minimum size=0pt] (Zu3) at (-2,0) {};
\node[minimum size=0pt] (Zu4) at (-2,-1) {};
\node[minimum size=0pt] (Zu5) at (-2,-2) {};

\node[minimum size=0pt] (Zv1) at (2,2) {};
\node[minimum size=0pt] (Zv2) at (2,1) {};
\node[minimum size=0pt] (Zv3) at (2,0) {};
\node[minimum size=0pt] (Zv4) at (2,-1) {};
\node[minimum size=0pt] (Zv5) at (2,-2) {};

\path (Zu1) -- (Zw) -- (Zv1) -- (Zwa) -- (Zu1); 
\path (Zu2) -- (Zw) -- (Zv2) -- (Zwa) -- (Zu2); 
\path (Zu3) -- (Zw) -- (Zv3) -- (Zwa) -- (Zu3); 
\path (Zu4) -- (Zw) -- (Zv4) -- (Zwa) -- (Zu4); 
\path (Zu5) -- (Zw) -- (Zv5) -- (Zwa) -- (Zu5); 

\path (Zw) -- (Zwa);

	   \begin{scope}[comp2]
	   \path (Zw)--(Zv1)--(Zwa);
	   \path (Zw)--(Zv2)--(Zwa);
	   \path (Zw)--(Zv3)--(Zwa);
	   \path (Zw)--(Zv4)--(Zwa);
	   \path (Zw)--(Zv5)--(Zwa);
	   \end{scope}

\node[label=above:$w$] (w) at (0,1) {};
\node[label=below:$w'$] (wa) at (0,-1) {};

\node[label=left:$u_1$] (u1) at (-2,2) {};
\node[label=left:$u_2$] (u2) at (-2,1) {};
\node[label=left:$u_3$] (u3) at (-2,0) {};
\node[label=left:$u_4$] (u4) at (-2,-1) {};
\node[label=left:$u_5$] (u5) at (-2,-2) {};

\node[label=right:$v_1$] (v1) at (2,2) {};
\node[label=right:$v_2$] (v2) at (2,1) {};
\node[label=right:$v_3$] (v3) at (2,0) {};
\node[label=right:$v_4$] (v4) at (2,-1) {};
\node[label=right:$v_5$] (v5) at (2,-2) {};
 
 \begin{scope}[on background layer]
 	\path [draw=none, rounded corners, fill=black!10!white, on background layer] (-2.7,-2.5) -- (-2.7,2.5) -- (0.3,1.5) -- (0.3,0.7) -- (-1.8,-2.5) -- cycle;
 \end{scope}

 \begin{scope}[on background layer]
 	\path [draw=none, rounded corners, fill=black!10!white, on background layer] (2.7,2.5) -- (2.7,-2.5) -- (-0.3,-1.5) -- (-0.3,-0.7) -- (1.8,2.5) -- cycle;
 \end{scope}

\end{tikzpicture}}
\end{center}
\caption{\label{fig:example}The graph~$G$ of Example~\ref{e-intro} with $k=5$ (the colouring is not indicated, but recall that~$u_i$ and~$v_i$ are coloured alike and otherwise colours are distinct).
The shaded areas represent a colourful partition of minimum size; there are 11 edges between the two parts.
The 10 highlighted edges are a smallest set whose deletion leaves a colourful graph.}
\end{figure}

\subsection{Known Results}\label{s-known}

Adamaszek and Popa~\cite{AP15} proved, among other results, that {\sc Colourful Partition} does not admit a polynomial-time approximation within a factor of~$n^{\frac{1}{14}-\epsilon}$, for any $\epsilon>0$ (assuming $\sf{P}\neq \NP$).
A coloured graph~$(G,c)$ is {\em $\ell$-coloured} if $1\leq c(u)\leq \ell$ for all $u\in V(G)$.
Bruckner et al.~\cite{BHKNTU12} proved the following two results for {\sc Colourful Components}.
The first result follows from observing that for $\ell=2$, the problem becomes a maximum matching problem in a bipartite graph after removing all edges between vertices coloured alike.
This observation can also be used for {\sc Colourful Partition}.

\begin{theorem}[\cite{BHKNTU12}]\label{t-2colours}
{\sc Colourful Partition} and {\sc Colourful Components} are polynomial-time solvable for $2$-coloured graphs.
\end{theorem}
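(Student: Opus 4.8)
The plan is to reduce both problems, on a $2$-coloured graph $(G,c)$, to a maximum matching computation in an auxiliary bipartite graph, as the excerpt hints. First I would discard every edge of $G$ joining two vertices of the same colour: such an edge can never lie inside a partition class of a colourful partition (nor inside a component of a colourful graph), so deleting it changes neither the set of colourful partitions nor the set of edge-deletion sets we care about, and it changes the connectivity structure only in ways that are already forced. Let $G'$ be the resulting graph; it is bipartite, with the two sides being the colour-$1$ vertices and the colour-$2$ vertices. A connected colourful subgraph of $G'$ now has at most two vertices (one of each colour), so a colourful partition of $(G',c)$ is precisely a partition of $V(G')$ into singletons and edges of $G'$, i.e.\ it is exactly a matching $M$ of $G'$ together with the unmatched vertices as singletons; its size is $|V(G')| - |M|$.

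For \textsc{Colourful Partition} this immediately gives the answer: $(G,c)$ has a colourful partition of size at most $k$ if and only if $G'$ has a matching of size at least $|V(G)| - k$, which is decidable in polynomial time by bipartite maximum matching (e.g.\ Hopcroft--Karp). For \textsc{Colourful Components} one extra observation is needed: deleting the monochromatic edges of $G$ is always part of an optimal solution, and once those are gone the components of any further edge-deleted graph $G''$ must be colourful means each remaining component has at most one vertex of each colour, hence at most two vertices; so the surviving edges again form a matching $M$ of $G'$, and the number of deleted edges is $|E(G)| - |M|$ (monochromatic edges plus the non-matching edges of $G'$). Thus the minimum number of deletions is $|E(G)| - \mu(G')$ where $\mu(G')$ is the maximum matching size, and again this is computable in polynomial time.

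The only genuinely non-trivial point — and the step I would be most careful about — is the claim that an optimal edge-deletion set may be assumed to contain all monochromatic edges, and more precisely that after removing them the remaining components are forced to be single edges or single vertices; one must check that no monochromatic edge is "needed" to keep a component connected (it cannot be, since the two endpoints share a colour and hence cannot coexist in a colourful component). Everything else is routine: the equivalences above are straightforward once $G'$ is bipartite, and bipartite maximum matching is classical. I would therefore spend most of the write-up on cleanly stating the reduction $G \mapsto G'$ and verifying the two "if and only if" statements, and only a sentence on invoking a polynomial-time matching algorithm.
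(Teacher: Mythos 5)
Your proposal is correct and follows exactly the approach the paper indicates for this cited result: delete the monochromatic edges, observe that with two colours every colourful connected part has at most two vertices, and reduce both problems to maximum matching in the resulting bipartite graph. Your fleshed-out verification (including that all monochromatic edges must be deleted and the surviving edges form a matching) is sound and is precisely the observation the paper attributes to Bruckner et al.
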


\begin{theorem}[\cite{BHKNTU12}]\label{t-3colours-degree6}
{\sc Colourful Components} is \NP-complete for $3$-coloured graphs of maximum degree~$6$.
\end{theorem}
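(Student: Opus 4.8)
This is a restatement of a result of Bruckner et al.~\cite{BHKNTU12}; the natural way to prove it --- and the approach I would take --- is a polynomial reduction from a bounded-occurrence version of $3$-\textsc{Satisfiability}, in which every variable lies in at most a fixed number of clauses, which is a classical \NP-complete problem. The guiding observation is that with only three colours every colourful component has at most three vertices, so minimising the number of deleted edges is the same as maximising the number of edges that are \emph{retained} inside vertex-disjoint rainbow pieces, each of which is a single vertex, a single edge, a path on three vertices, or a triangle. The construction is then organised so that on every gadget this local packing problem has exactly two minimum-cost solutions, which we identify with the two truth values of a variable (for a variable gadget) or with the clause being satisfied or not (for a clause gadget).

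Concretely, for each variable $x$ I would build a \emph{truth-setting gadget} out of diamonds (each obtained from $K_4$ by deleting an edge), colouring the two degree-two vertices of each diamond with colour~$1$ and the two degree-three vertices with colours~$2$ and~$3$. A single such diamond has exactly two ways to keep three of its five edges, namely its two rainbow triangles, and these leave ``free'' (in a singleton piece) either one of its degree-two vertices or the other. By assembling several diamonds through short connecting paths into one rigid structure one forces exactly two minimum-cost states overall and exposes, in the state coding $x=\mathrm{true}$, a designated free vertex for every positive occurrence of $x$, and dually in the state coding $x=\mathrm{false}$ a free vertex for every negative occurrence. For each clause $C_j=(\lambda_1\vee\lambda_2\vee\lambda_3)$ I would add a \emph{clause gadget}: a rainbow ``book'' whose spine is coloured with $\{2,3\}$ and whose three pages have as their apex vertices (colour~$1$) the designated free vertices of $\lambda_1,\lambda_2,\lambda_3$. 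If at least one literal is true the corresponding apex is free, the book can keep one page-triangle, and it meets its local budget; if no literal is true all three apexes are already saturated inside their truth-setting gadgets, the book can keep only its spine edge, and extra deletions are forced. Setting the parameter $p$ to the exact sum of the per-gadget minimum costs, any deletion set of size at most $p$ is forced to be locally optimal on every gadget, and this is read off as a satisfying assignment; conversely, a satisfying assignment yields a deletion set of size exactly $p$. Correctness of each gadget in isolation is a finite case check, since every gadget has constant size.

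The real work --- and the step I expect to be the main obstacle --- is making the three requirements compatible simultaneously. The gadgets must be \emph{rigid}: no packing other than the two intended ones may be optimal, and, crucially, no neighbouring gadget may ``lend'' a vertex so as to undercut the cost. They must be realisable with only three colours, so that the apex vertices shared between a truth-setting gadget and a clause gadget, and the connector vertices inside each gadget, are coloured in such a way that every merged piece the construction relies on is genuinely rainbow. And, since a variable occurs several times, its truth value must reach all of its clauses while every vertex retains degree at most~$6$. Pinning the constant down to~$6$ --- rather than some larger value --- is an exercise in balancing these loads: choosing which colour plays the ``connector'' role at each junction, padding the truth-setting gadgets with dummy diamonds so that two occurrences never meet at a common vertex, and verifying that the busiest vertex (a colour-$1$ apex that belongs to two diamonds and to one clause book at once) has degree exactly~$6$. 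Establishing rigidity at precisely these junction vertices, and ruling out any trade of a local saving against a violation elsewhere, is where most of the proof's effort goes.
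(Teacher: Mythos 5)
First, note that the paper itself contains no proof of this statement: Theorem~\ref{t-3colours-degree6} is imported verbatim from Bruckner et al.~\cite{BHKNTU12}, so there is no in-paper argument to compare yours against, and your attempt has to be judged on its own terms.

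On its own terms, what you have written is a plan rather than a proof, and the gap sits exactly where you yourself locate ``the real work''. The analysis of an isolated diamond is fine (it does have exactly two optimal two-deletion solutions, each keeping one rainbow triangle), but every step that would make the reduction correct is deferred rather than carried out: the assembly of diamonds ``through short connecting paths into one rigid structure'' is never specified, so nothing is actually established about the variable gadget having exactly two optimal states, about the truth value propagating consistently to all occurrences of the variable, or about rigidity at the junctions; the colours of the connector vertices are not given, so the claim that every merged piece the construction relies on is rainbow with only three colours is unverified; the target value~$p$ is never computed; and the global budget/exchange argument --- that any deletion set of size at most~$p$ must be locally optimal on every gadget, and in particular that a clause ``book'' cannot save a deletion by merging its spine into a neighbouring variable component --- is asserted, not proved. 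These are precisely the points where such reductions typically break, and the degree bound of~$6$ is likewise only claimed for a construction that has not been written down (the arithmetic $2+2+2=6$ for an apex in two diamonds and one book is plausible, but depends on the unspecified chaining). So the strategy --- reduce from a bounded-occurrence satisfiability variant via constant-size three-coloured gadgets whose optimal local deletion patterns encode truth values --- is a sensible one and broadly in the spirit of the original hardness proof of~\cite{BHKNTU12}, but as written your text identifies where a proof would live without supplying one: neither \NP-hardness nor the maximum-degree-$6$ bound is established.
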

The situation for trees is different than for general graphs (see Example~\ref{e-intro}).
A tree~$T$ has a colourful partition of size~$k$ if and only if it can be modified into a colourful graph by at most~$k-\nobreak 1$ edge deletions.
Hence, the problems {\sc Colourful Partition} and {\sc Colourful Components} are equivalent for trees.
The following hardness and \FPT\ results are due to Bruckner et al.~\cite{BHKNTU12} and Dondi and Sikora~\cite{DS18}.
Note that trees of diameter at most~$3$ are stars and double stars (the graph obtained from two stars by adding an edge between their central vertices), for which both problems are readily seen to be polynomial-time solvable.
A {\em subdivided star} is the graph obtained by subdividing the edges of a star.

\begin{theorem}[\cite{BHKNTU12}]\label{t-trees-diameter4}
{\sc Colourful Partition} and {\sc Colourful Components} are polynomial-time solvable for coloured trees of diameter at most~$3$, but \NP-complete for coloured trees of diameter~$4$.
\end{theorem}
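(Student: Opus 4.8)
The plan is in three parts. First, the tractability for diameter at most~$3$ is easy and has essentially been observed already: such trees are stars and double stars (plus trivial cases). For a star with centre~$r$, an optimal colourful partition takes a part consisting of~$r$ together with one leaf of each colour class (other than $c(r)$) occurring among the leaves, and a singleton part for every remaining leaf; for a double star one additionally guesses whether the central edge is kept, and in each case only a bounded number of ``shapes'' are possible for the parts containing a centre, so all of them can be enumerated. Moreover, as noted in the excerpt, on a tree a colourful partition of size~$k$ corresponds exactly to a colourful graph obtained by deleting $k-1$ edges, so {\sc Colourful Partition} and {\sc Colourful Components} coincide on trees; both are clearly in \NP. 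Hence it remains to prove \NP-hardness for diameter~$4$, which I would do for {\sc Colourful Partition} (and thereby for {\sc Colourful Components}).

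For the hardness I would reduce from {\sc Exact Cover by 3-Sets}: given a universe~$U$ with $|U|=3q$ and a family $\mathcal{S}=\{S_1,\dots,S_m\}$ of $3$-element subsets of~$U$, decide whether some subfamily partitions~$U$. Construct a coloured tree $(G,c)$ with a root~$r$ that receives a private colour, and, for every set~$S_j$, a neighbour~$m_j$ of~$r$ with its own private colour to which three leaves are attached, these leaves being coloured by the three elements of~$S_j$ (colours are identified with the elements of~$U$, so a colour may recur across different~$m_j$'s). Then $G$ is a tree of diameter exactly~$4$, a longest path being $\text{leaf}\,$--$\,m_j\,$--$\,r\,$--$\,m_{j'}\,$--$\,\text{leaf}$. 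I would claim that $(G,c)$ has a colourful partition of size at most $1+m-q$ if and only if the instance is a yes-instance.

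The core of the argument is a normalisation step: any colourful partition can be turned, without increasing its size, into one in which, for each~$j$, the gadget $\{m_j\}\cup L_j$ (with $L_j$ the three leaves at~$m_j$) is either entirely inside the part~$C$ containing~$r$ --- call $S_j$ \emph{selected} --- or forms a part on its own. Indeed, each leaf in~$L_j$ has $m_j$ as its only neighbour, so if $m_j\notin C$ then the whole subtree rooted at~$m_j$ lies in parts disjoint from~$C$ and may be merged into one star-shaped part, which is colourful because $S_j$ is a $3$-set and $m_j$ is privately coloured; and if $m_j\in C$ but some leaf of~$L_j$ is not, that leaf is a singleton part, and excising the whole gadget from~$C$ to form one separate part is checked not to increase the count. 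In a normalised partition the number of parts equals $1$ (for the part containing~$r$) plus $m$ minus the number of selected gadgets, while the selected gadgets are exactly those whose element-colours occur nowhere else in~$C$, i.e.\ they correspond to a subfamily of pairwise disjoint sets; since pairwise disjoint $3$-sets in a $3q$-element universe number at most~$q$, a partition of size $1+m-q$ exists iff there are~$q$ pairwise disjoint sets, which, covering all $3q$ elements, is precisely an exact cover. I expect this normalisation together with the part-counting bookkeeping to be the only delicate point; the remaining verifications (connectivity and colourfulness of the constructed parts, and membership in \NP) are routine.
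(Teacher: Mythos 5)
Your proof is correct, but note that there is nothing in this paper to compare it against: Theorem~\ref{t-trees-diameter4} is stated as a known result and attributed to~\cite{BHKNTU12}; the paper gives no proof of its own (it only records, just before the statement, the fact you also use, namely that a tree has a colourful partition of size~$k$ if and only if it can be made colourful by at most $k-1$ edge deletions, so the two problems coincide on trees and hardness transfers). Judged on its own, your reduction from \textsc{Exact Cover by 3-Sets} works: the normalisation is sound (if $m_j$ is outside the part $C$ containing~$r$, the whole gadget lies in the subtree at~$m_j$ and can be merged into one colourful star part; if $m_j\in C$ but a leaf is missing, that leaf is a singleton, and excising the gadget trades at least one singleton for one new part while $C$ stays connected, since every other vertex of~$C$ reaches~$r$ through its own~$m_{j'}$), the count of a normalised partition is exactly $1+m-(\#\text{selected})$, selected gadgets have pairwise disjoint element sets because~$C$ is colourful, and at most~$q$ pairwise disjoint $3$-sets fit in a $3q$-element universe, so size $1+m-q$ is achievable precisely when an exact cover exists; the private colours on~$r$ and the~$m_j$ are exactly what is needed, and $m\ge 2$ gives diameter~$4$. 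Two small cosmetic points: in the diameter-$\le 3$ case your phrase ``bounded number of shapes'' for double stars is not literally true (the part containing a centre can be any colour-distinct subset of leaves), but the optimisation there is a trivial greedy --- absorbing one leaf per available colour into the part(s) containing the centre(s), once one decides whether the central edge is cut --- so the conclusion stands; and you should state explicitly that the hardness argument is for the decision problem with the threshold $1+m-q$, which together with membership in \NP{} gives \NP-completeness for both problems on these diameter-$4$ trees.
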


\begin{theorem}[\cite{DS18}]\label{t-paths}
{\sc Colourful Partition} and {\sc Colourful Components} are polynomial-time solvable for coloured paths (which have path-width at most~$1$), but \NP-complete for coloured subdivided stars (which are trees of path-width at most~$2$).
\end{theorem}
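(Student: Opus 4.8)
\medskip\noindent\emph{Proof plan.}
The statement splits into a tractability half (paths) and a hardness half (subdivided stars), and for both it is convenient to use the remark recalled above that on any tree a colourful partition of size~$k$ exists if and only if the tree can be turned into a colourful graph by deleting at most~$k-1$ edges; hence on paths and on subdivided stars \textsc{Colourful Partition} and \textsc{Colourful Components} are interchangeable and it suffices to reason about \textsc{Colourful Partition}.

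\emph{Paths.} The plan is to observe that the connected induced subgraphs of a path $v_1\cdots v_n$ are exactly its subpaths, so a colourful partition is precisely a splitting of the vertex sequence into consecutive blocks each of which has pairwise distinct colours; a colourful partition of size~$k$ exists iff $v_1\cdots v_n$ can be covered by at most~$k$ such ``rainbow'' blocks. I would compute the minimum number of blocks by the left-to-right greedy that starts a block at $v_1$ and extends it until the next vertex would repeat a colour already present, then cuts and restarts. Since every contiguous sub-block of a rainbow block is again rainbow, an easy induction shows that the $j$-th greedy cut is never to the left of the $j$-th cut of any other valid splitting; hence greedy uses no more blocks than any competitor and is optimal. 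It runs in linear time, and comparing its value with~$k$ decides the problem; as paths have path-width~$1$, this also gives the path-width statement.

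\emph{Subdivided stars.} Membership in \NP\ is immediate, since a colourful partition is a polynomial-size certificate whose connectedness, colourfulness and size are checkable in polynomial time. For hardness I would give a reduction from \textsc{3-Satisfiability} (or a bounded-occurrence variant); reducing from the diameter-$4$ tree case of Theorem~\ref{t-trees-diameter4} seems awkward because subdivided stars forbid internal vertices of degree at least~$3$, so I would build the instance from scratch. The first step is to pin down the shape of a colourful partition of a subdivided star with centre~$z$ and legs $L_1,\dots,L_t$: the class~$Z$ containing~$z$ consists of~$z$ plus an initial segment $S_i$ (possibly empty) of each leg~$L_i$; the colours on $\{z\}\cup S_1\cup\cdots\cup S_t$ must be pairwise distinct; and the remaining classes are confined to the individual tails $L_i\setminus S_i$ (since $z\in Z$ separates them), on each of which they form, without loss of generality, a minimum rainbow-block splitting as in the path case. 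So the instance is equivalent to: pick one initial segment per leg making $\{z\}\cup\bigcup_i S_i$ rainbow and minimising $1+\sum_i g(L_i\setminus S_i)$, where $g$ is the greedy block count. I would then introduce one \emph{variable leg} per variable, coloured so that exactly two choices of initial segment --- ``true'' and ``false'' --- are cost-optimal while every other choice is strictly more costly, and reserve for each clause~$C$ a private colour $q_C$ planted inside the initial segments that witness ``this literal of $C$ is false'', for each of its three literals; then making~$Z$ rainbow forbids all three literals of any clause being false at once, i.e.\ encodes satisfaction. Choosing leg lengths and a filler palette so that a designated target size~$k$ is attained exactly for satisfying assignments completes the construction.

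\emph{Main obstacle.} The path case and \NP-membership are routine; the delicate part, and where I expect the real effort, is the design and analysis of the variable-leg gadget. One must interleave the clause colours $q_C$ with enough ``filler'' colours so that the greedy splitting of a leg's tail has its cuts aligned exactly at the two intended segment endpoints, and strictly more blocks for any intermediate endpoint, while keeping the total palette polynomial (each $q_C$ is shared among the $O(1)$ legs it touches and fillers can be reused), and one must verify that~$Z$ cannot ``cheat'' by combining incompatible segments from different legs. Showing that the target~$k$ is met if and only if the formula is satisfiable is then careful but elementary counting.
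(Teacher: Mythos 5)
First, note that the paper does not prove Theorem~\ref{t-paths} at all: it is quoted with a citation to Dondi and Sikora~\cite{DS18}, so there is no in-paper proof to compare against. Your tractability half is fine on its own terms: on a path the classes of a colourful partition are vertex intervals tiling the path, rainbow-ness is hereditary under taking subintervals, and the furthest-extension greedy is optimal by the standard exchange/induction argument; likewise your structural analysis of a subdivided star (the class of the centre~$z$ meets each leg in an initial segment, and the tails can be handled independently and optimally by the path routine) is correct, as is \NP-membership.

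The gap is in the clause gadget, which is both the heart of the hardness proof and, as described, logically incorrect. Planting one private colour~$q_C$ in the ``false'' initial segment of each of the three literals of a clause~$C$ does not merely forbid all three literals being false: since colourfulness of~$Z$ allows at most \emph{one} occurrence of~$q_C$ in~$Z$, it forbids any \emph{two} of those segments being chosen simultaneously, i.e.\ it enforces ``at most one false literal per clause''. That constraint is strictly stronger than clause satisfaction, so the forward direction of your equivalence fails (a satisfying assignment with exactly one true literal in some clause would not give a colourful centre class), and worse, ``at least two true literals out of three'' is a conjunction of $2$-clauses, hence a 2-SAT-type condition; a reduction whose instances only encode such constraints cannot establish \NP-hardness from {\sc $3$-Satisfiability}. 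To repair this you would need either a different source problem whose constraints genuinely are of the ``at most one of these may be selected'' form (e.g.\ positive 1-in-3-SAT, or a vertex-cover/multicut-style problem), or a more elaborate clause gadget; and even then the part you yourself flag as the main obstacle --- choosing leg colourings and lengths so that exactly the intended two segment endpoints are cost-optimal, that greedy cuts in the tails align as claimed despite the embedded clause colours, and that the budget~$k$ is met exactly for satisfying assignments --- is left entirely unconstructed. As it stands the hardness half is a plan with a flawed central mechanism rather than a proof.
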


\begin{theorem}[\cite{BHKNTU12}]\label{t-trees-fpt-colours}
{\sc Colourful Partition} and {\sc Colourful Components} are \FPT\ for coloured trees, when parameterized by the number of colours.
\end{theorem}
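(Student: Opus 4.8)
The plan is to use the equivalence between the two problems on trees noted above (a tree $T$ has a colourful partition of size $k$ if and only if $T$ can be made colourful by deleting at most $k-1$ edges), so that it suffices to give an $f(\ell)\cdot\mathrm{poly}(n)$ algorithm for {\sc Colourful Partition} on a coloured tree $(T,c)$ with $\ell$ colours. I would root $T$ at an arbitrary vertex $r$ and run bottom-up dynamic programming over the subtrees $T_v$, where the information carried up from $T_v$ is the set $S\subseteq\{1,\dots,\ell\}$ of colours occurring in the part of the colourful partition that contains $v$ (restricted to $T_v$). Since $c(v)\in S$, there are at most $2^{\ell-1}$ relevant states per vertex. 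Concretely, let $D[v][S]$ be the minimum number of parts over all colourful partitions of $T_v$ whose part containing $v$ uses \emph{exactly} the colour set $S$, with $D[v][S]=\infty$ if no such partition exists; note $D[v][S]\ge 1$ since the part of $v$ is itself one of the parts counted.

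For a leaf $v$, put $D[v][\{c(v)\}]=1$ and $D[v][S]=\infty$ otherwise. For an internal vertex $v$ with children $u_1,\dots,u_t$, I would incorporate the children one at a time, maintaining a table $D_i[v][\cdot]$ for the forest on $v$ together with $T_{u_1},\dots,T_{u_i}$: set $D_0[v][\{c(v)\}]=1$ (and $\infty$ otherwise), and pass from $D_{i-1}$ to $D_i$ by either (a) deleting the edge $vu_i$, which leaves the state of $v$ unchanged and adds $m_i:=\min_{S'}D[u_i][S']$ parts, or (b) keeping the edge $vu_i$, which merges $u_i$'s part into $v$'s part, is permitted only if the colour set $S'$ of $u_i$'s part is disjoint from the current colour set $S''$ of $v$'s part, updates the state to $S''\cup S'$, and adds $D[u_i][S']-1$ parts (the $-1$ corrects for $u_i$'s part, counted in $D[u_i][S']$, being identified with $v$'s part). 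Hence
\[
D_i[v][S]\;=\;\min\!\left(\,D_{i-1}[v][S]+m_i,\ \ \min_{\substack{S=S''\uplus S'}}\bigl(D_{i-1}[v][S'']+D[u_i][S']-1\bigr)\right),
\]
and $D[v][S]:=D_t[v][S]$. The minimum size of a colourful partition of $T$ is $\min_S D[r][S]$, which we compare with $k$; an optimal partition is recovered by the standard traceback, and the same table solves {\sc Colourful Components} directly by counting deleted edges (or simply by the equivalence).

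Correctness follows by induction on the subtree: any colourful partition of $T_v$ restricts to a colourful partition of each $T_{u_i}$ together with a decision whether $vu_i$ is cut, and the recurrence ranges over exactly these possibilities, while the condition $S''\cap S'=\emptyset$ in the disjoint union $S=S''\uplus S'$ is precisely what keeps the merged part of $v$ colourful. For the running time, the only nontrivial point is option (b): for a fixed $v$ and child $u_i$, the total number of ordered disjoint splits $S=S''\uplus S'$ over all $S\subseteq\{1,\dots,\ell\}$ is $\sum_S 2^{|S|}=3^\ell$, so updating the table for one child costs $O(3^\ell)$ arithmetic operations up to polynomial factors, and summing over all children of all vertices ($\sum_v\deg(v)=O(n)$) gives an $O(3^\ell\cdot n)$-time algorithm, which is \FPT\ in $\ell$. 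I expect the only real care needed to be in the bookkeeping of the merge step — the $-1$ correction, and the decision to track the \emph{exact} colour set rather than a superset, which is what makes the disjointness test sound; the rest is routine.
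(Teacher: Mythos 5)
Your proof is correct, and there is nothing in the paper to contradict it: this theorem is only cited (from Bruckner et al.) and not proved here, so there is no in-paper proof to compare against. Your colour-set dynamic programme is, in effect, the tree (treewidth-$1$) specialization of the paper's own dynamic programme over nice tree decompositions used for Theorem~\ref{t-treewidth-numberofcolours}, where the state $(P,\rho)$ for a one- or two-vertex bag collapses to exactly what you track: the precise colour set of the part containing the current vertex. The details all check out --- the disjointness test is sound precisely because you store the \emph{exact} colour set of a colourful part, the $-1$ correction in the merge and the $\min_{S'}D[u_i][S']$ term in the cut case are right, the restriction argument (a part not containing $v$ lies in a single child subtree, a part containing $v$ meets each $T_{u_i}$ in a connected set containing $u_i$ or not at all) justifies completeness of the recurrence, and the $O(3^\ell n)$ bound and the transfer to \textsc{Colourful Components} via the stated tree equivalence (minimum partition size equals minimum deletions plus one) are correct.
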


\begin{theorem}[\cite{DS18}]\label{t-trees-fpt-components}
{\sc Colourful Partition} and {\sc Colourful Components} are \FPT\ for coloured trees, when parameterized by the number of colourful components (or equivalently, by the size of a colourful partition).
\end{theorem}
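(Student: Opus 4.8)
Since the two problems coincide on trees---a tree~$T$ has a colourful partition of size at most~$k$ if and only if~$T$ can be turned into a colourful graph by deleting at most~$k-1$ edges, namely those running between the parts---it suffices to give an algorithm for {\sc Colourful Partition} on a coloured tree~$(T,c)$ running in time $f(k)\cdot|V(T)|^{O(1)}$.

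The plan is to reformulate the edge-deletion version as a cut problem. A set $F\subseteq E(T)$ leaves every connected component of $T-F$ colourful if and only if, for every pair of distinct vertices~$u,v$ with $c(u)=c(v)$, the set~$F$ contains at least one edge of the unique $u$--$v$ path~$P_{uv}$ in~$T$; indeed a component of $T-F$ fails to be colourful exactly when it contains both vertices of some such pair. Since~$T$ is a tree, $T-F$ has exactly $|F|+1$ components, so $(T,c)$ has a colourful partition of size at most~$k$ if and only if there is a set~$F$ with $|F|\le k-1$ meeting~$P_{uv}$ for every monochromatic pair $\{u,v\}$. This is precisely an instance of {\sc Edge Multicut in Trees}: the host tree is~$T$, the terminal pairs are all (at most~$\binom{|V(T)|}{2}$) monochromatic pairs of~$(T,c)$, and the budget on deleted edges is~$k-1$; the reduction is clearly computable in polynomial time.

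Now, {\sc Edge Multicut in Trees} is \FPT\ when parameterized by the size of the cut \emph{alone}, independently of the number of terminal pairs, by the branching-and-data-reduction algorithm of Guo and Niedermeier. Running it with parameter~$k-1$ thus decides {\sc Colourful Partition} on~$(T,c)$ within the required bound, and by the equivalence above the same applies to {\sc Colourful Components} on trees; this proves Theorem~\ref{t-trees-fpt-components}.

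I expect the main obstacle to be securing a genuine \FPT\ (rather than merely \textsf{XP}) running time. The naive branching---while some component still contains a monochromatic pair $\{u,v\}$, branch over the choice of which edge of~$P_{uv}$ to delete, decreasing the remaining budget by one---has search-tree depth at most~$k-1$ but branching factor up to~$\Theta(|V(T)|)$, giving only an $|V(T)|^{O(k)}$ algorithm. To bound the branching factor one first shrinks the instance using data-reduction rules: the edge between two adjacent equally coloured vertices is forced into~$F$; a degree-$2$ vertex whose colour occurs nowhere else in~$T$ can be suppressed (its two incident edges replaced by a single edge), which one checks does not change the answer; and analogous rules handle pendant vertices. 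After such reductions the branching step can be performed with branching factor bounded by a function of~$k$, yielding a search tree of size~$f(k)$. Since these are exactly the rules that make {\sc Multicut in Trees} fixed-parameter tractable, the cleanest route is to invoke that result; a self-contained proof would instead re-derive and verify the rules directly in terms of~$(T,c)$.
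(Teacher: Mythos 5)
Your proposal is correct, but it takes a different route from the source: the paper does not prove Theorem~\ref{t-trees-fpt-components} itself, it cites Dondi and Sikora~\cite{DS18}, whose argument works directly on the coloured tree (they obtain a cubic kernel in~$k$ and an $O^*(1.554^k)$ branching algorithm). You instead observe that, on a tree, asking for a colourful partition of size at most~$k$ is the same as asking for at most $k-1$ edge deletions whose removal separates every monochromatic pair, i.e.\ an instance of {\sc Edge Multicut in Trees} with the monochromatic pairs as terminal pairs and budget $k-1$, and you then invoke the Guo--Niedermeier result that {\sc Multicut in Trees} is \FPT\ in the cut size alone. All the individual steps check out: the partition/edge-deletion equivalence on trees is exactly the one the paper states before Theorem~\ref{t-trees-diameter4}, the ``colourful component'' condition is precisely path-hitting for monochromatic pairs, the number of terminal pairs is polynomial, and the parameter transfers correctly since deleting $p$ edges of a tree yields exactly $p+1$ components. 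It is also a pleasing observation that your reduction is the exact converse of the one this paper uses in Theorem~\ref{t-trees2}, where {\sc Multicut} on binary trees is reduced to {\sc Colourful Partition} on trees for hardness; together the two reductions show the problems are essentially interchangeable on trees. What each approach buys: your black-box route yields fixed-parameter tractability with almost no work (and even a polynomial kernel, via known kernels for {\sc Multicut in Trees}), whereas the direct approach of~\cite{DS18} gives sharper quantitative bounds (the explicit cubic kernel and the $1.554^k$ base). One presentational caveat: your sketched reduction rules (e.g.\ suppressing uniquely coloured degree-2 vertices) are not needed and not fully verified, so the proof should rest on the citation of the {\sc Multicut in Trees} result rather than on that sketch --- which is legitimate, given that the theorem itself is a cited result.
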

In fact, Dondi and Sikora~\cite{DS18} found a cubic kernel for coloured trees when parameterized by~$k$ and also gave an $O^*(1.554^k)$-time exact algorithm for coloured trees.
In addition to Theorem~\ref{t-trees-fpt-colours}, Bruckner et al.~\cite{BHKNTU12} showed that {\sc Colourful Components} is \FPT\ for general coloured graphs when parameterized by the number of colours~$\ell$ and the number of edge deletions~$p$.

Misra~\cite{Mi18} considered the size of a vertex cover as the parameter.

\begin{theorem}[\cite{Mi18}]\label{t-vertexcover_c}
{\sc Colourful Components} is \FPT\ when parameterized by vertex cover number.
\end{theorem}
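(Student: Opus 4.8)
The plan is to design a branching algorithm running in time $\tau^{O(\tau)}\cdot n^{O(1)}$, where $\tau=\vc(G)$. The first step is to reformulate the problem: deleting a minimum number of edges so that every connected component of the resulting graph is colourful is the same as partitioning $V(G)$ into \emph{colourful} classes so as to minimise the number of edges with endpoints in two distinct classes. Indeed, given such a partition, removing exactly the ``crossing'' edges leaves a graph each of whose components refines the partition, hence is colourful; conversely, the vertex sets of the components of any colourful graph $G-F$ form a partition of $V(G)$ into colourful classes whose crossing edges all lie in $F$. In particular, and in contrast with {\sc Colourful Partition}, there is here neither a connectivity requirement nor a bound on the number of classes, which is precisely what makes this variant tractable.

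Next I would compute a minimum vertex cover $X$ of $G$ (possible in \FPT\ time in $\tau$) and set $I=V(G)\setminus X$, which is an independent set. Any partition of $V(G)$ into colourful classes restricts to a partition of $X$ into at most $\tau$ nonempty colourful parts $X_1,\ldots,X_r$; moreover one may assume that every class either meets $X$ or is a singleton, since two non-adjacent $I$-vertices sharing a class can be split off into singletons at no cost. The algorithm enumerates all at most $\tau^{\tau}$ partitions of $X$, discards those having a part on which $c$ is not injective, and for each surviving guess $(X_1,\ldots,X_r)$ computes in polynomial time the cheapest way to extend it to $I$. Since $I$ is independent, for a fixed guess the number of edges kept inside classes equals $\sum_i e_G(X_i)$ (a constant, where $e_G(X_i)$ denotes the number of edges with both endpoints in $X_i$) plus $\sum_{u\in I}|N(u)\cap X_{\sigma(u)}|$, where $\sigma$ assigns each $u\in I$ either to some part $X_i$ (that is, to the class containing $X_i$) or to its own singleton class (contributing $0$); the only feasibility constraint is that, for each $i$, all vertices assigned to $X_i$ get pairwise distinct colours, none of which appears in $c(X_i)$.

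The crucial observation is that this extension problem decomposes over colours: for a colour $\gamma$, the $I$-vertices coloured $\gamma$ must be sent to distinct parts $X_i$ with $\gamma\notin c(X_i)$, so the best possible total weight they contribute is the value $M_\gamma$ of a maximum-weight matching in the bipartite graph between these vertices and the admissible parts, with the pair $(u,X_i)$ weighted $|N(u)\cap X_i|$. Because distinct colours impose no joint constraint on $\sigma$, gluing together the independently computed per-colour optimal matchings gives a feasible and optimal global assignment; thus for each guess the maximum number of edges kept is $\sum_i e_G(X_i)+\sum_\gamma M_\gamma$, and the optimum of {\sc Colourful Components} is $|E(G)|$ minus the maximum of this quantity over all guesses, so we answer \textsc{yes} iff this optimum is at most $p$. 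Each guess requires solving at most $n$ bipartite matching instances, giving the claimed running time. I expect the step needing the most care to be the justification of this last decomposition: that reassembling the per-colour optimal matchings never assigns a vertex to two classes nor creates a class with a repeated colour, and that handling the colours separately loses nothing.
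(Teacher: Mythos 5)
Your proposal is correct, but it is not the route taken in connection with this statement: the paper does not prove Theorem~\ref{t-vertexcover_c} at all (it cites Misra~\cite{Mi18}), and the closely related arguments it does give are of a different flavour. Your key steps all check out: the reformulation of {\sc Colourful Components} as minimising crossing edges over partitions of $V(G)$ into colourful (not necessarily connected) classes is exactly the observation the paper itself makes inside the proof of Theorem~\ref{t-treewidth-numberofcolours}; splitting $X$-free classes of the independent set $I$ into singletons is cost-free; and, for a fixed colourful partition of the vertex cover $X$, feasibility of an extension is precisely the per-part, per-colour constraint ``at most one vertex of each colour $\gamma\notin c(X_i)$, none if $\gamma\in c(X_i)$'', which is independent across colours, so the per-colour maximum-weight bipartite matchings glue into a global optimum. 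This gives a clean $\tau^{O(\tau)}\cdot n^{O(1)}$ algorithm. By contrast, the paper's own vertex-cover result (Theorem~\ref{t-vertexcover}, for {\sc Colourful Partition}, stated to follow the spirit of Misra's sketch) proceeds by kernel-style reduction rules: bounding the number of same-coloured vertices with identical neighbourhoods in the cover, deleting ``clone'' colours, bounding the instance size by a function of the cover, and finishing by brute force plus a matching-based reinsertion step. Your approach is more direct and self-contained for {\sc Colourful Components}, and it makes transparent exactly where it would break for {\sc Colourful Partition}: there the classes must be connected and their number is bounded, so the assignment of $I$-vertices no longer decomposes colour by colour, which is why the paper needs its heavier machinery for that variant.
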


\subsection{Our Aims and Results}\label{s-our}
Our main focus is the {\sc Colourful Partition} problem, and our research aims are:
\begin{enumerate}
\item to tighten known \NP-hardness results and generalize tractability results for {\sc Colourful Partition} (Section~\ref{s-hard});
\item to prove new results analogous to those known for {\sc Colourful Components} (Section~\ref{s-hard}) and to show that the two problems are different (Sections~\ref{s-hard} and~\ref{s-fpt}); and
\item
to use the new results to determine suitable parameters for obtaining \FPT\ results (Section~\ref{s-fpt}).
\end{enumerate}
Along the way we also prove some new results for {\sc Colourful Components}.

First, we show an analogue of Theorem~\ref{t-3colours-degree6} by proving that {\sc Colourful Partition} is \NP-complete even for $3$-coloured $2$-connected planar graphs of maximum degree~$3$.
This result is best possible, as {\sc Colourful Partition} is polynomial-time solvable for $2$-coloured graphs (Theorem~\ref{t-2colours}) and for graphs of maximum degree~$2$ (trivial).
We show that it also gives us a family of instances on which {\sc Colourful Components} and {\sc Colourful Partition} have different complexities.

Second, we focus on coloured trees.
Due to Theorem~\ref{t-trees-fpt-colours}, {\sc Colourful Partition} is polynomial-time solvable for $\ell$-coloured trees for any fixed~$\ell$.
Hence the coloured trees in the hardness proofs of Theorems~\ref{t-trees-diameter4} and~\ref{t-paths} use an arbitrarily large number of colours.
They also have arbitrarily large degree.
Moreover, they have arbitrarily large degree.
We define the {\em colour-multiplicity} of a coloured graph~$(G,c)$ as the maximum number of vertices in~$G$ with the same colour.
We prove that {\sc Colourful Partition} is \NP-complete even for coloured trees of maximum degree~$6$ and colour-multiplicity~$2$.
As both problems are equivalent on trees, we obtain the same result for {\sc Colourful Components} (note that the graphs in the proof of~Theorem~\ref{t-3colours-degree6}~are~not~trees).

Third, we fix the number of colourful components.
Note that {\sc $k$-Colourful Partition} is polynomial-time solvable for coloured trees (due to Theorem~\ref{t-trees-fpt-components}) and
for $\ell$-coloured graphs for any fixed~$\ell$ (such graphs have at most~$k\ell$ vertices).
We prove that {\sc $2$-Colourful Partition} is \NP-complete for split graphs and for coloured planar bipartite graphs of maximum degree~$3$ and path-width~$3$, but polynomial-time solvable for coloured graphs of treewidth at most~$2$.
The latter two results also complement Theorem~\ref{t-paths}, which implies \NP-completeness of {\sc Colourful Partition} for path-width~$2$.

In Section~\ref{s-fpt} we show that {\sc Colourful Partition} and {\sc Colourful Components} are \FPT\ when parameterized by the treewidth and the number of colours, which generalizes Theorem~\ref{t-trees-fpt-colours}.
Our choice for this combination of parameters was guided by our results from Section~\ref{s-hard}.
As we will argue at the start of Section~\ref{s-fpt}, for the six natural parameters: number of colourful components; maximum degree; number of colours; colour-multiplicity; path-width; treewidth; and all their possible combinations, we now obtained either para-NP-completeness or an \FPT\ algorithm for {\sc Colourful Partition}.
This motivates the search for other parameters for this problem.
As the vertex cover number of the subdivided stars in the proof of Theorem~\ref{t-paths} can be arbitrarily large, it is natural to consider this parameter.
We provide an analogue to Theorem~\ref{t-vertexcover_c} by proving that {\sc Colourful Partition} is \FPT\ when parameterized by vertex cover number.
A vertex~$u$ of a coloured graph~$(G,c)$ is {\em uniquely coloured} if its colour~$c(u)$ is not used on any other vertex of~$G$.
It is easy to show \NP-hardness for {\sc Colourful Partition} and {\sc Colourful Components} for instances with no uniquely coloured vertices (see also Theorem~\ref{t-trees2}).
We also prove that {\sc Colourful Components} is para-\NP-hard when parameterized by the number of non-uniquely coloured vertices, whereas {\sc Colourful Partition} is \FPT\ in this case.
Thus there are families of instances on which the two problems have different parameterized complexities.

\section{Preliminaries}\label{s-pre}

All graphs in this paper 
are simple, with no loops or multiple edges.
Let $G=(V,E)$ be a graph.
A subset $U\subseteq V$ is {\em connected} if it induces a connected subgraph of~$G$.
For a vertex $u\in V$, $N(u)=\{v\; |\; uv\in E\}$ is the {\em neighbourhood} of~$u$, and $\deg(u)=|N(u)|$ is the {\em degree} of~$u$.
A tree is {\em binary} if every vertex in it has degree at most~$3$.
A graph is {\em cubic} if every vertex has degree exactly~$3$.
A connected graph on at least three vertices is {\em $2$-connected} if it 
does not contain a 
vertex whose removal disconnects the graph.
We say that a
graph~$G=(V,E)$ is {\em split} if~$V$ can be partitioned into two (possibly empty) sets~$K$ and~$I$, where~$K$ is a clique and~$I$ is an independent set.
A mapping $c:E\to \{1,2,3\}$ is a {\em proper $3$-edge colouring} of~$G$ if $c(e)\neq c(f)$ for any two distinct edges~$e$ and~$f$ 
that share 
a common end-vertex.
A set $S\subseteq V$ is a {\em vertex cover}
 of a graph $G=(V,E)$
if $G-S$ is an independent set.
The {\sc Vertex Cover} problem asks if a 
given 
graph has a vertex cover of size at most~$s$ for a given integer~$s$.
The {\em vertex cover number}~$\vc(G)$ of a graph~$G$ is the minimum size of a vertex cover in~$G$.
We 
will use 
the following lemma.

\begin{lemma}[\cite{BFR13}]\label{l-vertexcover}
{\sc Vertex Cover} is \NP-complete for $2$-connected cubic planar graphs with a proper $3$-edge colouring given as input.
\end{lemma}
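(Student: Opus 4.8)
The plan is to reduce from {\sc Vertex Cover} restricted to cubic planar graphs, which is \NP-complete: Garey, Johnson and Stockmeyer proved \NP-completeness for planar graphs of maximum degree~$3$, and a standard local gadget raises every vertex to degree exactly~$3$. Given such an instance $(H,s)$, I would perform two modifications. First, transform $H$ into a $2$-connected cubic planar graph while changing its vertex cover number only by a known additive constant. Second, observe that the resulting graph admits a proper $3$-edge colouring which can be computed in polynomial time, so that it may be handed over as part of the input.

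For the first step, note that in a connected cubic graph every cut vertex produces a bridge: if $v$ is a cut vertex then $G-v$ is disconnected, and since $v$ has only three incident edges, some component of $G-v$ is joined to~$v$ by exactly one edge, which is then a bridge of~$G$. Hence it suffices to destroy all bridges. For a bridge $uv$ I would delete $uv$ together with one more edge incident to~$u$ and reconnect the two sides by a small planar, prism-like gadget that (i)~restores every affected vertex to degree~$3$, (ii)~meets each side in at least two edges, so the new cut across the gadget is not a bridge, and (iii)~has the property that inserting it raises the minimum vertex cover size by a fixed constant, with an optimal cover of the enlarged graph translating back to one of~$H$ in polynomial time. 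Processing the $O(|E(H)|)$ bridges in turn yields a $2$-connected cubic planar graph together with the adjusted target. (Alternatively one may cite the known sharpening that {\sc Vertex Cover} is already \NP-complete on $3$-connected cubic planar graphs.)

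For the second step I would use Tait's theorem together with the (algorithmic) Four Colour Theorem. Since the graph is now bridgeless, cubic and planar, a proper $4$-colouring of its faces exists, and one can be produced in polynomial time by running the Robertson--Sanders--Seymour--Thomas $O(n^2)$ four-colouring algorithm on the planar dual. Colour the faces with the elements of $\mathbb{Z}_2 \times \mathbb{Z}_2$ and give each edge the sum of the colours of its two incident faces. Adjacent faces differ, so every edge receives one of the three non-zero labels; and around any degree-$3$ vertex the three incident faces are pairwise adjacent, hence carry three distinct colours, so the three incident edges carry the three distinct non-zero labels. This yields, in polynomial time, a proper $3$-edge colouring of the graph, and the constructed instance (graph, target, edge colouring) is a yes-instance if and only if $(H,s)$ is.

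I expect the first step, the bridge elimination, to be the main obstacle: the replacement gadget must simultaneously be locally planar, cubic, increase edge connectivity, and shift the optimum vertex cover size by a controlled constant, which is the familiar delicate part of gadget reductions. By comparison, once $2$-connectivity has been secured the required edge colouring essentially comes for free from the algorithmic Four Colour Theorem. (If one wishes to avoid invoking the Four Colour Theorem, an alternative is to engineer the whole reduction so that the output graph is built from a handful of symmetric gadgets glued along single edges, for which a proper $3$-edge colouring is easily exhibited by induction on the gluing.)
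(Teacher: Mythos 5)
The paper does not actually prove this lemma: it is imported wholesale from~\cite{BFR13}, so there is no internal proof to compare against, and your proposal has to be judged on its own. Your overall route is the natural one and, in outline, the right one: start from \NP-completeness of {\sc Vertex Cover} on cubic planar graphs, enforce $2$-connectivity, and then note that the proper $3$-edge colouring costs nothing extra because it can be computed in polynomial time and handed over as part of the input. Your colouring step is correct, including the point that actually matters: in a bridgeless cubic plane graph the three faces around any vertex are pairwise distinct (this is where $2$-connectedness is used) and pairwise adjacent, so the face-sum labelling by the three non-zero elements of the Klein four-group is a proper $3$-edge colouring, and running the quadratic four-colouring algorithm on the (simplified) planar dual makes Tait's correspondence constructive. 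Since the colouring is a polynomial-time computable function of the graph, its presence in the input does not affect hardness, and membership in \NP{} is immediate.

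The genuine gap is the $2$-connectivity step, which as written is a promise rather than a construction. You never exhibit the ``prism-like'' gadget; you do not check that inserting it preserves planarity; deleting a second edge incident to~$u$ lowers the degree of a third vertex, which must also be repaired but is not discussed; and property~(iii) --- that the optimum vertex cover size shifts by a fixed, analysable constant and that optimal covers translate back --- is precisely the part that needs a case analysis and is merely asserted. You are right that in a cubic graph it suffices to destroy all bridges (for cubic graphs $2$-edge-connectivity and $2$-connectivity coincide), but you would still have to argue that gadget insertions do not create new bridges and that the bridges can be processed independently. The cleanest repair is the one you mention only in passing: invoke the known \NP-completeness of {\sc Vertex Cover} (equivalently {\sc Independent Set}) on $3$-connected cubic planar graphs (e.g.\ Uehara's construction), which removes the need for any bridge-elimination gadget; with that citation in place, the only content left in your proof is the edge-colouring step, which is sound.
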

A {\em tree decomposition} of a graph~$G$ is a pair $(T,{\cal X})$ where~$T$ is a tree and ${\cal X}=\{X_{i} \mid i\in V(T)\}$ is a collection of subsets of~$V(G)$, called {\em bags}, such that

\begin{enumerate}
\item $\bigcup_{i \in V(T)} X_{i} = V(G)$;
\item for every edge $xy \in E(G)$, there is an $i \in V(T)$ such that $x,y\in X_i$; and
\item for every $x\in V(G)$, the set $\{ i\in V(T) \mid x \in X_{i} \}$ induces a connected subtree of~$T$.
\end{enumerate}

\noindent
The {\em width} of~$(T,{\cal X})$ is $\max\{|X_{i}| - 1 \;|\; i \in V_T\}$, and the {\em treewidth} of~$G$ is the minimum width over all tree decompositions of~$G$.
If~$T$ is a path, then~$(T,{\cal X})$ is a {\em path decomposition} of~$G$.
The {\em path-width} of~$G$ is the minimum width over all path decompositions of~$G$.

A tree decomposition $(T,{\cal X})$ of a graph~$G$ is \emph{nice} if~$T$ is a rooted binary tree such that the nodes of~$T$ are of four types:

\begin{enumerate}
\item a \emph{leaf node}~$i$ is a leaf of~$T$ with $X_i=\emptyset$;
\item an \emph{introduce node}~$i$ has one child~$i'$ with $X_i=X_{i'}\cup\{v\}$ for some vertex $v\in V(G)$;
\item a \emph{forget node}~$i$ has one child~$i'$ with $X_i=X_{i'}\setminus\{v\}$ for some vertex $v\in V(G)$; and
\item a \emph{join node}~$i$ has two children~$i'$ and~$i''$ with $X_i=X_{i'}=X_{i''}$,
\end{enumerate}

\noindent
and, moreover, the root~$r$ is a forget node with $X_r=\emptyset$.
Kloks~\cite{Kl94} proved that every tree decomposition of a graph can be converted in linear time to a nice tree decomposition of the same width such that the size of the obtained tree is linear in the size of the original tree.

A \emph{literal} is a (propositional) variable~$x$ ({\em positive} literal) or a negated variable~$\ol{x}$ ({\em negative} literal).
A set~$S$ of literals is \emph{tautological} if $S\cap \ol{S}\neq \emptyset$, where $\ol{S}=\{\ol{x} \;|\; x\in S\}$.
A \emph{clause} is a finite non-tautological set of literals.
A \emph{(CNF) formula} is a finite set of clauses.
For $k\geq 1$, a \emph{$k$-formula} is a formula in which every clause contains exactly~$k$ pairwise distinct variables.
A $k$-formula is {\em positive} if every clause in it contains only positive literals.
A \emph{truth assignment} for a set of variables~$X$ is a mapping $\tau:X\rightarrow \{\True,\False\}$.
The {\sc $k$-Satisfiability} problem asks whether there exists a truth assignment for a given $k$-formula~$F$ such that every clause of~$F$ contains at least one true literal.
The {\sc Not-All-Equal Positive $3$-Satisfiability} problem asks whether there exists a truth assignment for a given positive $3$-formula~$F$ such that every clause of~$F$ contains at least one true literal and at least one false literal.
For both problems we say that such a desired truth assignment {\em satisfies}~$F$.
Both {\sc $3$-Satisfiability}~\cite{Ka72} and {\sc Not-All-Equal Positive $3$-Satisfiability}~\cite{Sc78} are \NP-complete.
In contrast, it is well known and readily seen that {\sc $2$-Satisfiability} can be solved in polynomial time~\cite{Kr67}.

Let~$X$ be a set.
Then ${\cal P}(X)=\{Y \;|\; Y \subseteq X\}$ is the {\em power set} of~$X$, that is, the set of subsets of~$X$.
A {\em partition}~$P$ of~$X$ is a subset of~${\cal P}(X)$ such that $\emptyset \notin P$ and every element of~$X$ is in exactly one set in~$P$.
A {\em block} is an element of a partition~$P$ and the {\em size} of~$P$ is~$|P|$.
Given two partitions~$P$ and~$Q$, we say that~$P$ is {\em coarsening} of~$Q$ if for every $Y \in Q$ there is a $Y' \in P$ such that $Y \subseteq Y'$.
Let $\varnothing:\emptyset \to \emptyset$ denote the trivial function.

\section{Classical Complexity}\label{s-hard}
We will prove four hardness results and one polynomial-time result on {\sc Colourful Partition}.
Note that {\sc Colourful Partition} belongs to \NP.
We start by proving the following result.

\begin{theorem}\label{t-33}
{\sc Colourful Partition} is \NP-complete for $3$-coloured $2$-connected planar graphs of maximum degree~$3$.
\end{theorem}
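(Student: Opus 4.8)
The plan is to reduce from \textsc{Vertex Cover} on $2$-connected cubic planar graphs equipped with a proper $3$-edge colouring, using Lemma~\ref{l-vertexcover}. Given such a graph~$H=(V,E)$ with proper $3$-edge colouring $c_E:E\to\{1,2,3\}$ and integer~$s$, I would build a coloured graph $(G,c)$ as follows. First, I would replace each edge $e=uv\in E$ by a short path (a \emph{subdivided edge}) through one or two new internal vertices; the colour of an internal vertex will encode the colour class $c_E(e)$, so that only three colours are ever used. The key design goal is that, in any colourful partition of~$G$, the internal vertices of a subdivided edge are forced to attach to exactly one of its two endpoints $u,v$ (never both, since that would create a monochromatic block), and choosing ``attach to~$u$'' should correspond to putting~$u$ into the vertex cover. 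The original vertices of~$H$ themselves must be coloured carefully so that a vertex $u\in V$ can absorb all three of its incident subdivided edges simultaneously (when $u$ is in the cover) without colour conflict — this is exactly why a proper $3$-edge colouring is handed to us: the three edges at~$u$ carry the three distinct colours $1,2,3$, so their internal vertices are pairwise differently coloured and can coexist in one block with~$u$.

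Next I would fix the exact gadget sizes and count blocks. If $H$ has $n=|V|$ vertices and $m=|E|$ edges (so $m=3n/2$), each subdivided edge contributes a bounded number of vertices, and I would arrange the construction so that the minimum colourful partition size is a function $k=k(n,m,s)$ that is small precisely when a small vertex cover exists: roughly, each vertex \emph{not} in the cover together with the leftover internal vertices of its incident edges forms extra blocks, so covering with~$s$ vertices yields a partition of size about $n - s$ plus a fixed overhead, while a cover of size~$>s$ forces strictly more blocks. I would prove the two directions: (i) from a vertex cover $S$ of size $\le s$, orient each subdivided edge towards an endpoint in~$S$ (possible since $S$ is a cover) and read off a colourful partition of size~$\le k$; (ii) from a colourful partition of size~$\le k$, argue that the forced local structure of each subdivided edge lets us extract a set $S\subseteq V$ of size~$\le s$ that hits every edge. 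Maintaining planarity is automatic since subdividing edges and hanging small gadgets on them preserves planarity; $2$-connectedness of~$G$ should follow from $2$-connectedness of~$H$ together with choosing the gadgets so they do not introduce cut vertices (e.g.\ by making each subdivided edge an internally disjoint path between two vertices that already lie on a cycle); maximum degree~$3$ is ensured because each original vertex keeps exactly its three incident subdivided-edge paths and each gadget vertex has degree at most~$3$ by construction. Finally, $3$ colours suffice because the only colours used are the three edge-colour classes, reused on the gadget vertices, plus I must check that the original vertices can also be $3$-coloured consistently — this is where I would need to be slightly clever, possibly giving each original vertex a colour equal to ``the missing colour'' or using a small auxiliary gadget to free up a colour.

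The main obstacle I anticipate is the simultaneous satisfaction of all four restrictions — $3$ colours, maximum degree~$3$, planarity, and $2$-connectedness — while still forcing the intended ``orientation'' behaviour on subdivided edges. With only three colours and degree~$3$, there is very little room: a path of internal vertices on a subdivided edge can use at most colours $\{1,2,3\}$ and must connect endpoints whose colours are also drawn from $\{1,2,3\}$, so the ``one endpoint absorbs all three incident edges'' property is tight and must be verified by a careful case analysis of which colour patterns on a length-$2$ or length-$3$ path permit a colourful block. I expect the proof to hinge on a small structural claim of the form: \emph{in any colourful partition, for each subdivided edge exactly one of its two endpoints lies in the block containing its internal vertices}, proved by noting that if both endpoints were in that block the block would contain two vertices of the same colour, and if neither were, the internal vertices would form their own isolated block(s), which a counting argument rules out for a partition of size~$\le k$. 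Once that claim is in hand, the correspondence with vertex covers and the final block count should be a routine (if slightly tedious) verification.
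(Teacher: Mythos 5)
You start from the right place---\textsc{Vertex Cover} on $2$-connected cubic planar graphs with a given proper $3$-edge colouring (Lemma~\ref{l-vertexcover}), which is the same source problem the paper uses---but the central mechanism of your construction cannot work as described. With only $3$ colours, every block of a colourful partition has at most $3$ vertices, and your cover vertex $u$ must itself carry one of the colours $1,2,3$; since its three incident subdivided edges have internal vertices of all three colours, a block consisting of $u$ together with its three internal neighbours has four vertices and necessarily repeats a colour. So $u$ can never ``absorb all three of its incident subdivided edges simultaneously'', which is the property your whole correspondence with vertex covers rests on. You notice the tension and suggest giving $u$ ``the missing colour'', but in a cubic graph with a proper $3$-edge colouring every vertex is incident to edges of all three colours, so no missing colour exists; your fallback, ``a small auxiliary gadget to free up a colour'', is exactly the unsolved heart of the reduction, not a detail to be added later. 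The counting is also left at the level of ``roughly $n-s$ plus a fixed overhead'', and it is not clear a pure edge-subdivision construction encodes vertex cover at all: for instance, with two like-coloured internal vertices per edge, every edge can simply be cut in its middle and each internal vertex attached to a colour-compatible endpoint, producing partitions whose size is insensitive to any cover.

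For comparison, the paper resolves precisely this difficulty by keeping one red vertex per edge (coloured by the edge colour) but replacing each original vertex by a $9$-vertex, three-armed gadget whose colours are dictated by the three distinct edge colours at that vertex. The construction guarantees that the two neighbours of each red vertex (one in each endpoint's gadget) receive the same colour, so no block can span two vertex gadgets; since blocks have at most $3$ vertices, each $9$-vertex gadget needs at least $3$ blocks, needs $4$ blocks if it also absorbs adjacent red vertices, and two adjacent gadgets cannot both get away with only $3$. This gives the exact equivalence ``vertex cover of size at most $s$ iff colourful partition of size at most $3n+s$''. Your sketch would need an analogous per-vertex gadget and an exact count of this kind; as written it contains a genuine gap.
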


\begin{proof}
We use a reduction from {\sc Vertex Cover}.
By Lemma~\ref{l-vertexcover} we may assume that we are given a $2$-connected cubic planar graph~$G$ with a proper $3$-edge colouring~$c$.
From~$G$ and~$c$ we construct a coloured graph~$(G',c')$ as follows.

\vspace{-10pt}

\begin{itemize}
\item For each $e\in E(G)$ with $c(e)=i$, create a vertex~$v_e$ with colour $c'(v_e)=i$ in~$G'$ (a {\em red} vertex).
\item For every vertex $v\in V(G)$ with incident edges $e_1$, $e_2$, $e_3$ (with colours $1$, $2$, $3$, respectively), create a copy of the $3$-coloured $9$-vertex gadget shown in \figurename~\ref{fig:VC-reduction} (a {\em blue} set), and connect it to the red vertices $v_{e_1}$, $v_{e_2}$ and~$v_{e_3}$, as shown in the same figure.
\end{itemize}

\begin{figure}
\centering
\begin{tikzpicture}
\vertexGadgetThreeComponents
\end{tikzpicture}

\caption{\label{fig:VC-reduction}
The blue $9$-vertex gadget for a vertex~$v$ incident to edges $e_1,e_2,e_3$ in~$G$, connected to the three red vertices $v_{e_1},v_{e_2},v_{e_3}$ in~$G'$ in the proof of Theorem~\ref{t-33}.
A crucial property is that the three red vertices have distinct colours, as~$c$ is a $3$-edge colouring of~$G$, and these three colours fix the colours of the blue vertex-gadget.}
\end{figure}
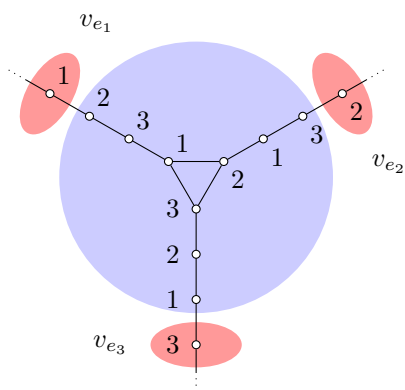

Note that~$G'$ is $3$-coloured, $2$-connected, planar and has maximum degree~$3$.
We claim that~$G$ has a vertex cover of size at most~$s$ if and only if~$G'$ has a colourful partition of size at most~$3n+\nobreak s$.

First suppose that~$G$ has a vertex cover~$S$ of size at most~$s$.
We create the colourful partition of~$G'$ as described in \figurename~\ref{fig:VC-reduction-selection}.
More precisely, for every vertex $v\in S$, we pick four colourful components covering the whole vertex gadget plus all neighbouring red vertices that have not yet been picked.
Since~$S$ is a vertex cover of~$G$, all red vertices of~$G'$ are now part of some colourful component.
For every vertex $v\in V(G)\setminus S$, we pick three colourful components covering the whole vertex gadget.
This yields a colourful partition of~$G'$ of size at most $4|S|+3(n-|S|)=3n+|S| \leq 3n+s$.

Now suppose that~$G$ has a colourful partition $(V_1,\ldots,V_k)$ for some $k\leq 3n+s$.
Observe that no~$V_i$ contains vertices from two distinct vertex gadgets; otherwise~$V_i$ would contain a red vertex and its two neighbours, which have the same colour by construction.
Hence, every~$V_i$ can be mapped to some vertex~$v_j$ of~$G$ (where we map every colourful component that consists of only a single red vertex to an arbitrary adjacent vertex gadget).
Observe that $|V_i|\leq 3$ for all $i \in \{1,\ldots,k\}$.
As every vertex gadget has size~$9$, at least three colourful components are mapped to it.
Moreover, there cannot be two neighbouring gadgets with only three components, as the two of them plus the red vertex connecting them have 19 vertices in total.
So the vertices of~$G$ corresponding to vertex gadgets with at least four components form a vertex cover~$S$ in~$G$.
Since there are at most $k-3n\leq s$ such vertex gadgets, $S$ has size at most~$s$.
\end{proof}

The coloured graph~$(G',c')$ constructed in the proof of Theorem~\ref{t-33} can be modified into a colourful graph by omitting exactly one edge adjacent to each red vertex and exactly three edges inside each blue component.
It can be readily checked that this is the minimum number of edges required.
Hence {\sc Colourful Components} is polynomial-time solvable on these coloured graphs~$(G',c')$, whereas {\sc Colourful Partition} is \NP-complete.
Thus we have:

\begin{corollary}\label{c-diff}
There exists a family of instances on which {\sc Colourful Components} and {\sc Colourful Partition} have different complexities.
\end{corollary}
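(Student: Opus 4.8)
The plan is to let $\mathcal{F}$ be the class of coloured graphs $(G',c')$ that the reduction in the proof of Theorem~\ref{t-33} constructs, that is, all coloured graphs obtained from a $2$-connected cubic planar graph $G$ with a proper $3$-edge colouring via the red vertices and the blue $9$-vertex gadgets. That proof is a polynomial-time reduction from the \NP-complete problem of Lemma~\ref{l-vertexcover} whose output instances are exactly the members of $\mathcal{F}$; combined with the fact that {\sc Colourful Partition} lies in \NP\ (noted just before Theorem~\ref{t-33}), this shows that {\sc Colourful Partition} is \NP-complete when its instances are restricted to $\mathcal{F}$. It therefore remains to prove that {\sc Colourful Components} is polynomial-time solvable on $\mathcal{F}$. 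I would first note that $\mathcal{F}$ is polynomial-time recognisable: from $(G',c')$ one reconstructs the candidate graph $G$ (each gadget becomes a vertex, each red vertex becomes an edge, and the red-vertex colours give the edge colouring) and then checks in polynomial time that $G$ is cubic, planar, $2$-connected and that its colouring is proper. The core of the argument is then the following claim: for every $(G',c')\in\mathcal{F}$ arising from $G$, the minimum number of edges whose deletion turns $(G',c')$ into a colourful graph equals $m^\star=3\,|V(G)|+|E(G)|$. Given this, the algorithm that reconstructs $G$, computes $m^\star$, and answers affirmatively exactly when the input budget $p$ satisfies $p\ge m^\star$ solves {\sc Colourful Components} on $\mathcal{F}$ in polynomial time, which together with the hardness above proves the corollary (modulo the usual assumption $\mathsf{P}\neq\NP$).

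For the upper bound in the claim I would exhibit the deletion set already announced after Theorem~\ref{t-33}. Name the internal vertices of a blue gadget so that $a_1,b_1,c_1$ form the central triangle and the three arms are the paths $a_1a_2a_3$, $b_1b_2b_3$, $c_1c_2c_3$, with $a_3,b_3,c_3$ joined to red vertices. Delete, from each gadget, the three edges $a_1a_2$, $b_1b_2$, $c_1c_2$, and delete, from each red vertex, one of its two incident edges; this removes $3\,|V(G)|+|E(G)|=m^\star$ edges in total. A routine inspection of the colours in \figurename~\ref{fig:VC-reduction} then shows that every surviving component — each central triangle $a_1b_1c_1$, each two-vertex stub $a_2a_3$ (respectively $b_2b_3$, $c_2c_3$) left on the cut side of a red vertex, and each three-vertex path $a_2a_3v_e$ (respectively $b_2b_3v_e$, $c_2c_3v_e$) left on the kept side — contains each colour at most once, so the resulting graph is colourful.

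For the matching lower bound I would pack pairwise edge-disjoint \emph{obstructions}, where by an obstruction I mean a path whose two endpoints have the same colour: any deletion set $D$ that makes $(G',c')$ colourful must contain an edge of every obstruction, because the endpoints must end up in different components of $G'-D$. A short check of the colours in the gadget shows that the three triangle edges together with the three edges $a_1a_2$, $b_1b_2$, $c_1c_2$ are exactly the edge sets of three length-two obstructions $a_2a_1b_1$, $b_2b_1c_1$, $c_2c_1a_1$, which are therefore pairwise edge-disjoint and, lying entirely inside one gadget, also edge-disjoint from the corresponding three paths of every other gadget. Moreover, for every red vertex $v_e$ with $e=uv$, the path through $v_e$ that joins its neighbour in $u$'s gadget to its neighbour in $v$'s gadget is an obstruction (a check shows both of these neighbours receive the same colour), and it consists of the two edges incident to $v_e$, which appear in no gadget obstruction and in no other red vertex's obstruction. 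This yields $3\,|V(G)|$ obstructions of the first type and $|E(G)|$ of the second, all pairwise edge-disjoint, whence $|D|\ge 3\,|V(G)|+|E(G)|=m^\star$. Together with the upper bound this establishes the claim.

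The step I expect to be the main obstacle is the lower bound, specifically checking that the three gadget obstructions are genuinely edge-disjoint and cover all six of the relevant edges exactly once: this works only because the arm colours form a permutation of $\{1,2,3\}$ — forced by the proper $3$-edge colouring of $G$ — so that the central triangle is itself colourful and each of its three edges can be charged to a different arm's obstruction. Once this bookkeeping on the $9$-vertex gadget is done, the remaining ingredients (the colour checks for the upper bound, the observation that the edges at red vertices never occur in a gadget obstruction, and the polynomial-time reconstruction of $G$) are routine.
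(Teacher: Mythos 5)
Your proposal is correct and takes essentially the same route as the paper: the paper likewise observes that on the instances $(G',c')$ produced by the Theorem~\ref{t-33} reduction, the optimum for {\sc Colourful Components} is exactly one edge per red vertex plus three edges per blue gadget, so that problem is polynomial-time solvable on this family while {\sc Colourful Partition} remains \NP-complete on it. The only difference is that the paper dismisses the minimality of this deletion set as ``readily checked'', whereas you verify it explicitly (and correctly) via the packing of $3|V(G)|+|E(G)|$ pairwise edge-disjoint same-colour-endpoint paths.
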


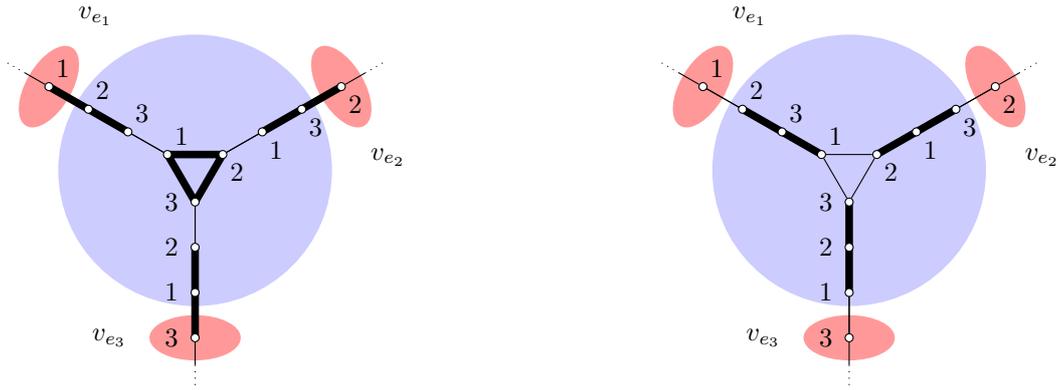
\begin{figure}
\begin{tikzpicture}
\begin{scope}[scale=.6, label distance=1mm]
\foreach \d/\l/\a/\ca/\cb/\cc in {150/60/Za/1/3/2, 270/180/Zb/3/2/1,  30/300/Zc/2/1/3 } {
 \path [] (\d:.7) node[minimum size=0pt] (\a1) {}
        -- ++ (\d:1)node[minimum size=0pt] (\a2) {}
        -- ++ (\d:1)node[minimum size=0pt] (\a3) {}
        -- ++ (\d:1)node[minimum size=0pt] (\a4) {};}

\path[line width=1mm] (Za2)--(Za3)--(Za4);
\path[line width=1mm] (Zb2)--(Zb3)--(Zb4);
\path[line width=1mm] (Zc2)--(Zc3)--(Zc4);
\path[line width=1mm] (Za1)--(Zb1)--(Zc1)--(Za1);
\end{scope}
\vertexGadgetThreeComponents

\end{tikzpicture}\hfill\begin{tikzpicture}
\begin{scope}[scale=.6, label distance=1mm]
\foreach \d/\l/\a/\ca/\cb/\cc in {150/60/Za/1/3/2, 270/180/Zb/3/2/1,  30/300/Zc/2/1/3 } {
 \path [] (\d:.7) node[minimum size=0pt] (\a1) {}
        -- ++ (\d:1)node[minimum size=0pt] (\a2) {}
        -- ++ (\d:1)node[minimum size=0pt] (\a3) {}
        -- ++ (\d:1)node[minimum size=0pt] (\a4) {};}

\path[line width=1mm] (Za1)--(Za2)--(Za3);
\path[line width=1mm] (Zb1)--(Zb2)--(Zb3);
\path[line width=1mm] (Zc1)--(Zc2)--(Zc3);
\end{scope}
\vertexGadgetThreeComponents

\end{tikzpicture}
\caption{\label{fig:VC-reduction-selection}The colourful components (in bold) corresponding to a vertex in the vertex cover~$S$ of~$G$ (left figure) and to a vertex not in~$S$ (right figure) in the proof of Theorem~\ref{t-33}.}
\end{figure}

For our next result we need to introduce the following problem.

\problemdef{Multicut}{A graph~$G$, a positive integer~$r$ and a set~$S$ of pairs of distinct vertices of~$G$.}{Is there a set of at most~$r$ edges whose removal from~$G$ results in a graph in which~$s$ and~$t$ belong to different connected components for every pair $(s,t) \in S$?}

\noindent If $(s,t)$ is a pair in~$S$, we say that~$s$ and~$t$ are \emph{mates}. Moreover, every vertex that belongs to a pair in~$S$ is a \emph{terminal}. The {\sc Multicut} problem is known to be \NP-complete even for binary trees~\cite{CFR03}.

We are now ready to prove the following theorem. In particular, the last condition in this theorem shows that the number of uniquely coloured vertices is not a useful parameter.

\begin{theorem} \label{t-trees2}
{\sc Colourful Partition} and {\sc Colourful Components} are \NP-complete for coloured trees with maximum degree at most~$6$, colour-multiplicity~$2$, and no uniquely coloured vertices.
\end{theorem}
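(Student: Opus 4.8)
The plan is to reduce from \textsc{Multicut} on binary trees, which is \NP-complete. Since on trees \textsc{Colourful Partition} and \textsc{Colourful Components} coincide (a tree has a colourful partition of size~$k$ exactly when it can be made colourful by deleting $k-1$ edges), it is enough to prove the hardness for \textsc{Colourful Components}, the statement for \textsc{Colourful Partition} then following with the partition-size bound increased by one; membership in \NP\ is immediate for both. The key elementary observation is that for a coloured tree $(T,c)$ of colour-multiplicity at most~$2$, deleting a set $F$ of edges makes $T$ colourful if and only if $F$ separates every pair of equally coloured vertices; that is, minimising the number of deleted edges is exactly \textsc{Multicut} on $T$ whose terminal pairs are the monochromatic pairs. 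So I want to build from a \textsc{Multicut} instance $(G,\mathcal S,r)$ on a binary tree a coloured tree $(T,c)$ of colour-multiplicity~$2$, maximum degree at most~$6$ and with no uniquely coloured vertex, whose monochromatic pairs have the same \textsc{Multicut} optimum as $(G,\mathcal S)$ up to a fixed additive constant.

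First I would subdivide every edge $e=uv$ of $G$ into a path $u=x^e_0-x^e_1-\cdots-x^e_L-x^e_{L+1}=v$, where $L$ is a fixed polynomial bound (e.g.\ $L=2\max_{w}|\{p\in\mathcal S: w\in p\}|+2$); subdividing changes neither degrees nor the \textsc{Multicut} optimum. Next I would realise each pair $p=(s,t)\in\mathcal S$ as a monochromatic pair: if $P_G(s,t)$ leaves $s$ along $e$ and leaves $t$ along $f$, I colour one still-unused subdivision vertex of the subdivided copy of $e$ close to $s$ and one such vertex of the subdivided copy of $f$ close to $t$ with a brand-new colour $c_p$ (so $c_p$ is used on exactly these two ``slot'' vertices). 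There is room for this because at most $(L-2)/2$ pairs leave any fixed vertex along any fixed edge. Finally, to destroy uniquely coloured vertices, I attach to every remaining vertex $x$ a fresh pendant leaf $x'$ and give $\{x,x'\}$ a fresh colour; let $M$ be the number of such vertices~$x$. The resulting $(T,c)$ is a tree of colour-multiplicity~$2$ with no uniquely coloured vertex, and since subdividing preserves degrees, slot vertices receive no pendant, and every other vertex gains at most one pendant, the maximum degree is at most~$4\le 6$.

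It then remains to show that the \textsc{Multicut} optimum of $T$ with the monochromatic pairs equals $M$ plus the optimum $\mu$ of $(G,\mathcal S,r)$. Each of the $M$ pendant pairs is joined by a single edge, so all $M$ pendant edges are forced into every solution, and they neither help nor are helped by cuts inside the subdivided copy of $G$ (a pendant leaf lies on no path between two other vertices, and conversely a cut elsewhere does not separate a pendant pair). Hence the optimum is $M$ plus the \textsc{Multicut} optimum of the subdivided copy of~$G$ with the slot pairs, and this last optimum equals $\mu$: given a multicut $\widehat F$ of $(G,\mathcal S)$, deleting for each $e\in\widehat F$ the single ``middle'' edge $x^e_{\lceil L/2\rceil}x^e_{\lceil L/2\rceil+1}$ yields a multicut of the same size, because the choice of $L$ forces this middle edge onto the tree-path between the two slots of every pair whose $G$-path uses~$e$ (including length-one $G$-paths); conversely, mapping each edge of any multicut $F$ of the subdivided copy to the edge of $G$ it subdivides gives a multicut of $(G,\mathcal S)$ of no larger size, since the tree-path between the two slots of a pair $p=(s,t)$ only meets subdivided copies of edges of $P_G(s,t)$. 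Therefore $(G,\mathcal S,r)$ is a yes-instance if and only if $(T,c)$ can be made colourful by deleting at most $M+r$ edges, equivalently has a colourful partition of size at most $M+r+1$.

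The part I expect to be delicate is making both directions of the slot analysis go through simultaneously: the subdivision length~$L$ and the positions of the slots on each subdivided edge must be chosen so that (i) a single deletion in the middle of a subdivided edge separates \emph{every} slot pair whose $G$-path runs through that edge --- so that a $G$-multicut translates with no loss --- while at the same time (ii) \emph{no} deletion anywhere in the enlarged tree can separate a slot pair in a way that a $G$-multicut of equal size cannot match --- so that the projection does not increase the cut size. Pinning down a uniform~$L$ and a consistent ordering of the slots for which both hold, paying particular attention to pairs whose $G$-path is a single edge and to vertices contained in many pairs, is the real content of the proof; the remaining verifications (degree, colour-multiplicity, absence of uniquely coloured vertices, the forced pendant cuts, and the equivalence of the two problems on trees) are routine.
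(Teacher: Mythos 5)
Your proposal is correct, and it follows the same overall strategy as the paper: both reduce from \textsc{Multicut} on binary trees (using the observation that, on a tree of colour-multiplicity~$2$, making the tree colourful by edge deletions is exactly separating all monochromatic pairs), and both encode each terminal pair $(s,t)$ by a like-coloured pair of auxiliary vertices positioned so that separating them forces a cut on the $s$--$t$ path in~$G$. The gadget mechanics differ, though. The paper attaches to each vertex~$v$, for each incident edge~$e_v^i$, a path of new vertices~$v_u$ (one per mate~$u$ reachable through~$e_v^i$) with $c(v_u)=c(u_v)$, and then needs an exchange argument replacing deletions inside these attached paths by the edge~$e_v^i$; you instead subdivide each edge of~$G$ into a long path and place the two like-coloured ``slot'' vertices on the first and last subdivided edges of the $s$--$t$ path, so that a single middle-edge deletion per cut edge suffices in one direction and a projection of deleted edges back to~$G$ suffices in the other. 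The more substantial divergence is in eliminating uniquely coloured vertices: the paper duplicates the whole instance and joins the copies by a like-coloured connector pair (doubling the budget to $2(r+1)$), whereas you attach a pendant like-coloured partner to every remaining vertex, which forces exactly~$M$ extra deletions and shifts the budget additively to $M+r$; this is a clean alternative and, as a bonus, yields maximum degree~$4$ rather than~$6$. Finally, the ``delicate part'' you flag at the end is in fact already resolved by your own choices: with $L=2\max_w|\{p\in\mathcal S: w\in p\}|+2$, at most $(L-2)/2\le\lceil L/2\rceil$ slots are needed on each side of the middle edge, so every slot pair's path contains the middle edge of each subdivided edge of its $G$-path (including the single-edge case), and since slot--slot paths only use subdivided copies of edges of the corresponding $G$-path (and never pendant edges), the projection never loses separations --- so what remains is indeed routine verification.
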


\begin{proof}
As the two problems are equivalent on trees, it suffices to prove the result for {\sc Colourful Partition}.
We first show the result for the case where we allow uniquely coloured vertices.
Afterwards we show how to modify our construction to get rid of such vertices.

As mentioned, we reduce from {\sc Multicut} on binary trees. 
Let $(G,S,r)$ be an instance of {\sc Multicut}, where~$G$ is a binary tree, $S$ is a set of terminal pairs and~$r$ is positive integer.
We construct a graph~$G'$ that has~$G$ as a subgraph and some additional vertices and edges that we now describe.
For a vertex $v\in V(G)$, let $e_v^1,\ldots,e_v^{\deg(v)}$ be the edges incident with~$v$ and let $A_v^1,\ldots,A_v^{\deg(v)}$ be a collection of sets whose contents we now define.
For each mate~$u$ of~$v$ in the pairs of~$S$, create a vertex~$v_u$ is and place it in the set~$A_v^i$ such that the unique path in~$G$ from~$v$ to~$u$ includes the edge~$e_v^i$.
For each~$A_v^i$, add edges so that the vertices in~$A_v^i$ induce a path, and add an edge from one end-vertex of this path to~$v$.
Let~$P_v^i$ be the path that includes~$v$ and the vertices of~$A_v^i$.
This completes the construction of~$G'$.
As~$G$ is binary, $G'$ is a tree of maximum degree at most~$6$.
We now construct a colouring~$c$ on the vertices of~$G'$.
For each pair~$(u,v)$ in~$S$, we let $c(v_u)=c(u_v)$ be a colour that is not used on any other vertex of~$G'$.
All other vertices of~$G'$ are assigned a unique colour.
Hence~$(G',c)$ has colour-multiplicity~$2$.
See \figurename~\ref{fig:multicut} for an illustration.

\begin{figure}
\begin{tikzpicture}
 \path [thick]  (0,0) node (g)[label=left:$g$]{}
 	   --  ++(0.75,1.5) node (d)  [label=left:$d$] {}
 	   -- ++(0.75,1.5) node (b) [label=left:$b$] {}
 	   -- ++(0.75,1.5) node (a) [label=above:$a$] {}
 	   -- ++(0.75,-1.5) node (c) [label=right:$c$] {}
 	   -- ++(0.75,-1.5) node (f) [label=right:$f$] {}
 	   -- ++(0.75,-1.5) node (j) [label=right:$j$] {};

\path [thick] (b)
 	   -- ++(0.75,-1.5) node (e) [label=right:$e$] {}
 	   -- ++(-0.75,-1.5) node (h) [label=left:$h$] {};

\path [thick] (e)
 	   -- ++(0.75,-1.5) node (i) [label=right:$i$] {};

\node[draw=none,rectangle] at (0.25,2.5) {$G$};

\node[draw=none,rectangle,align=right] at (2.2,-1) {$S=[(a,c), (a,h), (a,j), (b,c), (b,d),$ \\ $(b,h), (b,i), (b,j), (c,d), (c,j)]$};

\begin{scope}[xshift=7cm]
 \path [thick]  (0,0) node (g)[label=left:$g$]{}
 	   --  ++(0.75,1.5) node (d)  [label=right:$d$] {}
 	   -- ++(0.75,1.5) node (b) [label=right:$b$] {}
 	   -- ++(0.75,1.5) node (a) [label=above:$a$] {}
 	   -- ++(0.75,-1.5) node (c) [label=left:$c$] {}
 	   -- ++(0.75,-1.5) node (f) [label=right:$f$] {}
 	   -- ++(0.75,-1.5) node (j) [label=left:$j$] {};

\path [thick] (b)
 	   -- ++(0.75,-1.5) node (e) [label=right:$e$] {}
 	   -- ++(-0.75,-1.5) node (h) [label=right:$h$] {};

\path [thick] (e)
 	   -- ++(0.75,-1.5) node (i) [label=left:$i$] {};

\path [thick] (a)
 	   -- ++(0.5,0.25) node [label=above:$a_c$] {}
 	   -- ++(0.5,0) node [label=above:$a_j$] {};

\path [thick] (a)
 	   -- ++(-0.5,0.25) node [label=above:$a_h$] {};

\path [thick] (b)
 	   -- ++(-0.75,0.5) node [label=above:$b_c$] {}
 	   -- ++(-0.5,0) node [label=above:$b_j$] {};

\path [thick] (b)
 	   -- ++(-0.75,0) node [label=left:$b_d$] {};

\path [thick] (b)
 	   -- ++(-0.75,-0.5) node [label=below:$b_i$] {}
 	   -- ++(-0.5,0) node [label=below:$b_h$] {};

\path [thick] (d)
 	   -- ++(-0.5,0) node [label=below:$d_b$] {}
 	   -- ++(-0.5,0) node [label=below:$d_c$] {};

\path [thick] (h)
 	   -- ++(-0.5,-0.25) node [label=below:$h_a$] {}
 	   -- ++(-0.5,0) node [label=below:$h_b$] {};

\path [thick] (i)
 	   -- ++(0.5,-0.25) node [label=below:$i_b$] {};

\path [thick] (c)
 	   -- ++(0.75,0.5) node [label=above:$c_a$] {}
 	   -- ++(0.5,0) node [label=above:$c_b$] {}
 	   -- ++(0.5,0) node [label=above:$c_d$] {};

\path [thick] (c)
 	   -- ++(0.75,-0.5) node [label=below:$c_j$] {};

\path [thick] (j)
 	   -- ++(0.5,-0.25) node [label=below:$j_a$] {}
 	   -- ++(0.5,0) node [label=below:$j_b$] {}
 	   -- ++(0.5,0) node [label=below:$j_c$] {};

\node[draw=none,rectangle] at (5.5,2.5) {$G'$};

\end{scope}

\end{tikzpicture}
\caption{\label{fig:multicut}A graph~$G$ and a set of terminals~$S$ of an instance of {\sc Multicut} and the instance~$G'$ of {\sc Colourful Partition} obtained using the reduction in the proof of Theorem~\ref{t-trees2}.
The colouring of~$G'$ is not indicated and the value of~$r$, which does not affect the construction, is not stated.}
\end{figure}
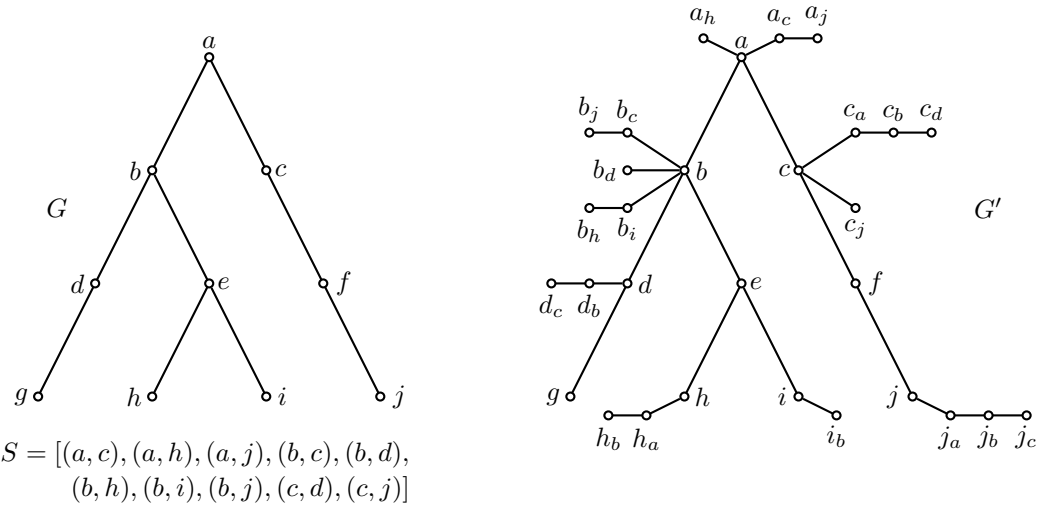

We claim that it is possible to separate all mates from each other by removing at most~$r$ edges from~$G$ if and only if~$(G',c)$ has a colourful partition of size at most~$r+1$.

First suppose that it is possible to separate all mates of~$S$ from each other by removing at most~$r$ edges from~$G$.
We remove the same~$r$ edges from~$G'$ (note that all edges in~$G$ are also in~$G'$).
The resulting graph consists of at most~$r+\nobreak 1$ components.
Let $V_1,\ldots,V_k$, $k \leq r+1$, be the vertex sets of these components.
Note that vertices~$u_v$ and~$v_u$ belong to different sets~$V_i$ and~$V_j$, as~$u$ and~$v$, being mates, must belong to different components, while~$v_u$ belongs to the same component as~$v$, and~$u_v$ belongs to the same component as~$u$.
Hence, $(V_1,\ldots,V_k)$ is a colourful partition of~$G'$ that has size at most~$r+\nobreak 1$.

Now suppose that~$G'$ has a colourful partition $(V_1,\ldots,V_k)$, $k \leq r+1$.
As~$G'$ is a tree, we can obtain this partition by removing a set~$F$ of at most~$r$ edges from~$G'$.
If~$F$ contains edges not in~$E(G)$, then we show that we can modify~$F$ to obtain a set~$F^*$ of at most~$r$ edges, all of which belong to~$E(G)$, such that removing the edges in~$F^*$ from~$G'$ leaves a graph whose connected components form a colourful partition.
To show this, suppose~$F$ contains an edge that belongs to some~$P_w^i$.
In~$F$ we replace \emph{all} the edges of~$P_w^i$ with the edge~$e_w^i$ to obtain a set~$F'$ with $|F'| \leq |F|$.
We claim that if a pair of vertices~$v_u$ and~$u_v$ are separated by~$F$, then they are separated by~$F'$ and so removing the edges in~$F'$ from~$G'$ also leaves a colourful partition.
This immediately holds if neither~$u_v$ nor~$v_u$ are in~$P_w^i$.
If one of $u_v$,~$v_u$ does belong to~$P_w^i,$ then $w \in \{u,v\}$ and the unique path from~$u$ to~$v$ includes~$e_w^i$, which therefore separates~$u_v$ and~$v_u$.
Thus deleting the vertices of~$F'$ also gives a colourful partition of size at most~$r+\nobreak 1$.
Furthermore, $F'$ has a larger intersection with~$E(G)$ than~$F$ does.
Hence, by repetition of this modification, we obtain~$F^*$.
Finally we demonstrate that~$F^*$ separates all mates of~$S$.
For each pair $(u,v) \in S$, we know that~$v$ and~$v_u$ belong to the same component of $G-F^*$, and that~$u$ and~$u_v$ belong to the same component of $G-F^*$ (as~$F^*$ contains only edges of~$G$).
Thus as~$F^*$ separates the like-coloured pair~$u_v$ and~$v_u$, it also separates the mates~$u$ and~$v$.

\medskip
\noindent
We now prove the same result with the addition that there are no uniquely coloured vertices.
Let~$(G',c)$ be an instance of {\sc Colourful Partition} constructed above and note that each colour is used at most twice in~$c$.
We take two copies~$G_1$ and~$G_2$ of the graph~$G'$ and for $v \in V(G)$ let~$v_1$ and~$v_2$ be the corresponding vertices in~$G_1$ and~$G_2$.
We define a new colouring~$c'$ on the disjoint union $G_1+\nobreak G_2$ as follows:
\begin{itemize}
\item If~$c$ assigns a colour to a unique vertex~$v$ of~$G$, then let $c'(v_1)=c'(v_2)=c(v)$.
\item If~$c$ assigns the same colour to two distinct vertices $u,v$ in~$G$, then let $c'(u_1)=c'(v_1)=c(u)$ and let~$c'$ assign a new colour to~$u_2$ and~$v_2$ that is not used anywhere else.
\end{itemize}
It is easy to verify that $(G_1+\nobreak G_2,c')$ has a colourful partition of size at most~$2(r+\nobreak 1)$ if and only if~$(G',c)$ has a colourful partition of size at most $r+\nobreak 1$.

Finally, let~$x$ be a vertex in~$G'$ that has degree~$1$ (such a vertex must exist if~$G'$ is a tree with at least one edge).
We let~$G''$ be the graph obtained from $G_1+\nobreak G_2$ by adding vertices~$y_1$ and~$y_2$ and edges $x_1y_1$, $y_1y_2$ and~$y_2x_2$.
Let~$c''$ be the colouring that colours~$y_1$ and~$y_2$ with a new colour that is not used anywhere else and colours all other vertices of~$G''$ in the same way as~$c'$ does.
Note that~$G''$ is a tree of maximum degree at most~$6$ and~$c''$ colours the vertices of~$G''$ such that every colour is used exactly twice.
Furthermore, in every colourful partition of $(G'',c'')$, the vertices~$y_1$ and~$y_2$ must be in different parts of the partition.
Since the colours on~$y_1$ and~$y_2$ are not used anywhere else, in any colourful partition of minimum size~$y_1$ must be in the same colourful component as~$x_1$ and~$y_2$ must be in the same colourful component at~$x_2$.
Thus $(G'',c'')$ has a colourful partition of size at most~$2(r+\nobreak 1)$ if and only if~$(G',c)$ has a colourful partition of size at most $r+\nobreak 1$.
\end{proof}

We now present our third \NP-hardness result.

\begin{theorem}\label{t-split}
{\sc $2$-Colourful Partition} is \NP-complete for coloured split graphs.
\end{theorem}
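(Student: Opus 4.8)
The plan is to reduce from a suitable NP-complete satisfiability problem—my first choice is \textsc{Not-All-Equal Positive 3-Satisfiability}, which the preliminaries have already set up for us. Given a positive $3$-formula $F$ with variables $x_1,\dots,x_n$ and clauses $C_1,\dots,C_m$, I would build a coloured split graph $(G,c)$ with clique part $K$ and independent part $I$, and ask for a colourful partition of size exactly $2$. Since the target size is $2$, the whole problem becomes: can we $2$-colour (in the sense of ``assign to block $1$ or block $2$'') the vertices so that each block is connected and colourful? In a split graph with a large clique, connectivity of a block is essentially automatic as long as the block meets the clique, so the real content is the colour constraint: a pair of equal-coloured vertices must be separated into the two different blocks. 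This is exactly a ``$2$-colouring forced by equality-edges'' structure, which is how one naturally encodes a boolean assignment.

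Concretely, I would let the clique $K$ contain, for each variable $x_i$, two vertices $t_i$ and $f_i$ (``true'' and ``false'' literal-vertices), plus possibly a small number of auxiliary ``anchor'' vertices to control connectivity and to force the two blocks to be nonempty. I give $t_i$ and $f_i$ the same colour, so in any colourful partition of size $2$ they lie in opposite blocks — this is the variable gadget, and it defines a truth assignment $\tau$ (say $x_i$ is true iff $t_i$ is in block $1$). For each clause $C_j=\{x_a,x_b,x_c\}$ I add to the independent set $I$ a constant number of vertices whose colours are chosen (using colour-multiplicity as a tool, with each colour used at most twice) so that these vertices, together with their clique-neighbours $t_a,f_a,t_b,f_b,t_c,f_c$, can be legally $2$-partitioned if and only if the three variables are not all assigned the same value — i.e., the clause is not-all-equal satisfied. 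The key trick is that an independent-set vertex $y$ adjacent only to $t_a,t_b,t_c$ must go into whichever block avoids a colour-clash, and by also adding a mate of $y$ (same colour) adjacent to $f_a,f_b,f_c$ we force the two blocks to ``split'' the literal-vertices of $C_j$, which rules out the all-false configuration; a symmetric gadget rules out all-true. One then checks: $F$ is NAE-satisfiable $\iff$ $(G,c)$ has a colourful partition of size $2$. Membership in NP is immediate and was already noted for the general problem, so NP-completeness follows.

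The forward direction (a satisfying assignment yields a size-$2$ colourful partition) should be routine: put $t_i$ in block $1$ iff $\tau(x_i)=\texttt{true}$, distribute the clause-gadget vertices in the forced way, place anchors to guarantee both blocks touch the clique, and verify connectivity (trivial, since each block contains clique vertices and every $I$-vertex has a neighbour in $K$ that lies in the right block) and colourfulness (each colour class has size $\le 2$ and its two members were deliberately separated). The reverse direction is where care is needed: from a size-$2$ colourful partition I must (a) argue the variable gadget really does encode a consistent assignment — this is automatic because same-coloured pairs must split — and (b) argue that the clause gadgets force the not-all-equal condition rather than merely something weaker.

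The main obstacle I anticipate is designing the clause gadget so that it \emph{simultaneously} forbids the all-true and the all-false assignments of a clause while using only split-graph structure (one clique, one independent set) and keeping every colour used at most a bounded number of times, all without accidentally over-constraining and forbidding a legitimate not-all-equal assignment. In particular, one has to be careful that the independent-set clause vertices do not create unwanted colour collisions \emph{among different clauses}, and that forcing ``both blocks nonempty'' (needed so the size is genuinely $2$ and not $1$) does not interfere with the gadgets. I expect this to be handled by giving almost every vertex a private colour and reserving repeated colours only for the $t_i/f_i$ pairs and the clause-gadget mate-pairs, together with one global pair of anchor vertices that forces the two-block split; then a short case analysis over the $2^3$ assignment patterns of each clause's variables finishes the correctness proof.
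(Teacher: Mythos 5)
Your overall strategy (a split graph whose clique holds the literal vertices, same-coloured pairs forcing a two-block split, connectivity doing the rest) is reasonable, and reducing from \textsc{Not-All-Equal Positive $3$-Satisfiability} rather than from \textsc{$3$-Satisfiability} (which is what the paper does) could in principle work. But the clause gadget you specify does not do what you claim, and this is exactly the step you flagged as the crux. Take a clause $C_j=\{x_a,x_b,x_c\}$, a vertex $y$ adjacent to $t_a,t_b,t_c$ and its same-coloured mate $y'$ adjacent to $f_a,f_b,f_c$. The only constraints this imposes on a size-$2$ colourful partition $(V_1,V_2)$ are that $y$ and $y'$ lie in different blocks and that each has a neighbour in its own block (neither block can be a singleton, since each block contains clique vertices). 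Now suppose all three variables are true, say $t_a,t_b,t_c\in V_1$ and $f_a,f_b,f_c\in V_2$: placing $y\in V_1$ and $y'\in V_2$ satisfies everything, so the all-true pattern is not excluded; symmetrically, if all three are false, place $y\in V_2$ and $y'\in V_1$. In general the gadget only enforces that $y$'s block contains some $t$-vertex of the clause and $y'$'s block contains some $f$-vertex of the clause, and since the $t_i/f_i$ pairs are complementary, one of the two placements of $(y,y')$ achieves this under \emph{every} assignment. The gadget is vacuous, so the backward direction of your reduction fails.

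The repair is small but essential: make both same-coloured clause vertices adjacent to the \emph{same} triple $t_a,t_b,t_c$. Then they must lie in different blocks, and each needs a neighbour among $t_a,t_b,t_c$ in its own block, which is precisely the not-all-equal condition for $C_j$. For comparison, the paper takes a different route: it reduces from \textsc{$3$-Satisfiability}, keeps a hub vertex $z$ in the clique, attaches to $z$ a pendant vertex $C_j'$ with the same colour as the clause vertex $C_j$ (forcing every $C_j$ into the block not containing $z$), and then connectivity of that block forces each $C_j$ to see a literal vertex in its block, i.e.\ a true literal; the split of $x_i$ and $\overline{x_i}$ is forced by same-coloured pendant pairs $y_i,y_i'$ attached to both. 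Either route works once the clause gadget genuinely encodes the intended condition; yours, as written, does not.
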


\begin{proof}
We use a reduction from the \NP-complete problem {\sc $3$-Satisfiability}~\cite{Ka72}.
Let $(X,{\cal C})$ be a $3$-formula, where $X= \{x_1,x_2,\ldots,x_n\}$ and ${\cal C} = \{C_1, C_2,\ldots, C_m\}$.
From $(X,{\cal C})$ we construct a coloured graph~$(G,c)$ as follows.

\begin{itemize}
\item For every~$x_i$ create two vertices~$x_i$,~$\ol{x_i}$; we say that these vertices are of {\em $x$-type}.
\item Add an edge between every pair of $x$-type vertices.
\item For every~$x_i$ create two vertex~$y_i$ and~$y_i'$ and add the edges $x_iy_i$, $x_iy_i'$, $\ol{x_i}y_i$, $\ol{x_i}y_i'$ (the vertices~$x_i$ and~$\ol{x_i}$ will be the only two neighbours of~$y_i$ and~$y_i'$).
\item Create a vertex~$z$ and add an edge between~$z$ and every $x$-type vertex.
\item For every clause~$C_j$ create a vertex~$C_j$ and a vertex~$C_j'$.
\item Add an edge between~$z$ and every~$C_j'$ (every~$C_j'$ vertex will have~$z$ as its only neighbour).
\item If a clause~$C_j$ consists of literals $x_g$, $x_h$, $x_i$, then add the edges $C_jx_g$, $C_jx_h$ and~$C_jx_i$.
\item Assign a distinct colour to every $x$-type vertex, every~$y_i$ vertex, every clause vertex~$C_j$ and~$z$.
\item Set $c(y_i')=c(y_i)$ for $i\in\{1,\ldots,n\}$ and $c(C_j')=c(C_j)$ for $j\in\{1,\ldots,m\}$.
\end{itemize}
See \figurename~\ref{fig:split} for an illustration. We note that~$G$ is a split graph, as the set of all $x$-type vertices and~$z$ is a clique, whereas the other vertices form an independent set.

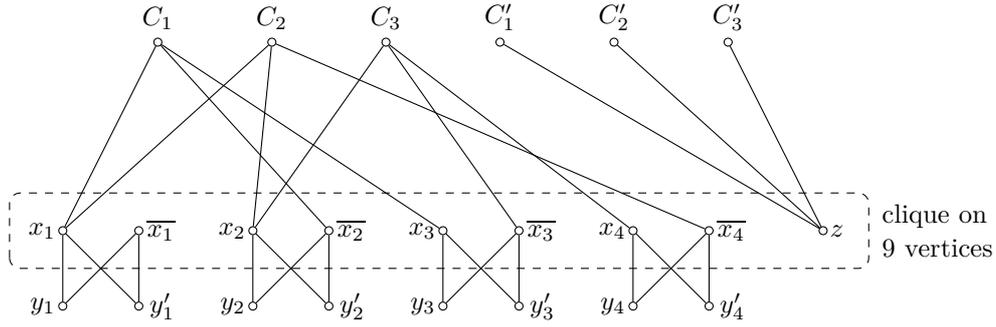
\begin{figure}
\begin{center}
\begin{tikzpicture}

\node[label=left:$x_1$] (x1t) at (0,0) {};
\node[label=right:$\ol{x_1}$] (x1f) at (1,0) {};
\node[label=left:$x_2$] (x2t) at (2.5,0) {};
\node[label=right:$\ol{x_2}$] (x2f) at (3.5,0) {};
\node[label=left:$x_3$] (x3t) at (5,0) {};
\node[label=right:$\ol{x_3}$] (x3f) at (6,0) {};
\node[label=left:$x_4$] (x4t) at (7.5,0) {};
\node[label=right:$\ol{x_4}$] (x4f) at (8.5,0) {};
\node[label=right:$z$] (z) at (10,0) {};

\draw[rounded corners, dashed] (-0.7,-0.5) rectangle (10.6,0.5);

\node[draw=none,rectangle,align=left] at (11.5,0.0) {clique on\\ 9 vertices};

\node[label=left:$y_1$] (y1t) at (0,-1) {};
\node[label=right:$y_1'$] (y1f) at (1,-1) {};
\node[label=left:$y_2$] (y2t) at (2.5,-1) {};
\node[label=right:$y_2'$] (y2f) at (3.5,-1) {};
\node[label=left:$y_3$] (y3t) at (5,-1) {};
\node[label=right:$y_3'$] (y3f) at (6,-1) {};
\node[label=left:$y_4$] (y4t) at (7.5,-1) {};
\node[label=right:$y_4'$] (y4f) at (8.5,-1) {};

\path (x1t) -- (y1t) -- (x1f) -- (y1f) -- (x1t); 
\path (x2t) -- (y2t) -- (x2f) -- (y2f) -- (x2t); 
\path (x3t) -- (y3t) -- (x3f) -- (y3f) -- (x3t); 
\path (x4t) -- (y4t) -- (x4f) -- (y4f) -- (x4t); 

\node[label=above:$C_1$] (c1) at (1.25,2.5) {};
\node[label=above:$C_2$] (c2) at (2.75,2.5) {};
\node[label=above:$C_3$] (c3) at (4.25,2.5) {};
\node[label=above:$C_1'$] (c1a) at (5.75,2.5) {};
\node[label=above:$C_2'$] (c2a) at (7.25,2.5) {};
\node[label=above:$C_3'$] (c3a) at (8.75,2.5) {};

\path (c1a) -- (z) -- (c2a);
\path (c3a) -- (z);

\path (c1) -- (x1t);
\path (c1) -- (x2f);
\path (c1) -- (x3t);

\path (c2) -- (x1t);
\path (c2) -- (x2t);
\path (c2) -- (x4f);

\path (c3) -- (x2t);
\path (c3) -- (x3f);
\path (c3) -- (x4t);

\end{tikzpicture}
\end{center}
\caption{\label{fig:split}The split graph~$G$ constructed using the reduction of the proof of Theorem~\ref{t-split} from the {\sc $3$-Satisfiability} instance $C_1=(x_1\vee \ol{x_2} \vee x_3), C_2=(x_1\vee {x_2} \vee \ol{x_4}), C_3= (x_2\vee \ol{x_3} \vee x_4)$.
The colouring is not indicated.}
\end{figure}

\medskip
\noindent
We claim that~${\cal C}$ has a satisfying truth assignment if and only if~$(G,c)$ has a colourful partition of size~$2$.

First suppose that~${\cal C}$ has a satisfying truth assignment~$\tau$.
We put an $x$-type vertex in~$V_1$ if and only if~$\tau$ assigns~$\True$ to it.
We also put every~$C_j$ and every~$y_i$ in~$V_1$.
As~$\tau$ is a satisfying truth assignment, $V_1$ is connected.
We let~$V_2$ consist of all other vertices and note that by construction~$V_2$ is also connected.
Moreover, both~$V_1$ and~$V_2$ are colourful.

Now suppose that~$(G,c)$ has a colourful partition~$(V_1,V_2)$.
We assume without loss of generality that $z\in V_2$.
Suppose there exists a vertex~$C_j'$ that belongs to~$V_1$.
Then, as~$C_j'$ has~$z$ as its only neighbour, $V_1$ only contains~$C_j'$, which means that~$V_2$ contains~$y_1$ and~$y_1'$, which have the same colour, a contradiction.
Hence, every~$C_j'$ must belong to~$V_2$.
As for every $j\in\{1,\ldots,m\}$ the two vertices~$C_j$ and~$C_j'$ have the same colour, this implies that every~$C_j$ belongs to~$V_1$.
Suppose that there exist two vertices~$x_i$ and~$\ol{x_i}$ that both belong to~$V_1$.
As~$y_i$ and~$y_i'$ have the same colour, one of them must belong to~$V_2$.
However, this is not possible, as~$V_2$ is connected and contains~$z$ (and every~$C_j'$), but~$x_i$ and~$\ol{x_i}$ are the only neighbours of~$y_i$ and~$y_i'$.
Suppose that there exist two vertices~$x_i$ and~$\ol{x_i}$ that both belong to~$V_2$.
Then, as at least one of $y_i, y_i'$ belongs to~$V_1$, we find that~$V_1$, which contains every~$C_j$, is not connected.
Hence this is also not possible.
We conclude that~$x_i$ and~$\ol{x_i}$ belong to different sets $V_1$,~$V_2$.
We can define a truth assignment~$\tau$ with $\tau(x)=\True$ if~$x$ belongs to~$V_1$ and $\tau(x)=\False$ if~$x$ belongs to~$V_2$.
As~$V_1$ is connected, each clause~$C_j$ is adjacent to at least one $x$-type vertex in~$V_1$.
So every clause~$C_j$ contains at least one literal to which~$\tau$ assigns \True.
\end{proof}

We now prove our final \NP-hardness result.

\begin{theorem}\label{t-pathwidth}
{\sc $2$-Colourful Partition} is \NP-complete for coloured planar bipartite graphs of maximum degree~$3$ and path-width~$3$.
\end{theorem}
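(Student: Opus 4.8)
The plan is to reduce from a suitable restricted version of a satisfiability problem — most naturally \textsc{Not-All-Equal Positive 3-Satisfiability}, since \textsc{2-Colourful Partition} is inherently a ``two-sided'' problem (the two partition classes will correspond to the ``true'' and ``false'' sides of a not-all-equal assignment), and the symmetry of NAE matches the symmetry between the two colour classes. To get planarity and bounded degree we should start from a planar, bounded-occurrence variant of NAE-3-SAT; restricting each variable to appear a bounded number of times and making the variable-clause incidence graph planar is standard and keeps the problem \NP-complete. The target graph~$G$ will have a global structure resembling the split-graph construction of Theorem~\ref{t-split}, but with the clique replaced by a long ``spine'' path (to drop both the degree and the path-width), with variable gadgets and clause gadgets hung off the spine.

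The key steps, in order, are as follows. First I would fix the source problem as planar bounded-occurrence NAE-3-SAT and lay out the embedding. Second, build a variable gadget for each~$x_i$: a small constant-size subgraph with two ``port'' vertices, $x_i$ and $\ol{x_i}$, and a pair of private like-coloured vertices (analogues of $y_i,y_i'$ from Theorem~\ref{t-split}) forcing $x_i$ and $\ol{x_i}$ to lie in different parts of any size-$2$ colourful partition; all other colours in the gadget are unique, so the gadget contributes nothing else to the constraints. Third, build a clause gadget for each clause~$C_j$ on literals $\ell_1,\ell_2,\ell_3$: attach a vertex $C_j$ adjacent to the three corresponding port vertices, together with a like-coloured twin $C_j'$ whose only neighbour lies on the spine, exactly as $C_j'$ forces $C_j$ onto a prescribed side in Theorem~\ref{t-split}; the effect is that in a size-$2$ colourful partition $C_j$ sits in the part \emph{not} containing the spine's distinguished endpoint, so it must be adjacent (within its part) to at least one true-side port and at least one false-side port — this is precisely the not-all-equal condition. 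Fourth, replace the clique on the $x$-type vertices and~$z$ by a path (or, to stay bipartite and degree-$3$, a subdivided path) so that the two parts are each connected along a spine: one part contains the spine, the other is forced to be connected through the clause and port vertices. Fifth, route everything in the plane using the planarity of the incidence graph and the bounded number of occurrences, choosing the order of gadgets along the spine to match a planar embedding; a path decomposition of width~$3$ then follows by sweeping the spine from one end to the other, keeping at most the current spine vertex, its spine-neighbour, and the one or two gadget vertices currently attached. Sixth, prove the two directions of correctness: a NAE assignment yields the obvious $(V_1,V_2)$ (true-side ports plus the $C_j$'s and one of each $y_i,y_i'$ in $V_1$, the rest in $V_2$, checking connectivity and colourfulness of both), and conversely any size-$2$ colourful partition forces, via the twin vertices, that each $x_i,\ol{x_i}$ split, each $C_j$ lands opposite the spine endpoint, and each clause is not-all-equal.

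The main obstacle will be simultaneously achieving \emph{all four} of planarity, maximum degree~$3$, bipartiteness, and path-width~$3$ — each is easy alone but they pull against one another. Bounding the degree at~$3$ means the long ``connector'' that in Theorem~\ref{t-split} was a clique or a high-degree vertex $z$ must become a path, and each port vertex $x_i$ can then touch at most one clause directly, so a variable occurring in several clauses needs a little binary ``distribution tree'' of like-coloured copies — but those copies must themselves be forced consistent, which requires more paired-colour gadgets without blowing up the path-width. Keeping path-width~$3$ is the delicate quantitative part: it forces the gadgets to be attached to the spine essentially one at a time and to be ``thin'', so I expect the bulk of the technical work to be in designing the variable and clause gadgets as $O(1)$-vertex fragments that hang off a single spine edge and in verifying the explicit width-$3$ path decomposition. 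Bipartiteness is comparatively cheap — subdividing any offending odd cycle (e.g.\ the spine, or triangles $C_jx_g\ldots$) with new uniquely-coloured vertices does not change the answer — but it must be woven in without disturbing the degree or width bounds, so I would check it last, after the rest of the construction is pinned down.
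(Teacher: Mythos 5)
There is a genuine gap, and it sits at the very first step: the source problem you propose is not \NP-hard. Planar \textsc{Not-All-Equal 3-Satisfiability} (even positive and with bounded occurrences) is solvable in polynomial time --- it reduces to planar \textsc{Max-Cut}, which is in $\sf P$ (Moret, 1988) --- so the ``standard'' fact that planarising the incidence graph preserves \NP-completeness, true for \textsc{3-Satisfiability}, fails precisely for the NAE variant you need for the two-sidedness of \textsc{$2$-Colourful Partition}. Your construction cannot simply fall back on general NAE-3-SAT either, because in your design the clause vertex $C_j$ is joined \emph{by edges} to the port vertices of its three literals; planarity (and any hope of small path-width) of the constructed graph then hinges on the planarity (and small width) of the formula's incidence structure, which you may not assume. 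This also undermines the path-width bound independently: even for planar formulas, clause--port edges can span long stretches of the spine and nest or interleave, so a bag must carry every clause vertex whose interval crosses the current sweep position; ``sweeping the spine'' gives width $3$ only if every gadget is attached locally, and incidence graphs of (NAE-)3-SAT instances have unbounded path-width in general.

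The paper's proof avoids both problems with one trick that is absent from your proposal: clause gadgets are \emph{not} joined to variable gadgets by edges at all. All gadgets (one per variable, one per clause) are laid out in a single chain, consecutive gadgets joined by two crossing edges, and the variable--clause incidence is transmitted purely through \emph{repeated colours}: the clause gadget for $C_j$ contains fresh vertices coloured $\alpha_g^{2r-1},\alpha_g^{2r},\alpha_h^{2s},\alpha_i^{2t}$, colours that also occur on the $\alpha$-path of the corresponding variable gadgets, together with two vertices sharing the colour $\beta_j$ which must be separated; colourfulness then forces the not-all-equal condition without any long-range edge. Because the graph is literally a chain of constant-size pieces, planarity, bipartiteness, maximum degree~$3$ and an explicit width-$3$ path decomposition all come for free, and the reduction can start from plain (non-planar) \textsc{Not-All-Equal Positive 3-Satisfiability}, which is \NP-complete. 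If you want to salvage your split-graph-inspired construction (as in Theorem~\ref{t-split}), you would need to replace the physical clause-to-port edges by some such colour-based coupling; otherwise the choice of source problem and the width bound both break.
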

\begin{proof}
We reduce from the {\sc Not-All-Equal Positive $3$-Satisfiability} problem, which is \NP-complete~\cite{Sc78}.
Let $(X,{\cal C})$ be a positive $3$-formula, where $X= \{x_1,x_2,\ldots,x_n\}$ and ${\cal C} = \{C_1, C_2,\ldots, C_m\}$.
From $(X,{\cal C})$ we construct a coloured graph~$(G,c)$ as follows.
The idea is to build the graph~$G$ as a sequence of gadgets, where two consecutive gadgets are joined by two edges.

We first construct the variable gadgets.
For every~$x_i$, let~$\ell_i$ be the number of clauses in which~$x_i$ appears.
We construct the gadget displayed in \figurename~\ref{f-variable}.
In particular, we introduce the new colours $a_i,b_i,c_i,d_i,\alpha_i^1,\ldots,\alpha_i^{2\ell_i}$.

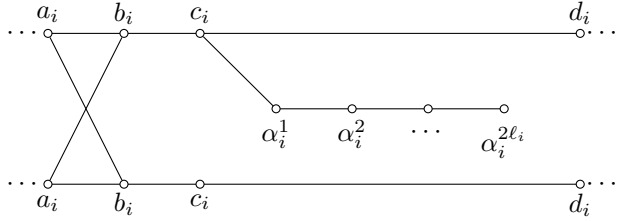
\begin{figure}
\begin{tikzpicture}
\vargadgetPathwidth
\end{tikzpicture}
\caption{\label{f-variable}The variable gadget for Theorem~\ref{t-pathwidth}.}
\end{figure}

We now construct the clause gadgets.
Let $C_j=\{x_g,x_h,x_i\}$.
Suppose that~$C_j$ is the $r$th clause in which~$x_g$ occurs, the $s$th clause in which~$x_h$ occurs, and the $t$th clause in which~$x_i$ occurs.
We construct the gadget displayed in \figurename~\ref{f-clause}.
In particular, we select the already introduced colours $\alpha_g^{2r-1}$, $\alpha_g^{2r}$, $\alpha_h^{2s}$ and~$\alpha_i^{2t}$, and also introduce six new colours $e_j,f_j,g_j,h_j,i_j,\beta_j$.

\begin{figure}
\begin{tikzpicture}
\clausegadgetPathwidth
\end{tikzpicture}
\caption{\label{f-clause}The clause gadget for Theorem~\ref{t-pathwidth}.}
\end{figure}
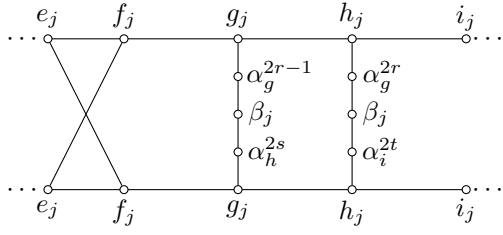

We obtain~$G$ by first adding for $l\in\{1,\ldots,n-1\}$, an edge between the top vertices coloured~$d_l$ and~$a_{l+1}$ and an edge between the bottom vertices coloured~$d_l$ and~$a_{l+1}$.
In this way we obtained a sequence of variable gadgets.
We now add an edge between the top vertices coloured~$d_n$ and~$e_1$ and the bottom vertices coloured~$d_n$ and~$e_1$.
This is followed by edges between the top vertices coloured~$i_l$ and~$e_{l+1}$ and an edge between the bottom vertices coloured~$i_l$ and~$e_{l+1}$ for $l\in \{1,\ldots,m-1\}$.
This reduction is illustrated in \figurename~\ref{fig:full-graph-small-pathwidth}.
Note that~$G$ is planar and bipartite, and that it admits a path decomposition of width~$3$, as shown in \figurename~\ref{fig:full-graph-small-pathwidth}.

\begin{figure}
	\centering
	\begin{tikzpicture} [xscale=.45]

	   \path [] (0,0)
	     node[minimum size=0pt] (Za1) {}
	   -- ++(1,0) node[minimum size=0pt] (Zb1)   {}
	   -- ++(1,0) node[minimum size=0pt] (Zc1)  {}
	   -- ++(3,0) node[minimum size=0pt] ()  {}
	   --++(1,0) node[minimum size=0pt] (Za2) {}
	   -- ++(1,0) node[minimum size=0pt] (Zb2)   {}
	   ;

	   	   \path [] (0,-2)
	   node[minimum size=0pt] (Za1p) {}
	   -- ++(1,0) node[minimum size=0pt] (Zb1p)   {}
	   -- ++(1,0) node[minimum size=0pt] ()  {}
	   -- ++(3,0) node[minimum size=0pt] ()  {}
	   --++(1,0) node[minimum size=0pt] (Za2p) {}
	   -- ++(1,0) node[minimum size=0pt] (Zb2p)   {}
	   ;
	   \path(Za1)--(Zb1p);
	   \path(Za1p)--(Zb1);
	   \path(Za2)--(Zb2p);
	   \path(Za2p)--(Zb2);
	   \path  (Zc1)
	   -- ++(1,-1) node[minimum size=0pt] (Zx1)  {}
	   -- ++(.6,0) node[minimum size=0pt]  {}
	   -- ++(.6,0) node[minimum size=0pt] (Zxl)   {};

	   \begin{scope}[comp1]
	   \path (Za1)--(Zb1p)--(Za2p)--(Zb2);
	   \end{scope}
	   \begin{scope}[comp2]
	   	 \path (Za1p)--(Zb1)-- (Za2)--(Zb2p);
	   	 \path (Zc1) -- (Zx1)--(Zxl);
	   \end{scope}

	   \path [] (0,0)
	     node (a1)[label=above:$a_1$, 
	   ]{}
	    ++(1,0) node (b1)  [label=above:$b_1$] {}
	    ++(1,0) node (c1) [label=above:$c_1$] {}
	    ++(3,0) node () [label=above:$d_1$] {}
	   ++(1,0) node (a2)[label=above:$a_2$]{}
	    ++(1,0) node (b2)  [label=right:$\cdots$, label=above:$b_2$] {}
	   ;

	   	   \path [] (0,-2)
	   node (a1p)[label=below:$a_1$, 
	   ]{}
	    ++(1,0) node (b1p)  [label=below:$b_1$] {}
	    ++(1,0) node () [label=below:$c_1$] {}
	    ++(3,0) node () [label=below:$d_1$] {}
	   ++(1,0) node (a2p)[label=below:$a_2$, 
	   ]{}
	    ++(1,0) node (b2p)  [label=right:$\cdots$, label=below:$b_2$] {}
	   ;
	   \path  (c1)
	    ++(1,-1) node (x1) [label=-100:$\alpha_1^1$] {}
	    ++(.6,0) node [label=-80:$\alpha_1^2$] {}
	    ++(.6,0) node (xl) [label=right:$\cdots$]  {};

	   \path [] (9,0)
	   node[minimum size=0pt]  (Za1)   {}
	   -- ++(1,0) node[minimum size=0pt] (Zb1)   [] {}
	   -- ++(1,0) node[minimum size=0pt] (Zc1)  {}
	   -- ++(3,0) node[minimum size=0pt] ()  {}
	   --++(1,0) node[minimum size=0pt] (Ze1){}
	   -- ++(1,0) node[minimum size=0pt] (Zf1)   {}
	   --++(1,0) node[minimum size=0pt] (Zg1){}
	   --++(1.5,0) node[minimum size=0pt] (Zh1){}
	   --++(1.5,0) node[minimum size=0pt] (Zi1){}
	   --++(1,0) node[minimum size=0pt] (Ze2){}    --++(1,0) node (Zf2) {}

	   ;

	   \path [] (9,-2)
	   node[minimum size=0pt]  (Za1p)   {}
	   -- ++(1,0) node[minimum size=0pt] (Zb1p)    {}
	   -- ++(1,0) node[minimum size=0pt] ()   {}
	   -- ++(3,0) node[minimum size=0pt] ()  {}
	   --++(1,0) node[minimum size=0pt] (Ze1p){}
	   -- ++(1,0) node[minimum size=0pt] (Zf1p)    {}
	   --++(1,0) node[minimum size=0pt] (Zg1p){}
	   --++(1.5,0) node[minimum size=0pt] (Zh1p){}
	   --++(1.5,0) node[minimum size=0pt] (Zi1p){}
	   --++(1,0) node[minimum size=0pt] (Ze2p){}    --++(1,0) node (Zf2p) {}
	   ;

	   \path(Za1)--(Zb1p);
	   \path(Za1p)--(Zb1);
	   \path(Ze1)--(Zf1p);
       \path(Ze1p)--(Zf1);
	   \path(Ze2)--(Zf2p);
       \path(Ze2p)--(Zf2);
	   \path  (Zc1)
	   -- ++(1,-1) node[minimum size=0pt] (Zx1)  {}
	   -- ++(.6,0) node[minimum size=0pt]  {}
	   -- ++(.6,0) node[minimum size=0pt] (Zxl)   {};
	   \path(Zg1)
	      	   -- ++(0,-.5) node[minimum size=0pt] (Zm1)  {}
	      	   -- ++(0,-.5) node[minimum size=0pt] (Zm3) {}
	      	   -- ++(0,-.5) node[minimum size=0pt] (Zm5) {}
	      	   -- (Zg1p);
	   \path(Zh1)
	      	   -- ++(0,-.5) node[minimum size=0pt](Zm2)  {}
	      	   -- ++(0,-.5) node[minimum size=0pt](Zm4)  {}
	      	   -- ++(0,-.5) node[minimum size=0pt](Zm6)  {}
	      	   -- (Zh1p);

	   \begin{scope}[comp1]
	   \path (Za1)--(Ze1)--(Zf1p)--(Zf2p);
	   \path (Zc1) -- (Zx1)--(Zxl);
	   \path (Zg1p)--(Zm3);
	   \path (Zh1p)--(Zm6);
	   \end{scope}
	   \begin{scope}[comp2]
	   \path (Za1p)--(Ze1p)--(Zf1)--(Zf2);
	    \path (Zg1)--(Zm1);
	    \path (Zh1)--(Zm4);
	   \end{scope}

	   \path [] (9,0)
	   node  (a1)  [label=above:$a_n$] {}
	    ++(1,0) node (b1) [label=above:$b_n$]  [] {}
	    ++(1,0) node (c1) [label=above:$c_n$] {}
	    ++(3,0) node () [label=above:$d_n$] {}
	   ++(1,0) node (e1)[label=above:$e_1$ ]{}
	    ++(1,0) node (f1)  [label=above:$f_1$ ] {}
	   ++(1,0) node (g1)[label=above:$g_1$ ]{}
	   ++(1.5,0) node (h1)[label=above:$h_1$ ]{}
	   ++(1.5,0) node (i1)[label=above:$i_1$ ]{}
	   ++(1,0) node (e2)[label=above:$e_2$ ]{}    ++(1,0) node (f2)[label=right:$\cdots$,     label=above:$f_2$ ]{}

	   ;

	   \path [] (9,-2)
	   node  (a1p)  [label=below:$a_n$] {}
	    ++(1,0) node (b1p)  [label=below:$b_n$] [] {}
	    ++(1,0) node ()  [label=below:$c_n$] {}
	    ++(3,0) node () [label=below:$d_n$] {}
	   ++(1,0) node (e1p)[label=below:$e_1$ ]{}
	    ++(1,0) node (f1p)   [label=below:$f_1$ ] {}
	   ++(1,0) node (g1p)[label=below:$g_1$ ]{}
	   ++(1.5,0) node (h1p)[label=below:$h_1$ ]{}
	   ++(1.5,0) node (i1p)[label=below:$i_1$ ]{}
	   ++(1,0) node (e2p)[label=below:$e_2$ ]{}    ++(1,0) node (f2p)[label=right:$\cdots$, label=below:$f_2$ ] {}
	   ;

	   \path  (c1)
	   ++(1,-1) node (x1) [label=-100:$\alpha_2^1$] {}
	   ++(.6,0) node [label=-80:$\alpha_2^2$] {}
	   ++(.6,0) node (xl) [label=right:$\cdots$]  {};
	   \path(g1)
	      	   ++(0,-.5) node (m1) [label=right:$\alpha_g^1$] {}
	      	   ++(0,-.5) node (m3) {}
	      	   ++(0,-.5) node (m5) {}
	      	   (g1p);
	   \path(h1)
	      	   ++(0,-.5) node(m2) [label=right:$\alpha_g^2$] {}
	      	   ++(0,-.5) node(m4)  {}
	      	   ++(0,-.5) node(m6)  {}
	      	   (h1p);


	   \path [] (24,0)
	   node[minimum size=0pt]  (Ze1){}
	   -- ++(1,0) node[minimum size=0pt] (Zf1)   {}
	   --++(1,0) node[minimum size=0pt] (Zg1){}
	   --++(1,0) node[minimum size=0pt] (Zh1){}
	   --++(1,0) node[minimum size=0pt] (Zi1){}
	   ;
	   \path [] (24,-2) node[minimum size=0pt] (Ze1p){}
	   -- ++(1,0) node[minimum size=0pt] (Zf1p)   {}
	   --++(1,0) node[minimum size=0pt] (Zg1p){}
	   --++(1,0) node[minimum size=0pt] (Zh1p){}
	   --++(1,0) node[minimum size=0pt] (Zi1p){}
	   ;
	   \path(Ze1)--(Zf1p);
	   \path(Ze1p)--(Zf1);
	     \path(Zg1)
	   -- ++(0,-.5) node[minimum size=0pt] (Zm1) {}
	   -- ++(0,-.5) node[minimum size=0pt] (Zm3) {}
	   -- ++(0,-.5) node[minimum size=0pt] (Zm5) {}
	   -- (Zg1p);
	   \path(Zh1)
	   -- ++(0,-.5) node[minimum size=0pt](Zm2)  {}
	   -- ++(0,-.5) node[minimum size=0pt](Zm4)  {}
	   -- ++(0,-.5) node[minimum size=0pt](Zm6)  {}
	   -- (Zh1p);

	   \begin{scope}[comp2]
	   \path (Ze1)--(Zf1p)--(Zi1p);
	   \path (Zg1p)--(Zm3);
	   \end{scope}
	   \begin{scope}[comp1]
	   \path (Ze1p)--(Zf1)--(Zi1);
	   \path (Zg1)--(Zm1);
	   \path (Zh1)--(Zm6);
	   \end{scope}

	   \path [] (24,0)
	   node  (e1)[label=above:$e_m$ ]{}
	    ++(1,0) node (f1)  [label=above:$f_m$   ] {}
	   ++(1,0) node (g1)[label=above:$g_m$ ]{}
	   ++(1,0) node (h1)[label=above:$h_m$ ]{}
	   ++(1,0) node (i1)[label=above:$i_m$ ]{}
	   ;
	   \path [] (24,-2) node (e1p)[label=below:$e_m$ ]{}
	    ++(1,0) node (f1p)  [	  label=below:$f_m$  ] {}
	   ++(1,0) node (g1p)[label=below:$g_m$ ]{}
	   ++(1,0) node (h1p)[label=below:$h_m$ ]{}
	   ++(1,0) node (i1p)[label=below:$i_m$ ]{}
	   ;
	     \path(g1)
	    ++(0,-.5) node (m1) {}
	    ++(0,-.5) node (m3) {}
	    ++(0,-.5) node (m5) {}
	    (g1p);
	   \path(h1)
	    ++(0,-.5) node(m2)  {}
	    ++(0,-.5) node(m4)  {}
	    ++(0,-.5) node(m6)  {}
	    (h1p);
	\end{tikzpicture}

	\newcommand{\bag}[1]{
		node [align=center, text width=14pt,text height=3ex,text depth=11ex,rectangle] {#1}
	}
	\newcommand{\largebag}[1]{
		node [align=center, text width=20pt,text height=3ex,text depth=11ex,rectangle] {#1}
	}
	\newcommand{\vlargebag}[1]{
		node [align=center, text width=30pt,text height=3ex,text depth=11ex,rectangle] {#1}
	}

        \scalebox{0.93}{
	\begin{tikzpicture}[xscale=.68]
	\path (0,0) \bag{$a_1$ $a'_1$ $b_1$ $b_1'$}
	--++(1,0) \bag{$b_1$ $b'_1$ $c_1$ $c_1'$}
	--++(1,0) \bag{$c_1$ $c_1'$ $\alpha_1^1$ $\alpha_1^2$}
	--++(1,0) node[draw=none] {\ldots}
	--++(1.4,0) \vlargebag{$c_1$\\ $c_1'$ $\alpha_1^{{2\ell_1{-}1}}$ $\alpha_1^{2\ell_1}$}
	--++(1.4,0) \bag{$c_1$ $c_1'$ $d_1$ $d_1'$}
	--++(1,0) \bag{$d_1$ $d_1'$ $a_2$ $a_2'$}
	--++(1,0) node[draw=none] {\ldots}
	--++(1,0) \bag{$d_n$ $d_n'$ $e_1$ $e_1'$}
	--++(1,0) \bag{$e_1$ $e_1'$ $f_1$ $f_1'$}
	--++(1,0) \bag{$f_1$ $f_1'$ $g_1$ $g_1'$}
	--++(1.4,0) \vlargebag{$g_1$\\ $g_1'$ $\alpha_g^{2r{-}1}{'}$ $\beta_1$}
	--++(1.5,0) \largebag{$g_1$ $g_1'$ $\beta_1$ $\alpha_h^{2s}{'}$}
	--++(1.1,0) \bag{$g_1$ $g_1'$ $h_1$ $h_1'$}
	--++(1.1,0) \largebag{$h_1$\\ $h_1'$ $\alpha_g^{2r}{'}$ $\beta_1$}
	--++(1.2,0) \largebag{$h_1$ $h_1'$ $\beta_1'$ $\alpha_i^{2t}{'}$}
	--++(1.1,0) \bag{$h_1$ $h_1'$ $i_1$ $i_1'$}
	--++(1,0) \bag{$i_1$ $i_1'$ $e_2$ $e_2'$}
	--++(1,0) node[draw=none] {\ldots}
	--++(1,0) \bag{$h_m$ $h_m'$ $i_m$ $i_m'$}
	;
	\draw [decorate,decoration={brace,amplitude=10pt},xshift=0pt,yshift=-7.8ex]
	(6.1,0) -- (-.3,0) node[rectangle,midway, fill=none, draw=none, yshift=-3.2ex ] {variable gadget} ;
	\draw [decorate,decoration={brace,amplitude=10pt},xshift=0pt,yshift=-7.8ex]
	(18.5,0) -- (9.5,0) node[rectangle, midway, fill=none, draw=none, yshift=-3.2ex ] {clause gadget} ;

	\end{tikzpicture}}
	\caption{\label{fig:full-graph-small-pathwidth}Top: The graph~$G$ obtained by the reduction from {\sc Not-All-Equal Positive $3$-Satisfiability}, with a partition into red and blue components, corresponding to a truth assignment.
The graph~$G$ can be seen to be planar by letting one of the crossing edges~$\{a_i,b_i\}$ in each variable gadget and one of the crossing edges~$\{e_i,f_i\}$ in each clause gadget go round the outside of the rest of~$G$.
Bottom: A path-decomposition of~$G$ of width~$3$ (vertices are identified by their colour, we use a~$'$ for the bottom (right) vertex in order to distinguish it from the top (left) vertex of the same colour).
}
\end{figure}

We claim that $(X,{\cal C})$ has a satisfying truth assignment if and only if~$(G,c)$ has a colourful partition of size at most~$2$.

First suppose that~$(X,{\cal C})$ has a satisfying truth assignment~$\tau$.
We will define a colourful partition of~$(G,c)$ with two colourful sets, which we call \textcolor{colour1!80!black}{\em red} and \textcolor{colour2!80!black}{\em blue}, respectively.
Vertices of variable gadgets will be allocated to exactly one of these two sets based on the truth value of the variable.
Similarly, vertices of clause gadgets will be allocated to exactly one of these sets based on the truth values of their literals.
Below we describe all possibilities.
It is readily seen that both the blue set and the red set are colourful.
Dotted lines indicate two options depending on the previous gadget in the sequence.
We choose the option that ensures the red and blue sets remain connected.
\begin{itemize}
\item $\tau(x_i)=\True$:\\
\begin{tikzpicture}
\vargadgetPathwidthnonodes

\begin{scope}[comp1]
\path[] (Zb2)--(Zd2);
\path[maybe] (Za)--(Zb2)--(Za2);
\end{scope}
\begin{scope}[comp2]
\path[] (Zb)--(Zd);
\path[] (Zc)--(Zx1)--(Zxl);
\path[maybe] (Za)--(Zb)--(Za2);
\end{scope}
\vargadgetPathwidthnolines
\end{tikzpicture}
\item $\tau(x_i)=\False$:\\
\begin{tikzpicture}
\vargadgetPathwidthnonodes
\begin{scope}[comp1]
\path[] (Zb)--(Zd);
\path[] (Zc)--(Zx1)--(Zxl);
\path[maybe] (Za)--(Zb)--(Za2);
\end{scope}

\begin{scope}[comp2]
\path[] (Zb2)--(Zd2);
\path[maybe] (Za2)--(Zb2)--(Za);
\end{scope}
\vargadgetPathwidthnolines

\end{tikzpicture}

\item The clause $C_j=\{x_g,x_h,x_i\}$, with $\tau(x_g)=\tau(x_h)=\True$ and $\tau(x_i)=\False$ (exchange the colours blue and red for opposite truth values):\\
\begin{tikzpicture}
\clausegadgetPathwidthnonodes

\begin{scope}[comp1]
\path (Zb)--(Zg);
\path (Zy)--(Zc);
\path (Zx2)--(Zf);
\path[maybe] (Za)--(Zb)--(Za2);
\end{scope}

\begin{scope}[comp2]
\path (Zb2)--(Zg2);
\path (ZC2)--(Zf2);
\path[maybe] (Za2)--(Zb2)--(Za);
\end{scope}
\clausegadgetPathwidthnolines
\end{tikzpicture}

\item The clause $C_j=\{x_g,x_h,x_i\}$, with $\tau(x_g)=\True$ and $\tau(x_h)=\tau(x_i)=\False$ (exchange the colours blue and red for opposite truth values).\\
\begin{tikzpicture}
\clausegadgetPathwidthnonodes

\begin{scope}[comp1]
\path (Zb)--(Zg);
\path (ZC1)--(Zc);
\path (Zx2)--(Zf);
\path[maybe] (Za)--(Zb)--(Za2);
\end{scope}

\begin{scope}[comp2]
\path (Zb2)--(Zg2);
\path (ZC2)--(Zf2);
\path (Zy)--(Zc2);
\path[maybe] (Za2)--(Zb2)--(Za);
\end{scope}
\clausegadgetPathwidthnolines
\end{tikzpicture}

\item The clause $C_j=\{x_g,x_h,x_i\}$, with $\tau(x_g)=\tau(x_i)=\True$ and $\tau(x_h)=\False$ (exchange the colours blue and red for opposite truth values).\\
\begin{tikzpicture}
\clausegadgetPathwidthnonodes

\begin{scope}[comp1]
\path (Zb)--(Zg);
\path (Zz)--(Zf);
\path (Zx1)--(Zc);
\path[maybe] (Za)--(Zb)--(Za2);
\end{scope}

\begin{scope}[comp2]
\path (Zb2)--(Zg2);
\path (ZC1)--(Zc2);
\path[maybe] (Za2)--(Zb2)--(Za);
\end{scope}
\clausegadgetPathwidthnolines
\end{tikzpicture}

\end{itemize}

Now suppose that~$(G,c)$ has a colourful partition~$(V_1,V_2)$.
We define a truth assignment~$\tau$ as follows.
In the gadget for the variable~$x_i$, the vertices with colours $\alpha_i^1,\ldots,\alpha_i^{2\ell_i}$ belong to the same component (otherwise, by connectedness, one component consists of only a subset of these vertices, so the other component contains two vertices with the same colour, a contradiction).
We define $\tau(x_i)=\True$ if they belong to~$V_1$ and $\tau(x_i)=\False$ if they belong to~$V_2$.
In the gadget for the clause $C_j=\{x_g,x_h,x_i\}$, there are two vertices with the same colour~$\beta_j^1$, so they must belong to different sets~$V_1$ and~$V_2$.
These two vertices are only connected to vertices with colours $\alpha_g^{2r-1}$, $\alpha_g^{2r}$, $\alpha_h^{2s}$ and~$\alpha_i^{2t}$.
Hence, as~$V_1$ and~$V_2$ are connected, one of these four vertices must be in~$V_1$ and another one must be in~$V_2$.
This means that some variable in~$C_j$ was set to $\True$ and another one was set to $\False$.
We conclude that~$\tau$ is a satisfying truth assignment.
\end{proof}

The idea behind our next result is that the sets~$V_1$ and~$V_2$ of a colourful partition of size~$2$ form connected subtrees in a tree decomposition.
Branching over all options, we ``guess'' two vertices~$a$ and~$b$ of one bag to belong to different sets~$V_i$.
By exploiting the treewidth-$2$ assumption we can translate the instance into an equivalent instance of {\sc $2$-Satisfiability}.

\begin{theorem}\label{t-treewidth2}
{\sc $2$-Colourful Partition} is polynomial-time solvable for graphs of treewidth at most~$2$.
\end{theorem}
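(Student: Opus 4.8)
The plan is to follow the idea sketched before the statement: after some reductions and a branching step, each resulting instance is translated into an equivalent instance of \textsc{$2$-Satisfiability}, which is solvable in polynomial time.

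First I would dispose of the trivial cases. If some colour occurs on at least three vertices, no colourful partition of size~$2$ exists; so assume $(G,c)$ has colour-multiplicity at most~$2$, and let $H$ be the graph on $V(G)$ with an edge between any two distinct like-coloured vertices, which is then a matching. If $G$ has at least three components the answer is ``no''; if it has exactly two, it suffices to check that both are colourful; so assume $G$ is connected. If $G$ itself is colourful we answer ``yes'' with a partition of size~$1$; otherwise we look for a partition $(V_1,V_2)$ into two connected colourful sets. One computes in linear time a tree decomposition $(T,{\cal X})$ of $G$ of width at most~$2$. Since $G$ is connected and $V_1,V_2$ are non-empty, there is an edge of $G$ between them, hence a bag containing a vertex of $V_1$ and a vertex of $V_2$. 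So I branch over all choices of a bag $X_\iota$ and an ordered pair of distinct vertices $a,b\in X_\iota$, committing $a$ to $V_1$ and $b$ to $V_2$: there are $O(n)$ branches, and $(G,c)$ has the required partition if and only if at least one branch does. Fix a branch and root $T$ at $X_\iota$.

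Next I would build the $2$-SAT instance. For each $v\in V(G)$ introduce a variable $p_v$ with intended meaning ``$v\in V_1$''. Add the unit clauses $(p_a)$ and $(\lnot p_b)$. For every edge $uv$ of $H$ add $(p_u\vee p_v)$ and $(\lnot p_u\vee\lnot p_v)$, forcing $u$ and $v$ to opposite sides; since every colour class has size at most~$2$, these clauses capture colourfulness exactly. It remains to encode connectivity of $V_1$ and of $V_2$. I would do this with implication clauses read off from the rooted tree decomposition, together with $O(n)$ auxiliary variables that record, for a node~$j$ and two vertices of its bag, whether those vertices are already joined inside the subtree below~$j$ on their common side. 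The point is that every path from a vertex lying strictly below~$j$ to the remainder of its side must cross the separator $X_j\cap X_{j'}$, where $j'$ is the parent of~$j$, and this separator has at most two vertices; after fixing the endpoints $a$ and $b$, the conditions ``$V_1$ connected'' and ``$V_2$ connected'' become monotone constraints along the decomposition, exactly as in the familiar cases of paths, trees and cycles, and each such constraint is a clause of width at most~$2$.

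Finally I would solve each of the $O(n)$ $2$-SAT instances in linear time and answer ``yes'' iff some instance is satisfiable; a satisfying assignment yields a valid partition and conversely. The crux is the connectivity encoding, and in particular the fact that whether the (at most three) vertices of a bag must be interconnected below it can depend on how they are joined in the part of $G$ above it, so the construction must propagate partial-connectivity information up the tree, and one must verify that the resulting conditions really do have width~$\le 2$. This is precisely where the hypothesis ``treewidth at most~$2$'' is used: bags have at most three vertices and separators at most two, so only constantly many connectivity patterns per node are relevant, and a finite case analysis on node types and separator sizes shows that the patterns which actually occur are expressible by $2$-clauses over the $p_v$ and the auxiliary variables. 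If one prefers to avoid \textsc{$2$-Satisfiability}, the same book-keeping instead drives a bottom-up dynamic programming over the tree decomposition whose state at a node records the side of each of its at most three vertices and which of them are already mutually connected; the state space has constant size, so this runs in linear time.
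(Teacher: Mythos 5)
There is a genuine gap at the heart of your argument. The framing is the same as the paper's (reduce to a connected instance, branch over an ordered pair $a,b$ that shares a bag and must be cut, introduce one Boolean variable per vertex, add the clauses for $a$, $b$ and for like-coloured pairs, then solve \textsc{$2$-Satisfiability}), but the entire difficulty of the theorem is the step you leave as an assertion: that connectivity of $V_1$ and of $V_2$ can be encoded by clauses of width at most~$2$. You never construct this encoding nor prove its correctness, and the justification you give --- ``bags have at most three vertices and separators at most two, so only constantly many connectivity patterns per node are relevant'' --- would apply verbatim to treewidth~$3$; yet Theorem~\ref{t-pathwidth} shows that {\sc $2$-Colourful Partition} is \NP-complete already for path-width~$3$, so this generic reason cannot be what makes treewidth~$2$ work. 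The paper's proof identifies the actual treewidth-$2$ phenomenon: after normalizing the decomposition (adjacent bags nested, all bags distinct, every subtree inducing a connected subgraph) and rooting it at a bag $\{a,b\}$, every bag in the head subtree is \emph{precut} on a canonical ordered pair, i.e.\ every $(a,b)$-partition either keeps that bag whole or cuts that specific pair in that specific orientation (Claims~\ref{lem:join-hereditary}--\ref{lem:head-subtree-is-precut}). This is what allows connectivity to be expressed by $2$-clauses over the vertex variables alone --- propagation clauses pushing a cut orientation down to all of $V[i]$, plus extra clauses for ``detached'' vertices in $3$-vertex bags (Claim~\ref{lem:detached}) --- and even then the converse direction (a satisfying assignment yields connected parts) needs a careful induction with several cases. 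None of this structure appears in your sketch, and your auxiliary-variable idea has a concrete obstacle you do not resolve: a variable meaning ``$u$ and $v$ are joined below node~$j$ on their common side'' is defined by an existential condition (a disjunction over children/paths), and the soundness direction ``if these vertices end up on the same side and can only meet below~$j$, then such a connection must exist'' is not naturally a $2$-clause.

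Your fallback dynamic programme is also not sound as stated. A state consisting only of the side of each bag vertex and the connectivity pattern among them cannot enforce colourfulness: with colour-multiplicity~$2$ the constraint couples pairs of like-coloured vertices that may never share a bag, so constant-size bag-local states cannot record which colours have already been placed on which side. Moreover, if such a constant-state DP were correct, the identical algorithm would run in polynomial time for every fixed treewidth, again contradicting the \NP-completeness at path-width~$3$. So both routes you offer stop exactly where the real proof begins, and the missing ingredient is the treewidth-$2$-specific precut analysis (or some equivalent structural argument).
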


\begin{proof}
Let~$(G,c)$ be a coloured graph on~$n$ vertices such that~$G$ has treewidth at most~$2$.
Without loss of generality we may assume that~$G$ is connected.
We may assume that~$G$ is not colourful, otherwise we are trivially done.
If~$G$ has treewidth~$1$, then it is a tree and we apply Theorem~\ref{t-trees-fpt-components}.
Hence we may assume that~$G$ has treewidth~$2$.
Let~$(T,{\cal X})$ be a tree decomposition of~$G$ of width~$2$ (so all bags of~${\cal X}$ have size at most~$3$).
We can obtain this tree decomposition in linear time~\cite{WC83}.
Let us state and then explain two properties that we may assume hold for $(T,{\cal X})$.
These properties, along with the main definitions necessary for our algorithm, are illustrated in \figurename~\ref{f-exex}.

\begin{enumerate}
\item\label{property:one}For any two adjacent nodes $i,j$ in~$T$, one of~$X_i$ and~$X_j$ strictly contains the other.
\item\label{property:two}All bags are pairwise distinct.
\end{enumerate}

\renewcommand{\floatpagefraction}{.8}%

\begin{figure}
	\centering

	\begin{tikzpicture}[scale=1.2, rotate=90]
\begin{scope}[yshift=63,xshift=142]
\begin{scope}[rotate=90]
	\node [minimum size=0pt] (Za) at (1,5.5) {} ;
	\node [minimum size=0pt] (Zd) at (1,3.65) {} ;
	\node [minimum size=0pt] (Ze) at (2,4) {} ;
	\node [minimum size=0pt] (Zf) at (1,2.8) {} ;
	\node [minimum size=0pt] (Zg) at (1,4.5) {} ;
	\node [minimum size=0pt] (Zh) at (2.3,3.55) {} ;
	\node [minimum size=0pt] (Zi) at (1.7,3.55) {} ;
	\node [minimum size=0pt] (Zl) at (1.3,3.5) {} ;
\end{scope}
\end{scope}
	\node [minimum size=0pt] (Zb) at (1,0) {} ;
	\node [minimum size=0pt] (Zc) at (0,1.4) {} ;
	\node [minimum size=0pt] (Zj) at (.85,1.4) {} ;
	\node [minimum size=0pt] (Zk) at (1.5,1.4) {} ;

	\path   (Za) -- (Zc)-- (Zb)-- (Zj)-- (Zf)-- (Zk)-- (Zb);
	\path (Ze)-- (Zh)-- (Zi)-- (Ze)--(Za)-- (Zg)-- (Zd)-- (Zl);
	\path (Zd)-- (Zf);
	\path (Za)-- (Zb);
	\begin{scope}[comp1]
	\path   (Za) -- (Zg)-- (Zd)-- (Zf);
	\path   (Za) -- (Ze)-- (Zh)-- (Zi) --(Ze);
	\path (Zd)-- (Zl);
	\end{scope}
	\begin{scope}[comp2]
	\path   (Zk) -- (Zb)-- (Zj);
	\path (Zb)-- (Zc);
	\end{scope}

\begin{scope}[yshift=63,xshift=142]
\begin{scope}[rotate=90]
	\node [label=below:$a$] (a) at (1,5.5) {} ;
	\node [label=right:$d$] (d) at (1,3.65) {} ;
	\node [label=220:$e$] (e) at (2,4) {} ;
	\node [label=above:$f$] (f) at (1,2.8) {} ;
	\node [label=right:$g$] (g) at (1,4.5) {} ;
	\node [label=above:$h$] (h) at (2.3,3.55) {} ;
	\node [label=above:$i$] (i) at (1.7,3.55) {} ;
	\node [label=above:$l$] (l) at (1.3,3.5) {} ;
\end{scope}
\end{scope}
	\node [label=right:$b$] (b) at (1,0) {} ;
	\node [label=below:$c$] (c) at (0,1.4) {} ;
	\node [label=below:$j$] (j) at (.85,1.4) {} ;
	\node [label=above:$k$] (k) at (1.5,1.4) {} ;
	\end{tikzpicture}

	\newcommand{\lowNode}[3] {
		node []
		{$#1\ifx\hfuzz#2\hfuzz{}\else\, #2\fi$\ifx\hfuzz#3\hfuzz{}\else \\ $#3$ \fi}
	}
	\newcommand{\cutNode}[3] {
		node [line width=1.3pt]
		{$\textcolor{colour1!80!black}{#1} \textcolor{colour2!80!black}{\,#2}$\ifx\hfuzz#3\hfuzz{}\else \\ $#3$ \fi}
	}

	\newcommand{\cutNodeR}[3] {
		\cutNode{#1}{#2}{#3}
	}
	\newcommand{\cutNodeB}[3] {
		\cutNode{#1}{#2}{#3}
	}
	\newcommand{\cutNodeb}[3] {
		\cutNode{#1}{#2}{#3}
	}

	\newcommand{\cutNoder}[3] {
		\cutNode{#1}{#2}{#3}
	}
	\newcommand{\niceTree}{%
		\begin{tikzpicture}[
		scale=.75,
		every node/.style={shape=rectangle, draw, align=center, minimum width=2em, rounded corners=2mm},
		sibling distance=3em,
		level 1/.style={sibling distance=8em},
		level 2/.style={sibling distance=8em},
		level 3/.style={sibling distance=8em},
		level 4/.style={sibling distance=6em},
		level 5/.style={sibling distance=4em}]

		\node[draw=none] {}
		\cutNode ab{}
		child[shorten >=-1pt,shorten <=-2pt] {\cutNodeb abc }
		child {\cutNoder abd
			child[shorten >=-1pt,shorten <=-2pt] {  \cutNode db{}
				child[shorten >=0pt,shorten <=0pt] {  \cutNoder dbf
					child[shorten >=-1pt,shorten <=-3pt] {  \cutNode fb{}
						child[shorten >=-1pt,shorten <=-1pt] {  \cutNodeB fbj }
						child[shorten >=-1pt,shorten <=-1pt] {  \cutNodeb fbk }
					}
					child[shorten >=-0.3pt,shorten <=-3pt] {\lowNode d{}{}
						child[shorten >=0pt,shorten <=0pt] {\lowNode dl{}}
					}
				}
			}
			child[shorten >=-1pt,shorten <=-2pt] {  \cutNodeR ad{}
				child[shorten >=0pt,shorten <=0pt] {  \cutNodeR adg
					child {  \cutNodeR ag{} }
				}
			}
		}
		child[shorten >=-1pt,shorten <=-2pt] { \cutNoder abe
			child[shorten >=0pt,shorten <=0pt] {\lowNode e{}{}
				child {\lowNode ehi
					child {\lowNode eh{}}
				}
			}
		};
		\end{tikzpicture}%
	}
	\niceTree%
	\renewcommand{\lowNode}[3] {
		node []
		{\textcolor{colour1!80!black}{$#1\ifx\hfuzz#2\hfuzz{}\else\, #2\fi$}\ifx\hfuzz#3\hfuzz{}\else \\ \textcolor{colour1!80!black}{$#3$} \fi}
	}%
	\renewcommand{\cutNodeR}[3] {
		node [line width=1.5pt]
		{\textcolor{colour1!80!black}{$#1\,#2$}\ifx\hfuzz#3\hfuzz{}\else \\ \textcolor{colour1!80!black}{$#3$} \fi}
	}%
	\renewcommand{\cutNodeB}[3] {
		node [line width=1.5pt]
		{\textcolor{colour1!80!black}{$#1$}\,\textcolor{colour2!80!black}{$#2$}\ifx\hfuzz#3\hfuzz{}\else \\ \textcolor{colour2!80!black}{$#3$ }\fi}
	}%
	\renewcommand{\cutNodeb}[3] {
		node [line width=1.5pt]
		{$\textcolor{colour1!80!black}{#1} \textcolor{colour2!80!black}{\,#2}$\ifx\hfuzz#3\hfuzz{}\else \\  \textcolor{colour2!80!black}{$#3$} \fi}
	}%
	\renewcommand{\cutNoder}[3] {
		node [line width=1.5pt]
		{$\textcolor{colour1!80!black}{#1} \textcolor{colour2!80!black}{\,#2}$\ifx\hfuzz#3\hfuzz{}\else \\ \textcolor{colour1!80!black}{$#3$} \fi}
	}%
	\hfill\niceTree
	\caption{\label{f-exex}Illustration of the polynomial-time algorithm for treewidth~$2$.
Top: an input graph (the vertex colouring is not represented), with a partition cutting~$(a,b)$, i.e.~$a$ is in one part, $b$ is in the other.
Bottom-left: a tree-decomposition of the graph satisfying Properties~\ref{property:one}, \ref{property:two} and~\ref{property:three}.
The bags in the head subtree are in bold.
Precuts computed as in Claim~\ref{lem:head-subtree-is-precut} are shown using colours on the top line of bags that have precuts.
For example, the bag $\{d,b,f\}$ is precut on~$(d,b)$, as can be verified directly in the graph (if~$d$,~$b$ and~$f$ are not all in the same part of an $(a,b)$-partition, then~$d$ must be in the same part as~$a$).
Here~$f$ is attached to~$b$, and~$e$ is attached to~$a$ but not to~$b$.
Bottom-right: the tree decomposition showing a possible output of the {\sc $2$-Satisfiability} formula, where the colours represent the values of each~$x_u$.}
\end{figure}

\noindent
In fact, what we will show is that if we find that $(T,{\cal X})$ does not have these two properties, then we can make simple changes to obtain a tree decomposition that does.
First, if there are two adjacent nodes~$i,j$ in~$T$ such that neither~$X_i$ nor~$X_j$ contains the other, we remove the edge from~$ij$ from~$T$, create a new node~$k$ that is adjacent to both~$i$ and~$j$ and let $X_k=X_i\cap X_j$.
Note that $X_k \neq \emptyset$, as~$G$ is connected.
We now have a tree decomposition that has Property~\ref{property:one} unless $X_i=X_j$ for a pair of adjacent nodes~$i$ and~$j$.
But now for \emph{every} pair of identical bags~$X_i$ and~$X_j$, we delete~$j$ and make each of its neighbours adjacent to~$i$ and so obtain a tree decomposition with Properties~\ref{property:one} and~\ref{property:two}.

\medskip
\noindent Let~$a$ and~$b$ be a fixed pair of adjacent vertices in~$G$.
Almost all of the remainder of this proof is concerned with describing an algorithm that decides whether or not~$G$ has a colourful partition of size~$2$ in which~$a$ and~$b$ belong to different parts.
Clearly such an algorithm suffices: we can apply it to each of the~$O(n^2)$ pairs of adjacent vertices in~$G$ to determine whether~$G$ has \emph{any} colourful partition of size~$2$ (since such a partition must separate at least one pair of adjacent vertices).

As~$a$ and~$b$ are adjacent, there exists at least one bag that contains both of them.
We may assume that this bag is $X_0=\{a,b\}$; otherwise we simply add a new node~$0$ to~$T$ with~$X_0=\{a,b\}$ and make~$0$ adjacent to~$i$ such that~$X_i$ is a (larger) bag that contains both~$a$ and~$b$.

We now orient all edges of~$T$ away from~$0$ and think of~$T$ as being rooted at~$0$.
We write $i\rightarrow j$ to denote an edge oriented from~$i$ to~$j$.
If $i\rightarrow j$ is present in~$T$, then~$i$ is the {\em parent} of~$j$ and~$j$ is a {\em child} of~$i$.
The \emph{head subtree} of~$T$ is the subtree obtained by removing all $1$-vertex bags along with all their descendants.
For any oriented edge $i\rightarrow j$ in~$T$, we have either~$X_i \subsetneq X_j$ or~$X_j\subsetneq X_i$ by Property~\ref{property:one}.
The tree~$T[i]$ denotes the subtree of~$T$ rooted at~$i$; in particular $T=T[0]$.
The set~$V[i]$ denotes $\bigcup_{j \in V(T[i])} X_j$.
Finally, as we shall explain, we may assume the following property.

\begin{enumerate}
\setcounter{enumi}{2}
\item\label{property:three}For any node~$i$ of $T$, the subgraph~$G[V[i]]$ induced by~$V[i]$ is connected.
\end{enumerate}

\noindent
It is possible that we must again modify the tree decomposition to obtain this property.
Suppose that it does not hold for some node~$i$.
That is, the vertices of~$V[i]$ can be divided into two sets~$U$ and~$W$ such that there is no edge from~$U$ to~$W$ in~$G$.
We create two trees~$T_{i,U}$ and~$T_{i,W}$ that are isomorphic to~$T[i]$: for each vertex~$j$ in~$T[i]$, we let~$j_U$ and~$j_W$ be the corresponding nodes in~$T_{i,U}$ and~$T_{i,W}$ respectively and let $X_{j_U} = X_j \cap U$, $X_{j_W} = X_j \cap W$.
Then the tree decomposition is modified by replacing~$T[i]$ by~$T_{i,U}$ and~$T_{i,W}$ and making each of~$i_U$ and~$i_W$ adjacent to the parent of~$i$.
(The vertex~$i$ certainly has a parent since $i=0$ would imply that~$G$ is not connected.)
If at any point we create a node whose associated bag is empty  or identical to that of its parent, we delete it and make its children (if it has any) adjacent to its parent.
In this way we obtain a decomposition that now satisfies each of Properties~\ref{property:one}, \ref{property:two} and~\ref{property:three}.

\medskip
\noindent
We say that a colourful partition $P=(V_1,V_2)$ of~$G$ is an {\em $(a,b)$-partition} if $a\in V_1$ and $b\in V_2$ and we say that~$P$ \emph{cuts} a pair~$(u_1,u_2)$ if
$u_1\in V_1$ and $u_2\in V_2$.
We emphasize that the order is important.
A colourful partition~$P$ \emph{respects} a bag~$X$ if~$X\subseteq V_1$ or~$X\subseteq V_2$.
Note that every colourful partition respects all $1$-vertex bags.
Let~$X$ be a bag that contains vertices~$u$ and~$v$.
Then~$X$ is \emph{precut} on~$(u,v)$ if every $(a,b)$-partition either cuts~$(u,v)$ or respects~$X$.
Note that~$X_0$ is precut on~$(a,b)$ by definition.
We make three structural claims.

\clm{\label{lem:join-hereditary}If an $(a,b)$-partition~$P$ respects a bag~$X_i$, then it respects every bag~$X_j$ such that $i\rightarrow j$.}

\medskip
\noindent
We prove Claim~\ref{lem:join-hereditary} as follows.
Consider the set~$\mathcal C$ of bags that are not respected by~$P$.
Then~$\mathcal C$ is the intersection of the set of bags containing at least one vertex from~$V_1$ and the set of bags containing at least one vertex from~$V_2$.
Since both~$V_1$ and~$V_2$ are connected in~$G$, the set of nodes whose bags contain at least one vertex from~$V_1$ and the set of nodes whose bags contain at least one vertex from~$V_2$ induce subtrees of~$T$.
Hence their intersection, $\mathcal C$, is a set of nodes that also induce a subtree of~$T$.
Since~$(a,b)$ is not respected by~$P$, we know $0\in\mathcal C$.
This means that if~$i$ is not in~$\mathcal C$, then for every other vertex $j \in T[i]$, $j$ cannot be in~$\mathcal C$.\dia

\clm{\label{lem:precut-X2-X3}Let $i\rightarrow j$ be an oriented edge of~$T$ with $X_i=\{u,v\}$ and $|X_j|=3$.
If~$X_i$ is precut on~$(u,v)$, then~$X_j$ is precut on~$(u,v)$.}

\medskip
\noindent
We prove Claim~\ref{lem:precut-X2-X3} as follows.
By Property~\ref{property:one}, we know that $X_i\subsetneq X_j$, so~$X_j$ contains~$u$ and~$v$.
Suppose~$(u,v)$ is a precut on~$X_i$ and let~$P$ be an $(a,b)$-partition.
If~$P$ cuts~$(u,v)$ then we are done.
Otherwise, $P$ respects $X_i=\{u,v\}$, so by Claim~\ref{lem:join-hereditary}, $P$ also respects~$X_j$.
Thus~$(u,v)$ is a precut on~$X_j$.\dia

\clm{\label{lem:precut-X3-X2}Let $i\rightarrow j$ be an oriented edge of~$T$ with $X_i=\{u,v,w\}$ and $|X_j|=2$ such that~$X_i$ is precut on~$(u,v)$.
If $X_j=\{v,w\}$, then~$X_j$ is precut on~$(w,v)$.
If $X_j=\{u,w\}$, then~$X_j$ is precut on~$(u,w)$.}

\medskip
\noindent
We prove Claim~\ref{lem:precut-X3-X2} as follows.
Let $P=(V_1,V_2)$ be an $(a,b)$-partition of~$G$.
Suppose~$P$ does not respect~$X_j$.
Then if~$Y$ contains~$w$, we must have that~$w$ is not in the same part of the partition as the other vertex --- either~$u$ or~$v$ --- of~$X_j$.
By Claim~\ref{lem:join-hereditary}, $P$ does not respect~$X_i$ either, so we know $u\in V_1$ and $v\in V_2$.
Thus if $X_j=\{v,w\}$, then $w \in V_1$, and if $X_j=\{u,w\}$, then $w \in V_2$.
\dia

\medskip
\noindent
By Claim~\ref{lem:join-hereditary} and the fact that $1$-vertex bags are respected, for every node~$i$ not in the head subtree of~$T$, $X_i$ is respected by every $(a,b)$-partition of~$G$. We now show the following claim.

\clm{\label{lem:head-subtree-is-precut}For every node~$i$ in the head subtree, $X_i$ is precut on some pair of its vertices,
and these precuts can be computed in linear time.}

\medskip
\noindent
We prove that Claim~\ref{lem:head-subtree-is-precut} holds by proving a slightly stronger statement: for all~$d$, for each node~$j$ in the head subtree at distance~$d$ from~$0$, $X_j$ is precut on a pair of its vertices, and if~$X_j$ contains three vertices, then it is precut on some pair~$(u,v)$ such that $X_i=\{u,v\}$ where~$i$ is the parent of~$j$.
We prove this by induction on~$d$.
The base case holds as~$X_0$ is precut on~$(a,b)$.
For the inductive case, suppose that~$j$ is some node at distance $d>0$ with parent~$i$.
As~$i$ and~$j$ are in the head subtree, each of~$X_i$ and~$X_j$ has either two or three vertices.
Suppose first that $X_j=\{u,v,w\}$, and then we can assume that~$X_i$ is, say, $\{u,v\}$.
(Recall that $X_i\subsetneq X_j$ or $X_i\subsetneq X_j$ by Property~\ref{property:one}.)
By the induction hypothesis, we know that~$X_i$ is precut on~$(u,v)$ or on~$(v,u)$ and so, by Claim~\ref{lem:precut-X2-X3}, $X_j$ is precut in the same way.
Now suppose that~$X_j$ has two vertices and $X_i=\{u,v,w\}$.
Since~$X_0$ has two vertices, it follows that $X_i \neq X_0$, so~$i$ has a parent~$h$, and we can suppose that $X_h=\{u,v\}$.
By the induction hypothesis, without loss of generality we may assume~$X_h$ is precut on~$(u,v)$, so~$X_i$ is precut on~$(u,v)$ by the same argument as above.
Then, as~${\cal X}$ has no identical bags by Property~\ref{property:two}, we find that $X_j=\{u,w\}$ or $X_j=\{v,w\}$.
It follows from Claim~\ref{lem:precut-X3-X2} that~$X_j$ is precut.~\dia

\medskip
\noindent
We say that two vertices~$u$ and~$v$ of~$G$ are \emph{attached} if they are adjacent or~${\cal X}$ contains the bag~$\{u,v\}$.
If~$u$ and~$v$ are not attached, they are \emph{detached}.

\clm{\label{lem:detached}Let $X_i=\{u,v,w\}$ be a $3$-vertex bag precut on~$(u,v)$.
If~$w$ and~$u$ (or~$w$ and~$v$) are detached, then in every $(a,b)$-partition of~$G$ the vertices~$w$ and~$v$ (respectively~$w$ and~$u$) are in the same colourful set.}

\medskip
\noindent
We prove Claim~\ref{lem:detached} as follows.
Let $P=(V_1,V_2)$ be an $(a,b)$-partition.
If~$P$ respects~$X_i$, then~$w$ is in the same colourful set as both~$u$ and~$v$.
Thus we may assume that~$P$ does not respect~$X_i$ and so $u\in V_1$ and $v\in V_2$, since~$X_i$ is precut on~$(u,v)$.
We may assume without loss of generality that~$w$ is detached from~$u$ and so we must show that $w\in V_2$.

Let us assume instead that $w\in V_1$ and derive a contradiction.
By the connectivity of~$G[V_1]$, there exists an induced path~$Q$ on~$\ell$ edges in~$G[V_1]$ that connects~$u$ to~$w$.
As~$u$ and~$w$ are detached, they are not adjacent, so~$\ell$ is at least~$2$.
No internal vertex of~$Q$ is in~$X_i$ since~$u$ and~$w$ are its end-vertices and~$v$ is not in~$V_1$.
Let $X_{i_1},\ldots,X_{i_\ell}$ be bags of~${\cal X}$ such that~$X_{i_j}$ is a bag that contains the pair of vertices joined by the $j$th edge of~$Q$.
We take a walk in~$T$ from~$i_1$ to~$i_\ell$ by stitching together paths from~$i_j$ to~$i_{j+1}$, $1 \leq j \leq \ell-1$.
As~$X_{i_j}$ and~$X_{i_{j+1}}$ correspond to incident edges of~$Q$, and both~$X_{i_j}$ and~$X_{i_{j+1}}$ contain the internal vertex of~$Q$ incident with both these edges, every bag of the nodes along the path between them also contains this vertex.
Thus every bag of the nodes along our walk from~$i_1$ to~$i_\ell$ contains an internal vertex of~$Q$ and so none of these bags is~$X_i$.
Therefore our walk must be contained within a connected component of $T \setminus i$.
Let~$j$ be the node of this component adjacent to~$i$ in~$T$.
We note that~$X_i$ contains~$u$ and~$w$, $X_{i_1}$ contains~$u$, $X_{i_\ell}$ contains~$w$, and the paths from~$i$ to each of~$i_1$ and~$i_\ell$ go through~$j$.
Thus~$X_j$ must contain~$u$ and~$w$, and so, by Property~\ref{property:one}, it follows that $X_j=\{u,w\}$.
As~$u$ and~$w$ are detached, we have our contradiction.\dia

\medskip
\noindent
We now build~$\phi$, an instance of {\sc $2$-Satisfiability}, that will help us to find an $(a,b)$-partition of~$G$ (if one exists).
For each vertex~$u$ of~$G$, we create a variable~$x_u$ (understood as ``$u\in V_1$'').
We add clauses (or pairs of clauses) on two variables that are equivalent to the following statements:
\begin{align}
x_a&=\True \label{eq:init-a} \\
x_b&=\False \label{eq:init-b} \\
x_u&\neq x_{v} \text{ if~$u$ and~$v$ have the same colour} \label{eq:colours}
\end{align}

For any bag~$X_i$ precut on some pair~$(u,v)$,
\begin{align}
x_u &\vee \neg x_v \label{eq:u-or-v}\\
\neg x_u &\Rightarrow \neg x_w\quad \forall w\in V[i] \setminus \{u\}\label{eq:propagate-u}\\
x_v &\Rightarrow x_w \quad \forall w\in V[i]\setminus \{v\}\label{eq:propagate-v}
\end{align}

For each bag $X_i=\{u\}$ of size~$1$,
\begin{align}
x_w &= x_u \quad \forall w\in V[i]\setminus\{u\}\label{eq:propagate-single}
\end{align}

For each bag $\{u,v,w\}$ of size~$3$ precut on~$(u,v)$
\begin{align}
x_w &= x_u \text{ if~$w$ is detached from~$v$} \label{eq:detach-v}\\
x_w &= x_v \text{ if~$w$ is detached from~$u$} \label{eq:detach-u}
\end{align}

\noindent
We claim that~$G$ has an $(a,b)$-partition if and only if~$\phi$ is satisfiable.

\medskip
\noindent
First suppose that~$G$ has an $(a,b)$-partition~$P$.
For each vertex~$u$ in~$G$, set $x_u=\True$ if $u\in V_1$, and~$x_u=\False$ otherwise.
As~$P$ cuts $(a,b)$, Statements~\eqref{eq:init-a} and~\eqref{eq:init-b} are satisfied.
As~$P$ is colourful, Statement~\eqref{eq:colours} is satisfied.

Consider a bag~$X_i$ precut on~$(u,v)$.
Two cases are possible: either~$P$ cuts~$(u,v)$, or~$X$ is respected by~$P$.
In the first case we have $x_u=\True$ and $x_v=\False$, which is enough to satisfy Statements~\eqref{eq:u-or-v}, \eqref{eq:propagate-u} and~\eqref{eq:propagate-v}.
In the second case, $x_u=x_v$ (which satisfies Statement~\eqref{eq:u-or-v}), and by Claim~\ref{lem:join-hereditary}, for each $j$ in~$T[i]$, $X_j$ is respected by~$P$.
Thus all vertices in~$V[i]$ are in the same subset~$V_1$ or~$V_2$, hence they have the same value of~$x_w$, which satisfies Statements~\eqref{eq:propagate-u} and~\eqref{eq:propagate-v}.

For Statement~\eqref{eq:propagate-single}, note that each bag~$X_i$ of size~$1$, and, by Claim~\ref{lem:join-hereditary}, every bag~$X_j$ such that~$j$ is a descendent of~$i$ is always respected, so again all values of~$x_w$ for $w\in V[i]$ are identical.
Finally, Statements~\eqref{eq:detach-v} and~\eqref{eq:detach-u} follow from Claim~\ref{lem:detached}.

\medskip
\noindent
Now suppose that~$\phi$ is satisfiable.
We construct two disjoint vertex sets~$V_1$ and~$V_2$ by putting every vertex~$u$ with $x_u=\True$ in~$V_1$ and every vertex with $x_u=\False$ in~$V_2$.
We claim that~$(V_1,V_2)$ is an $(a,b)$-partition of~$G$. By Statements~\eqref{eq:init-a} and~\eqref{eq:init-b}, $a \in V_1$ and $b \in V_2$.
By Statement~\eqref{eq:colours}, two vertices of the same colour cannot belong to the same part, hence it remains to prove that both sets are connected.

In fact, we shall prove by induction on~$d$, that for the set of vertices belonging to bags of nodes at distance at most~$d$ from~$0$ in~$T$, the two subsets found by dividing the set according to membership of~$V_1$ or~$V_2$ each induce a connected subgraph of~$G$.
For the base case, we consider $X_0 =\{a,b\}$ and the two subsets contain a single vertex so we are done.
For the inductive case, we consider a node~$j$ at distance $d>0$ from~$0$.
Let~$i$ be the parent of~$j$ in~$T$.
It is enough to show that any vertex in $X_j \setminus X_i$ is in the same component of~$G[V_1]$ or~$G[V_2]$ as a vertex of~$X_i$ assigned to the same set ($V_1$ or~$V_2$).
Let us first assume that~$X_j$ is in the head subtree.
If $|X_j|=2$, then $X_j \subsetneq X_i$ so we may assume that $X_j=\{u,v,w\}$ and also that $X_i =\{u,v\}$, and~$X_i$ and~$X_j$ are both precut on~$(u,v)$ (using Claims~\ref{lem:precut-X2-X3} and~\ref{lem:head-subtree-is-precut}).
We distinguish three cases.

\case{\label{case:one}$x_u=x_v=\True$.}\\
Considering Statement~\eqref{eq:propagate-v} for bag~$X_i$, we have $x_w=\True$, as well as $x_{w'}=\True$ for every vertex $w'\in V[i]\setminus X_i$.
Hence $V[i]\subseteq V_1$.
By Property~\ref{property:three}, $G[V[X]]$ is connected, so~$w$ is in the same component of~$G[V_1]$ as~$v$.

\case{\label{case:two}$x_u=x_v=\False$.}\\
Symmetrically to Case~\ref{case:one}, we can show that~$w$ is in the same component of~$G[V_2]$ as~$u$ .

\case{\label{case:three}$x_u\neq x_v$.}\\
By Statement~\eqref{eq:u-or-v}, in this case we have $x_u=\True$ and $x_v=\False$.
Suppose that $x_w=\True$ (the case for $x_w=\False$ again follows symmetrically).
Since $x_w\neq x_v$, by Statement~\eqref{eq:detach-u}, $w$ must be attached to~$u$.
If there is an edge from~$u$ to~$w$, then we are done.
If there is no such edge, then~${\cal X}$ has a bag $X_k=\{u,w\}$, and~$k$ must be a child of~$j$.
As~$k$ is in the head subtree, $X_k$ is precut on~$(u,w)$ or~$(w,u)$ by Claim~\ref{lem:head-subtree-is-precut}.
By Statements~\eqref{eq:propagate-u} and~\eqref{eq:propagate-v}, since $x_u=x_w=\True$, $x_{w'}=\True$ for every $w'\in V[k]\setminus \{u,w\}$ and thus $V[k]\subseteq V_1$.
Since~$G[V[k]]$ is connected and contains both~$u$ and~$w$, we are done.

\medskip
\noindent
Now we consider the case that~$X_j$ is not in the head subtree.
Thus~$X_j$ is in a subtree that has at its root a bag containing a single vertex~$u$ and, by Statement~\eqref{eq:propagate-single}, every vertex in the bags of the subtree are in the same subset~$V_1$ or~$V_2$ as~$u$ and these vertices together induce a connected subgraph.
Thus, as~$u$ is also in the parent of the root of the subtree (since~$G$ is connected, the parent is not empty), we are done.

\medskip
\noindent
Thus we have a polynomial-time algorithm to decide {\sc $2$-Colourful Partition} on instance~$(G,c)$.
First we compute in polynomial time a tree-decomposition~$(T,{\cal X})$ of~$G$ with Properties~\ref{property:one}, \ref{property:two} and~\ref{property:three}.
Then for every pair of adjacent vertices~$a$ and~$b$ in~$G$, we check whether there is a colourful $(a,b)$-partition of size~$2$ in polynomial time using the corresponding {\sc $2$-Satisfiability} formula~$\phi$.
\end{proof}

\section{Parameterized Complexity}\label{s-fpt}

The number of colourful components, maximum degree, number of colours, colour-multiplicity, path-width, and treewidth are all natural parameters for {\sc Colourful Partition}.
However, as a consequence of our results of the previous section, the only sensible combination of these to consider as a parameter is
the treewidth (or path-width) plus the number of colours. This can be seen as follows.

We first observe that if an instance of {\sc $k$-Colourful Partition} has colour-multiplicity more than~$k$, then it is a no-instance.
Therefore, if {\sc $2$-Colourful Partition} is \NP-complete for some class~${\cal C}$ of coloured graphs, then it must be \NP-complete for the subclass consisting of all coloured graphs in~${\cal C}$ with colour-multiplicity at most~$2$.
Now let~$X$ be a combination of the above six parameters.
Theorem~\ref{t-pathwidth}, combined with the above observation, implies that {\sc Colourful Partition} is para-\NP-complete when parameterized by~$X$ if~$X$ is any subset (including the whole set) of the number of colourful components, maximum degree, colour-multiplicity, path-width and treewidth.
Hence, we may assume that~$X$ contains the number of colours.
If~$X$ also contains the number of colourful components or the colour-multiplicity, then {\sc Colourful Partition} is trivially \FPT, as the size of the input is bounded.
Assume that~$X$ contains neither the number of colourful components nor the colour-multiplicity.
If~$X$ contains neither the path-width nor the treewidth, then {\sc Colourful Partition} is para-\NP-complete due to Theorem~\ref{t-33}.
Hence, we are indeed left to consider the case where $X$ contains the number of colours and the path-width or treewidth. 
For this case we are able to prove the following result.

\begin{theorem}\label{t-treewidth-numberofcolours}
{\sc Colourful Partition} and {\sc Colourful Components} are \FPT\ when parameterized by the number of colours plus the treewidth.
\end{theorem}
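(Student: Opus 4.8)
The plan is to design a dynamic programming algorithm over a nice tree decomposition, exploiting two facts: with only $\ell$ colours every colourful part has at most $\ell$ vertices, and with treewidth $w$ every bag meets at most $w+1$ of the (possibly unboundedly many) parts of a colourful partition. First I would take a tree decomposition of $G$ of width $w=\tw(G)$ (computable in FPT time), turn it into a nice tree decomposition $(T,{\cal X})$ by the result of Kloks~\cite{Kl94}, and further refine it so that every edge of~$G$ is ``introduced'' at its own dedicated node (a standard augmentation, so that at a vertex-introduce node the new vertex is still isolated). I focus on {\sc Colourful Partition}; {\sc Colourful Components} is handled by an easier variant below. The state stored at a node~$i$ will be a triple $(\rho,\sigma,\mathrm{col})$: $\rho$ is a partition of the bag~$X_i$, whose blocks are meant to be the traces on $X_i$ of the connected components of $G[V_j\cap V[i]]$ over the parts~$V_j$ of a partial colourful partition of~$V[i]$; $\sigma$ is a partition of the blocks of~$\rho$, recording which of these blocks are destined to end up in a common part; and $\mathrm{col}$ assigns to each class of~$\sigma$ the set of colours already used by the vertices of the corresponding part inside~$V[i]$. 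For each state I store $f_i(\rho,\sigma,\mathrm{col})$, the minimum possible number of parts that are already entirely contained in~$V[i]$ (``finished'' parts, which by the maintained invariants are connected and colourful). Since $|X_i|\le w+1$, the numbers of choices for $\rho$ and~$\sigma$ are bounded by a function of~$w$, and $\mathrm{col}$ takes at most $2^{\ell(w+1)}$ values, so each table has size bounded in terms of~$w$ and~$\ell$ only, which yields the FPT bound.

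Next I would spell out the transitions. A leaf node has the single empty state with value~$0$. At an introduce-vertex node for~$v$, the new vertex forms a singleton block which either starts a new part or joins an existing part~$V_j$; the latter is allowed only if $c(v)\notin\mathrm{col}(V_j)$, in which case $c(v)$ is added to $\mathrm{col}(V_j)$. At an introduce-edge node for~$\{u,v\}$ we merge the blocks of~$u$ and~$v$ if they currently lie in the same class of~$\sigma$, and do nothing otherwise (the colour sets are unchanged). At a forget-vertex node for~$v$ we remove~$v$ from its block; if that block becomes empty while another block of its $\sigma$-class still meets the bag, the partial solution is invalid and is discarded (that component can no longer be joined to the rest of its part), and if instead the whole part has now left the bag we mark it finished, increase the stored value by one and drop it from the state. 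At a join node we combine a state of each child that induce the same partition of~$X_i$ into $\sigma$-classes: the new~$\rho$ is obtained from $\rho'$ and $\rho''$ by repeatedly merging blocks that overlap in one of them, the colour set of each part becomes the union of the two children's colour sets provided these meet only in the colours of the $X_i$-vertices of that part (otherwise the combination is rejected), and the new value is the sum of the two children's values. A routine induction on~$T$ shows that $f_i$ is computed correctly; since the root bag is empty, the corresponding entry equals the minimum size of a colourful partition of~$G$, and we compare it with~$k$.

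For {\sc Colourful Components} the same scheme works but is simpler. Because a connected component of $G-F$ that is contained in a colourful set is itself colourful, the minimum number of edge deletions equals the minimum, over all partitions of $V(G)$ into colourful (not necessarily connected) sets, of the number of edges joining distinct parts; this is the value we compute. Hence we may drop~$\sigma$ and all connectivity bookkeeping and keep only a partition~$\rho$ of~$X_i$ together with the colour set of each block, with the same introduce/forget rules (a part is never ``invalid'' now, and forgetting the last bag-vertex of a part simply drops it). The stored quantity is the number of edges deleted so far, which increases by one exactly at each introduce-edge node whose two endpoints lie in different blocks, and the answer is whether some root entry is at most~$p$.

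The main obstacle, and the only place where the colour bound is genuinely needed for {\sc Colourful Partition}, is this connectivity bookkeeping: since a colourful partition may have many parts we cannot label bag-vertices by part identifiers, and since a single part may meet a bag in several pieces that get connected only through vertices lying outside the current subtree, we are forced to carry both~$\sigma$ and the per-part colour sets in the state. Verifying that the join and forget rules correctly preserve the two invariants --- ``every piece of a still-live part meets the current bag'' and ``the colour set attached to a part is exactly the set of colours seen so far in that part, with no repetition'' --- is the technical heart of the proof; the remaining cases are routine.
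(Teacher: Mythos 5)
Your proposal is correct, and its overall strategy is the paper's: a dynamic program over a nice tree decomposition whose states consist of a partition of the bag together with a subset of colours attached to each piece, giving a table size bounded in terms of $\tw(G)$ plus the number of colours; for {\sc Colourful Components} you even use exactly the paper's reformulation (minimise, over partitions of $V(G)$ into colourful but not necessarily connected sets, the number of edges joining distinct parts), and your edge-introduce variant just relocates the paper's edge-counting (done there at introduce and join nodes, with a $-e_P$ correction) to dedicated edge nodes. Where you genuinely diverge is the connectivity bookkeeping for {\sc Colourful Partition}. The paper's state is only the pair $(P,\rho)$, where $P$ is the trace on the bag of the connected colourful components realised so far in $G_i$ and the stored value is the total number of such components; pieces of a future part are \emph{not} grouped in advance, and they are merged only when a connecting vertex is actually introduced (or at a join, when blocks share bag vertices), with the component count corrected by $+1-|R|$ and $|P''|-|P|-|P'|$ respectively. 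Your version instead adds the partition $\sigma$ that commits in advance which blocks are destined to lie in a common part, and counts only parts that have completely left the bag. Both are sound: the paper's state space is a projection of yours (hence a bit smaller, and it never has to discard ``broken promises''), while yours is the more textbook Steiner-style connectivity DP and has the advantage of making the invariants you name at the end explicit, which is where the correctness argument really lives. One small wording point: your ``finished'' parts should be those disjoint from the current bag $X_i$ (as your forget rule in fact implements), not merely those contained in $V[i]$, since every part of a partial solution is trivially contained in $V[i]$.
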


\begin{proof}
Let~$(G,c)$ be a coloured graph and let~$C$ be the set of colours used on~$G$.
We may use the algorithm of~\cite{Bo96} to obtain a tree decomposition of the input graph~$G$ in \FPT\ time.
We then convert it to a nice tree decomposition~$T$ using the aforementioned result of Kloks~\cite{Kl94}.

Let~$r$ denote the root of~$T$.
For any node $i\in V(T)$, let~$T_i$ denote the subtree of~$T$ induced by~$i$ and its descendants and let $G_i=G[\bigcup_{j\in V(T_i)}X_j]$.
For both problems, we apply a dynamic programming algorithm over~$(T,{\cal X})$.

\medskip
\noindent
First, we describe the tables that are constructed for the nodes of~$T$ when solving the {\sc Colourful Partition} problem.
Let $i\in V(T)$.
We define~${\bf table}_i$ as a partial function whose inputs are ordered pairs~$(P,\rho)$ where
\begin{itemize}
\item $P$ is a partition of~$X_i$, and
\item $\rho: P \to {\cal{P}}(C)$ is a function assigning a set of colours to each set in~$P$.
\end{itemize}

\noindent
Let~$Q$ be a partition of the vertices of~$G_i$ into colourful components.
We say that~$Q$ {\em induces} the partition~$P$ on~$X_i$ and the function $\rho: P \to {\cal{P}}(C)$ if both the following conditions hold:
\begin{itemize}
\item Two elements of~$X_i$ are in the same set in~$P$ if and only if they are in the same set in~$Q$.
\item For~$D \in P$, let $D' \in Q$ be the colourful component that contains the vertices of~$D$.
Then $\rho(D) = \{c(v) \;|\; v \in D' \setminus D\}$.
\end{itemize}

\noindent
For a pair $(P,\rho)$, the value of ${\bf table}_i(P,\rho)$ will be the minimum possible number of colourful components in a colourful partition~$Q$ of~$G_i$ among all such partitions~$Q$ that induce $(P,\rho)$; if no such partition exists then ${\bf table}_i(P,\rho)$ is void.

Recall that $X_r=\emptyset$.
Therefore ${\bf table}_r(\emptyset,\varnothing)$ is the minimum number of colourful components for which~$G$ has a colourful partition.

\newcommand{\improves}{\leftarrowtail}

Now we explain how we construct~${\bf table}_i$ for each $i\in V(T)$.
In what follows we write ${\bf table}_i(P,\rho) \improves t$ to refer to the following procedure:
If ${\bf table}_i(P,\rho)$ is undefined, set it to be equal to~$t$.
If ${\bf table}_i(P,\rho) = t'>t$ then set ${\bf table}_i(P,\rho)=t$, otherwise, leave ${\bf table}_i(P,\rho)$ unchanged.

If~$i$ is a {\em leaf} node, ${\bf table}_i$ is constructed in a straightforward way because $X_i=\emptyset$, so we set ${\bf table}_i(\emptyset,\varnothing) \improves 0$ and all other entries in ${\bf table}_i$ remain void.

If~$i$ is an {\em introduce} node, let~$j$ be the unique child node of~$i$ and let $\{v\}=X_i \setminus X_j$.
For every pair $(P,\rho)$ such that ${\bf table}_j(P,\rho)$ is not void, we do the following.
Consider every possible subset of sets $R \subseteq P$ such that~$v$ has at least one neighbour in every set of~$R$ and $Y_v=\{v\}\cup \bigcup R$ is colourful (note that~$R$ may be empty).
If the sets in $\{\rho(Y) \;|\; Y \in R\} \cup \{\{c(u)\;|\; u\in Y_v\}\}$ are not pairwise disjoint, then disregard this choice of~$R$. 
Otherwise, let $P'=P\setminus R \cup \{Y_v\}$ and $\rho':P' \to {\cal P}(C)$ such that $\rho'(Y)=\rho(Y)$ for $Y \in P \setminus R$ and $\rho'(Y_v)=\bigcup_{Y \in R}\rho(Y)$.
Apply ${\bf table}_i(P',\rho')\improves {\bf table}_j(P,\rho)+1-|R|$.

If~$i$ is a {\em forget} node, let~$j$ be the unique child node of~$i$ and let $\{v\}=X_j \setminus X_i$.
For every pair $(P,\rho)$ such that ${\bf table}_j(P,\rho)$ is not void, let~$P'$ be the partition~$P$ restricted to~$X_i$ (that is, if $\{v\} \in P$ then delete this set from~$P$ and otherwise, remove~$v$ from the set in~$P$ that contains it) and let $\rho':P' \to {\cal P}(C)$ be the function that takes the value $\rho'(Y)=\rho(Y)$ if $Y \in P\cap P'$ and $\rho'(Y)=\rho(Y\cup\{c(v)\})$ otherwise.
Apply ${\bf table}_i(P',\rho')\improves {\bf table}_j(P,\rho)$.

If~$i$ is a {\em join} node, let~$j$ and~$k$ be the two child nodes of~$i$.
For every pair of pairs $(P,\rho)$, $(P',\rho')$ such that ${\bf table}_j(P,\rho)$ and ${\bf table}_k(P',\rho')$ are not void, we do the following.
Recall that~$P$ and~$P'$ are partitions of $X_i=X_j=X_k$.
We construct a partition~$P''$ and a function~$\rho''$ as follows:
Start by setting $P''=P$.
If two sets $P_1,P_2 \in P''$ contain vertices $v_1,v_2$, respectively, such that~$v_1$ and~$v_2$ are in the same set of~$P'$, then replace~$P_1$ and~$P_2$ by $P_1\cup P_2$ in~$P''$.
Repeat this process exhaustively and note that the resulting partition~$P''$ is a coarsening of both~$P$ and~$P'$.
Consider an element $Y \in P''$.
If~$Y$ is not colourful, then we discard this pair of pairs $(P,\rho)$, $(P',\rho')$, and consider the next pair.
Since~$P''$ is a coarsening of~$P$ and~$P'$, there must exist sets $P_1,\ldots,P_a \in P$ and $P_1',\ldots,P_b' \in P'$ such that $P_1\cup\cdots\cup P_a=P_1'\cup\cdots\cup P_b'=Y$.
Let~$c(Y)$ denote the set of colours used on~$Y$.
If the sets $c(Y),\rho(P_1),\ldots,\rho(P_a),\rho'(P_1'),\ldots,\rho'(P_b')$ are not pairwise disjoint, then we discard this pair of pairs $(P,\rho)$, $(P',\rho')$, and consider the next pair.
Otherwise, set $\rho''(Y)=\rho(P_1)\cup\cdots\cup\rho(P_a)\cup\rho'(P_1')\cup\cdots\cup\rho'(P_b')$.
Apply ${\bf table}_i(P'',\rho'')\improves {\bf table}_j(P,\rho)+{\bf table}_k(P',\rho')-|P|-|P'|+|P''|$.

It is easy to verify that the above procedure will complete ${\bf table}_i(P,\rho)$ correctly and we can thus obtain the size of an optimal solution to {\sc Colourful Partition}.
Note that for $i \in T$, the set~$X_i$ contains at most $\tw(G)+1$ elements and so the number of partitions~$P$ of~$X_i$ that we need to consider is bounded above by $(\tw(G)+1)^{\tw(G)+1}$, that is, a function of the treewidth of~$G$.
For each such partition~$P$, there are at most $2^{|C||P|}$ functions $\rho: P \to {\cal P}(C)$ that we need to consider.
Thus, for each~$i$, the number of pairs $(P,\rho)$ that need to be considered is bounded by a function of~$\tw(G)$ and~$|C|$.
Thus, for every~$i$, ${\bf table}_i$ can be completed in \FPT-time.
Since the number of nodes in~$T$ is~$O(n)$, it follows that {\sc Colourful Partition} can be solved in \FPT-time parameterized by $\tw(G)+|C|$.
This completes the proof for the {\sc Colourful Partition} problem.

\medskip
\noindent
We now describe the tables that are constructed for the nodes of~$T$ when solving the {\sc Colourful Components} problem.
Let $i\in V(T)$.
We define~${\bf table}'_i$ as a partial function whose inputs are ordered pairs~$(P,\rho)$ where
\begin{itemize}
\item $P$ is a partition of~$X_i$, and
\item $\rho: P \to {\cal{P}}(C)$ is a function assigning a set of colours to each set in~$P$.
\end{itemize}

\noindent
Note that the {\sc Colourful Components} problem is equivalent to finding the minimum number of edges that must be deleted from the input graph~$G$ such that the vertex set of the resulting graph can be partitioned into sets $V_1,\ldots,V_k$ (for arbitrary~$k$) such that each set~$V_i$ is colourful.
Note that in this case the number of sets $V_1,\ldots,V_k$ does not matter for our purposes and we do not insist that~$G[V_i]$ is connected.
However, we do insist that for distinct $i,j \in \{1,\ldots,k\}$, all edges with one end-vertex in~$V_i$ and the other in~$V_j$ must be in the set of deleted edges.
Furthermore, note that in any optimal solution no edge will be deleted with both end-vertices in the same set~$V_i$.
Thus the {\sc Colourful Components} problem is equivalent to finding a partition of~$V(G)$ into colourful sets $V_1,\ldots,V_k$ that minimizes the number of edges whose end-vertices are in different sets of the partition.
We call a partition of~$V(G)$ into colourful sets $V_1,\ldots,V_k$ a {\em colourful set partition} (note that this is different from the definition of colourful partition, for which we insist that each partition set induces a connected graph in~$G$).

Let~$Q$ be a colourful set partition of the vertices of~$G_i$.
Similarly to the {\sc Colourful Partition} case, we say that~$Q$ {\em induces} the partition~$P$ on~$X_i$ and the function $\rho: P \to {\cal{P}}(C)$ if both the following conditions hold:
\begin{itemize}
\item Two elements of~$X_i$ are in the same set in~$P$ if and only if they are in the same set in~$Q$.
\item For~$D \in P$, let $D' \in Q$ be the set that contains the vertices of~$D$.
Then $\rho(D) = \{c(v) \;|\; v \in D' \setminus D\}$.
\end{itemize}

\noindent
For a pair $(P,\rho)$, the value of ${\bf table}'_i(P,\rho)$ will be the minimum possible number of edges that need to be deleted in a colourful set partition~$Q$ of~$G_i$ among all such partitions~$Q$ that induce $(P,\rho)$; if no such partition exists then ${\bf table}'_i(P,\rho)$ is void.

Recall that $X_r=\emptyset$.
Therefore ${\bf table}'_r(\emptyset,\varnothing)$ is the minimum number edges that need to be deleted from~$G$ to obtain a colourful set partition.

Now we explain how we construct~${\bf table}'_i$ for each $i\in V(T)$.
Similarly to the {\sc Colourful Partition} case, in what follows we write ${\bf table}'_i(P,\rho) \improves t$ to refer to the following procedure:
If ${\bf table}'_i(P,\rho)$ is undefined, set it to be equal to~$t$.
If ${\bf table}'_i(P,\rho) = t'>t$ then set ${\bf table}'_i(P,\rho)=t$, otherwise, leave ${\bf table}'_i(P,\rho)$ unchanged.

If~$i$ is a {\em leaf} node, ${\bf table}'_i$ is constructed in a straightforward way because $X_i=\emptyset$, so we set ${\bf table}'_i(\emptyset,\varnothing) \improves 0$ and all other entries in ${\bf table}'_i$ remain void.

If~$i$ is an {\em introduce} node, let~$j$ be the unique child node of~$i$ and let $\{v\}=X_i \setminus X_j$.
For every pair $(P,\rho)$ such that ${\bf table}'_j(P,\rho)$ is not void, we do the following.
We choose each $R \in P$ in turn and let~$n_R$ be the number of neighbours~$v$ has in $X_i \setminus R$.
If $c(v) \in \rho(R)$ or there is a vertex in~$R$ with the same colour as~$v$, then we discard this choice of~$R$ and move on to the next one.
Otherwise, let $P'=P \setminus \{R\} \cup \{R \cup \{v\}\}$ and $\rho':P' \to {\cal P}(C)$ such that $\rho'(Y)=\rho(Y)$ for $Y \in P \setminus R$ and $\rho'(R \cup \{v\}\})=\rho(R)$.
Apply ${\bf table}'_i(P',\rho')\improves {\bf table}'_j(P,\rho)+n_R$, then move onto the next choice of~$R$.
Finally, let $n_\emptyset$ be the number of neighbours that~$v$ has in $X_i \setminus \{v\}$, let $P'=P \cup \{\{v\}\}$ and let $\rho':P' \to {\cal P}(C)$ such that $\rho'(Y)=\rho(Y)$ for $Y \in P$ and $\rho'(\{v\})=\emptyset$. Apply ${\bf table}'_i(P',\rho')\improves {\bf table}'_j(P,\rho)+n_\emptyset$.

If~$i$ is a {\em forget} node, let~$j$ be the unique child node of~$i$ and let $\{v\}=X_j \setminus X_i$.
For every pair $(P,\rho)$ such that ${\bf table}'_j(P,\rho)$ is not void, let~$P'$ be the partition~$P$ restricted to~$X_i$ (that is, if $\{v\} \in P$ then delete this set from~$P$ and otherwise, remove~$v$ from the set in~$P$ that contains it) and let $\rho':P' \to {\cal P}(C)$ be the function that takes the value $\rho'(Y)=\rho(Y)$ if $Y \in P\cap P'$ and $\rho'(Y)=\rho(Y\cup\{c(v)\})$ otherwise.
Apply ${\bf table}'_i(P',\rho')\improves {\bf table}'_j(P,\rho)$.

If~$i$ is an {\em join} node, let~$j$ and~$k$ be the two child nodes of~$i$.
For every pair of pairs $(P,\rho)$, $(P,\rho')$ such that ${\bf table}'_j(P,\rho)$ and ${\bf table}'_k(P,\rho')$ are not void, we do the following (note that the partition~$P$ in each pair is the same).
If $\rho(R) \cap \rho'(R) = \emptyset$ for all $R \in P$ then let~$e_P$ be the number of edges in~$G[X_i]$ whose end-vertices are in distinct sets of~$P$ and apply ${\bf table}'_i(P,\rho)={\bf table}'_j(P,\rho)+{\bf table}'_k(P,\rho')-e_P$.

Similarly to the case for {\sc Colourful Partition}, it is easy to verify that the above procedure will complete ${\bf table}'_i(P,\rho)$ correctly and we can thus obtain the size of an optimal solution to {\sc Colourful Components}.
Furthermore, it is easy to verify that the procedure also runs in \FPT-time.
This completes the proof.
\end{proof}

We now prove two \FPT\ results for two different parameters.
Our proof for the next result uses similar arguments to the proof sketch of Theorem~\ref{t-vertexcover_c} given in~\cite{Mi18}.
However, the details of both proofs are different, as optimal solutions for {\sc Colourful Partition} do not necessarily translate into optimal solutions for {\sc Colourful Components}.
This holds even if the instance has a vertex cover of size~$2$, as we showed in Example~\ref{e-intro} (the vertices $w,w'$ form a vertex cover in~$G$).

\begin{theorem}\label{t-vertexcover}
{\sc Colourful Partition} is \FPT\ when parameterized by vertex cover number.
\end{theorem}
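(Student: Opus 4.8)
The plan is to exploit the structure forced by a small vertex cover: almost all vertices lie in an independent set, and the ``shape'' of an optimal colourful partition restricted to the vertex cover can be guessed in a bounded number of ways, after which assigning the independent-set vertices becomes a tractable assignment problem.

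\emph{Setup.} Let $S$ be a vertex cover of $(G,c)$ with $|S|=s=\vc(G)$, computed in \FPT\ time; we may assume $G$ is connected, and $I:=V(G)\setminus S$ is independent. In any colourful partition, a part disjoint from $S$ is an independent set inducing a connected graph, hence a single vertex of $I$; call such parts \emph{trivial}. Every non-trivial part meets $S$, so there are at most $s$ of them, and they restrict to a partition $S=S_1\cup\cdots\cup S_t$ with $t\le s$; the non-trivial part with trace $S_i$ has the form $S_i\cup A_i$ for some $A_i\subseteq I$, each vertex of $A_i$ has a neighbour in $S_i$ (else it would be isolated inside $S_i\cup A_i$, which contains a vertex of $S$), and $G[S_i\cup A_i]$ is connected.

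\emph{Branching.} First I would branch over the partition $\{S_1,\dots,S_t\}$ of $S$ (at most $s^s$ choices), rejecting it unless every $S_i$ is colourful. Next, since $I$ is independent, whether $G[S_i\cup A_i]$ is connected depends only on the components of $G[S_i]$ and, for the vertices of $A_i$, on which of these components their neighbourhoods meet; moreover a minimal subset $T\subseteq A_i$ with $G[S_i\cup T]$ already connected satisfies $|T|\le |S_i|-1\le s-1$ (take a spanning tree of $G[S_i\cup T]$, push all leaves into $S_i$ by minimality, and count the edges incident to $T$). Hence I would also branch over, for each $i$, a \emph{connectivity certificate}: a tree on the components of $G[S_i]$ together with at most $s-1$ ``connector slots'', each slot labelled with a subset of $S$ describing the neighbourhood that a vertex filling it must have, such that the certificate is consistent (the labelled connectors glue all components of $G[S_i]$). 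The number of certificates is bounded by a function $f(s)$ of $s$, so the number of branches so far is $f(s)$.

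\emph{Residual problem and main obstacle.} Fixing a partition of $S$ and the certificates, it remains to assign each vertex of $I$ to some non-trivial part or to its own trivial part so that (i) each $S_i\cup A_i$ is colourful, (ii) every connector slot of part $i$ is filled by a vertex of the prescribed neighbourhood type, and (iii) the total number of parts, which equals $t+|I|-\sum_i|A_i|$, is minimised — equivalently $\sum_i|A_i|$ is maximised. After grouping the vertices of $I$ by (neighbourhood, colour), this is an assignment problem in which part $i$ has capacity one per colour not used on $S_i$, together with lower-bound demands coming from the slots. I expect this residual step to be the main obstacle: constraint (i) is an ``at most one per colour per part'' upper bound, while (ii) is a lower bound, and since each vertex of $I$ carries \emph{both} a colour and a neighbourhood type, these two families of constraints do not collapse into a single $s$--$t$ flow. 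The work is to show the residual problem is nonetheless polynomial-time solvable — for instance by first realising the at most $s^2$ connector slots via an auxiliary matching, together with an exchange argument identifying which colours are best ``spent'' on slots, and then completing the assignment with an ordinary degree-constrained bipartite matching (or by a direct min-cost flow formulation). Since the number of colours can be $\Theta(n)$ even when $\vc(G)$ is bounded, there is no hope of kernelising by colours, so a genuine polynomial-time residual algorithm is essential. Combining the $f(s)$ branches with it yields an $f(s)\cdot n^{O(1)}$ algorithm, and hence the result.
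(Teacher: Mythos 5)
Your reduction of the problem to branching over partitions of the vertex cover plus bounded ``connectivity certificates'' is sound as far as it goes (the bound $|T|\le|S_i|-1$ on minimal connectors and the $f(s)$ count of certificates are fine), but the proof has a genuine gap exactly where you flag it: you never establish that the residual assignment problem is polynomial-time solvable, and nothing you sketch resolves it. That problem asks for a maximum-cardinality assignment of independent-set vertices to parts subject to (i) a capacity of one vertex per colour per part and (ii) lower-bound demands that each connector slot of part~$i$ receive a vertex of a prescribed neighbourhood type, where every vertex simultaneously carries a colour and a neighbourhood type. As you yourself observe, this does not reduce to a single flow, and the remedies you gesture at (``an auxiliary matching'' for the slots ``together with an exchange argument identifying which colours are best spent on slots'') are not carried out; the difficulty is precisely that the colour consumed by a slot filler in part~$i$ interacts with which other vertices can still be placed in part~$i$, and since the number of colours can be $\Theta(n)$ you cannot branch over these choices either. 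Without a concrete algorithm or exchange lemma for this coupled matching-with-category-lower-bounds problem, the argument does not yield an \FPT\ algorithm; at best it suggests one.

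It is worth noting that the paper's proof is structured so that this obstacle never arises: instead of solving a residual optimisation problem on the full instance, it shrinks the instance itself. A first reduction rule keeps at most $s$ vertices per (colour, neighbourhood-in-$S$) class, which already bounds $|V(G)|$ in terms of $s+|C|$; then, for each guessed partition $Q$ of $S$ into colourful blocks, a second rule deletes ``clone'' colours occurring only outside $S$ (at most $s-1$ pairwise-clone colours are retained, since at most $s-1$ independent-set vertices are ever needed as connectors), with a bipartite-matching (Hopcroft--Karp) argument showing how to lift a minimum solution of the reduced instance back to the original. This bounds the whole instance by a function of $s$ alone, after which brute force suffices. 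So the connectivity/colour coupling that blocks your residual step is exactly what the kernelisation route is designed to avoid; to complete your approach you would need to supply the missing polynomial-time algorithm (or prove an exchange lemma reducing the slot-filler choices to $f(s)$ candidates), which is the real content of the theorem and is currently absent.
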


\begin{proof}
In fact we will show the result for the optimization version of {\sc Colourful Partition}.
Let~$(G,c)$ be a coloured graph.
We will prove that we can find the size of a {\em minimum} colourful partition (one with smallest size) in \FPT\ time.
By a simple greedy argument, we can find a vertex cover~$S$ of~$G$ that contains at most $2\vc(G)$ vertices.
It is therefore sufficient to show that {\sc Colourful Partition} is \FPT\ when parameterized by~$|S|$.
If two vertices of~$G$ have the same colour then they will always be in different colourful components of~$G$.
Thus if two vertices with the same colour are adjacent, we can delete the edge that joins them, that is, we may assume that~$c$ is a proper colouring of~$G$ (note that deleting edges from~$G$ maintains the property that~$S$ is a vertex cover).
Since~$S$ is a vertex cover, $T=V(G)\setminus S$ is an independent set.
Let~$C$ be the set of colours used on~$G$.
We let $s=|S|$ and for a set $S'\subseteq S$, we let~$T_i(S')$ be the set of vertices with colour~$i$ whose neighbourhood in~$S$ is~$S'$, that is, for all $u\in T_i(S')$, we have that $N_S(u)=S'$ and $c(u)=i$.

\ourrule{\label{rule:one}If there is a colour~$i\in C$ and a set~$S'\subseteq S$ such that $|T_i(S')|\geq s+1$, then delete $|T_i(S')|-s$ (arbitrary) vertices of~$T_i(S')$ from~$G$.}

\medskip
\noindent
We claim that we can safely apply Rule~\ref{rule:one}.
In order to see this, consider any colourful partition $(V_1,\ldots,V_k)$ of $(G,c)$.
The number of colourful components in the solution containing at least one vertex of~$S$ is at most~$s$.
Hence at most~$s$ vertices of~$T_i(S')$ can occur in these components.
All other vertices of~$T_i(S')$ will be in $1$-vertex components of the solution.
Since the vertices of~$T_i(S')$ have the same neighbourhood, we may choose the latter set of vertices arbitrarily.
Then, given a minimum colourful partition for the resulting coloured graph~$(G',c')$, we obtain a minimum colourful partition for~$(G,c)$ by restoring the deleted vertices into $1$-vertex components.
This proves the claim.\dia

\medskip
\noindent
We apply Rule~\ref{rule:one} exhaustively.
For convenience we again denote the resulting instance by $(G,c)$ and let $T=V(G)\setminus S$.
Note that now $|T_i(S')| \leq s$ for every $i\in C$ and every $S'\subseteq S$.
Consequently, the number of vertices of~$T$ with colour~$i$ is at most~$s2^s$.
Note that this means that $|T|\leq |C|s2^s$ and thus the total number of vertices in~$G$ is at most $s+|C|s2^s$, which gives us an \FPT-algorithm in $s+|C|$.
Hence in order to prove our result it remains to bound the size of~$C$ by a function of~$s$.

We let~$C_T$ denote the set of colours that appear on vertices of~$T$ but not on vertices of~$S$.
For two colours $i,j \in C_T$, if $T_i(S')=T_j(S')$ holds for all $S' \subseteq S$, then we say that~$i$ and~$j$ are {\em clones} and note that these colours are interchangeable.
For a colourful partition $P=(V_1,\ldots,V_k)$ for~$G$, we let $P_S=(V_1\cap S,\ldots,V_k\cap S)$ be the partition of~$S$ {\em induced by}~$P$ (note that in this case we allow some of the blocks to be empty).

We consider each partition~$Q$ of~$S$.
If a block of~$Q$ is not colourful, then we discard~$Q$.
Otherwise we determine a minimum colourful partition~$P$ for~$G$ with $P_S=Q$; note that such a partition~$P$ may not exist, as it may not be possible to make the blocks of~$Q$ connected (by using vertices~$T$ in addition to edges both of whose endpoints lie in~$S$).
Finally, we will choose the colourful partition for~$G$ that has minimum size overall.

Let~$Q$ be a partition of~$S$ in which each block is colourful.

\ourrule{\label{rule:two}If there are~$s$ distinct colours $i_1,\ldots,i_s$ in~$C_T$ that are pairwise clones, then delete all vertices with colour~$i_s$ from~$G$.}

\medskip
\noindent
We claim that can safely apply Rule~\ref{rule:two} if we only consider colourful partitions~$P$ that induce~$Q$ on~$S$.
We say that a colour~$i\in C_T$ is {\em redundant} for a colourful partition~$P$ if after deleting all vertices of~$T$ with colour~$i$, the resulting colourful partition~$P'$ induces the same partition on~$S$ as~$P$ does, that is $P_S'=P_S$.
Suppose we have~$s$ distinct colours $i_1,\ldots,i_s\in C_T$ that are pairwise clones.
Any induced partition~$Q$ on~$S$ requires using at most~$s-\nobreak 1$ vertices of~$T$ to connect the vertices in the same block of~$Q$.
Hence, there must be at least one colour that is redundant for~$P$.
As the colours $i_1,\ldots,i_s$ are indistinguishable in~$(G,c)$ it does not matter which colour we choose to delete, so we may assume that~$i_s$ is redundant.

Let~$(G',c')$ be the resulting coloured graph after removing all vertices with colour~$i_s$ from~$G$.
Let~$P'$ be a minimum colourful partition for~$(G',c')$ with $P'_S=Q$.
Then we compute a minimum colourful partition~$P$ of~$G$ with $P_S=Q$ as follows.
We construct an auxiliary bipartite graph~$F$ with partition classes~$X$ and~$Y$.
The vertices of~$X$ represent the vertices of~$T$ with colour~$i_s$, and the vertices of~$Y$ represent colourful components of~$P$ that contain a vertex of~$S$.
We add an edge between two vertices $x\in X$ and $y\in Y$ if and only if the vertex of~$T$ corresponding to~$x$ is adjacent to at least one vertex of the component of~$P$ corresponding to~$y$.
We then compute a maximum matching~$M$ in~$F$, which prescribes how the vertices with colour~$i_s$ must be added to~$P'$ to obtain~$P$.
Note that $P_S=P'_S=Q$.
As~$M$ is a matching, at most one vertex with colour~$i_s$ is added to the components of~$P'$ that contain a vertex of~$S$.
Hence, $P$ is colourful.
As~$M$ is maximum, the number of components that consist of isolated vertices coloured~$i_s$ is minimized.
Hence, $P$ is a minimum colourful partition of~$(G,c)$ with $P_S=Q$.\dia

\medskip
\noindent
We now apply Rule~\ref{rule:two} exhaustively, 
and again
call the resulting graph~$G$ and define~$S$ and~$T$ as before.

\setcounter{ctrclaim}{0}
\medskip
\noindent
\clm{\label{clm:one}$|C_T|\leq (s-1)(s+1)^{2^s}$.}

\medskip
\noindent
We prove Claim~\ref{clm:one} as follows.
First note that $0\leq |T_i(S')| \leq s$ for every $i\in C$ and every $S'\subseteq S$ by Rule~\ref{rule:one}.
Thus for every~$S' \subseteq S$, there are $s+1$ possible values of $|T_i(S')|$.
Since~$S$ has~$2^s$ subsets~$S'$, for any colour~$i$ there are at most $(s+1)^{2^s}$ possible mappings $|T_i(S')|: {\cal P}(S) \to \{0,\ldots,s\}$.
If two colours $i$, $j$ have the same mapping, they are clones.
By Rule~\ref{rule:two}, at most $s-1$ colours can be pairwise clones.
Therefore $|C_T|\leq (s-1)(s+1)^{2^s}$.\dia

\medskip
\noindent
We continue as follows.
The number of different colours used on vertices of~$S$ is at most~$s$.
Hence $C\setminus C_T$ has size at most~$s$.
Recall that for every colour $i\in C$, the number of vertices of~$T$ with colour~$i$ is at most~$s2^s$.
We combine these two facts with  Claim~\ref{clm:one}.  Then
\[\begin{array}{lcl}
|V| &= &|S| + |T|\\
&\leq &s + |C|s2^s\\
&= &s+|C\setminus C_T|s2^s + |C_T|s2^s\\
&\leq &s+s^22^s+ (s-1)(s+1)^{2^s}s2^s,
\end{array}\]
which means that 
by brute force we can compute a minimum colourful partition~$P$ for~$G$ subject to the restriction that $P_S=Q$ in~$f(s)$ time for some function~$f$ that only depends on~$s$.

The correctness of our \FPT-algorithm follows from the above description.
It remains to analyze the running time.
Applying Rule~\ref{rule:one} exhaustively takes $O(2^s|C|)=O(2^sn)$ time, as the number of different subsets $S'\subseteq S$ is~$2^s$.
We then branch into at most~$s^s$ directions by considering every partition of~$S$.
Applying Rule~\ref{rule:two} exhaustively takes $O(2^sn^2)$ time, as for each colour~$i\in C_T$ we first calculate the values of $|T_i(S')|$ for every $S' \subseteq S$, which can be done in $O(2^sn)$ time.
Doing this for every colour takes a total of $O(2^sn^2)$ time and partitioning the colours into sets that are clones can be done in $O(2^{2s}n^2)$ time, and deleting colours can be done in~$O(sn)$ time.
As every auxiliary graph~$F$ has at most~$n$ vertices, we can compute a maximum matching in~$F$ in~$O(n^{\frac{5}{2}})$ time by using the Hopcroft-Karp algorithm~\cite{HK73}.
Finally, translating a minimum solution into a minimum solution for the graph in which we restored the vertices we removed due to exhaustive application of Rules~\ref{rule:one} and~\ref{rule:two} takes~$O(n)$ time.
This means that the total running time is $O(2^sn)+s^s(O(2^{2s}n^2)+O(s)+O(n^{\frac{5}{2}})+f(s)+O(n))=f'(s)O(n^\frac{5}{2})$ for some function~$f'$ that only depends on~$s$, as desired.
\end{proof}

\noindent
For our final result we need to
introduce a problem used by Robertson and Seymour~\cite{RS95} in their graph minor project to prove their algorithmic result on recognizing graphs that contain some fixed graph~$H$ as a minor.
They proved that
this problem
is cubic-time solvable as long as $Z_1\cup \cdots \cup Z_r$ has constant size.

\problemdef{Disjoint Connected Subgraphs}{a graph~$G$ and~$r$ pairwise disjoint subsets $Z_1,\ldots,Z_r$ of~$V(G)$ for some $r\geq 1$.}{Is it possible to partition $V(G)\setminus (Z_1\cup \cdots \cup Z_r)$ into sets $S_1,\ldots,S_r$ such that every $S_i\cup Z_i$ induces a connected subgraph of~$G$?}

\noindent
We are now ready to prove our final result.

\begin{theorem}\label{t-nonunique}
When parameterized by the number of non-uniquely coloured vertices, {\sc Colourful Components} is para-\NP-complete, but {\sc Colourful Partition} is \FPT.
\end{theorem}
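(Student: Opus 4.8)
The theorem has two parts, proved independently.

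\textbf{Para-\NP-completeness of {\sc Colourful Components}.} Membership in \NP\ is immediate (guess the deleted edge set and check that each component is colourful). For hardness I would reduce from {\sc Multiterminal Cut}, which is \NP-complete on general graphs for every fixed number $k\geq 3$ of terminals (Dahlhaus, Johnson, Papadimitriou, Seymour and Yannakakis). Given a graph~$G$ with terminals $t_1,t_2,t_3$ and an integer~$p$, let~$c$ assign one common colour to $t_1,t_2,t_3$ and a distinct new colour to every other vertex. Then $(G,c)$ has exactly three non-uniquely coloured vertices, and since the unique colours can never be repeated inside any component, $G$ can be turned into a colourful graph by deleting at most~$p$ edges if and only if some set of at most~$p$ edges pairwise separates $t_1,t_2,t_3$. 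Hence {\sc Colourful Components} is \NP-complete already for instances with three non-uniquely coloured vertices, i.e.\ para-\NP-complete for this parameter.

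\textbf{\FPT-algorithm for {\sc Colourful Partition}.} Let~$U$ be the set of non-uniquely coloured vertices and $k=|U|$; every vertex outside~$U$ is uniquely coloured. The plan is to branch over all partitions $\pi=(U_1,\dots,U_t)$ of~$U$ (at most~$k^k$ of them), discarding any~$\pi$ that has a block which is not colourful or which meets two components of~$G$, since no colourful partition can induce such a~$\pi$ on~$U$. For each surviving~$\pi$ I would compute the minimum size of a colourful partition~$P$ of~$G$ inducing~$\pi$ on~$U$ (two vertices of~$U$ in one part of~$P$ iff in the same~$U_i$), and then take the minimum over all~$\pi$.

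The crucial step is to show that this minimum equals $r(\pi):=\#\{\text{components of }G\text{ disjoint from }U\}+\sum_H t_H$, where~$H$ ranges over the components meeting~$U$, $t_H$ is the number of blocks of~$\pi$ lying in~$H$, and the sum is taken to be $+\infty$ unless for every such~$H$ the instance of {\sc Disjoint Connected Subgraphs} with the blocks of~$\pi$ in~$H$ as the sets~$Z_i$ admits a solution. For the lower bound I would start from an optimal colourful partition~$P$ inducing~$\pi$: if a part~$A$ disjoint from~$U$ were adjacent to another part~$B$, then $A\cup B$ is connected and still colourful (all of~$A$ is uniquely coloured), contradicting optimality; hence each $U$-free part of~$P$ is a union of whole components, so it is exactly one $U$-free component, and inside each $U$-meeting component~$H$ every part meets~$U$ in a block of~$\pi$, so these parts form a solution of the {\sc Disjoint Connected Subgraphs} instance for~$H$ and number exactly~$t_H$. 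For the upper bound, given~$\pi$ with $r(\pi)<\infty$, take one part per $U$-free component together with the {\sc Disjoint Connected Subgraphs} solution parts inside each $U$-meeting component; every part is connected, and colourful because the only repeated colours occur within the~$U_i$. Taking~$\pi$ to be the all-singletons partition always gives $r(\pi)<\infty$, since {\sc Disjoint Connected Subgraphs} with singleton terminal sets is always solvable on a connected graph (assign each vertex to the closest terminal along a rooted spanning tree), so the overall minimum is finite. Since each block of~$\pi$ has at most~$k$ vertices, the total size of the terminal sets in every {\sc Disjoint Connected Subgraphs} instance we build is at most~$k$, so each is solvable in cubic time by the algorithm of Robertson and Seymour~\cite{RS95}; combined with the $k^k$ choices of~$\pi$ this gives an \FPT\ algorithm, and thus families of instances on which {\sc Colourful Components} and {\sc Colourful Partition} have different parameterized complexities.

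The main obstacle is establishing this structural decomposition of an optimal colourful partition --- that $U$-free parts collapse onto the $U$-free components and the remaining parts split into one independent {\sc Disjoint Connected Subgraphs} instance per $U$-meeting component. Once the ``merge a $U$-free part into an adjacent part'' exchange argument is set up, the rest is routine branching over partitions of~$U$ plus an application of the bounded-terminal {\sc Disjoint Connected Subgraphs} algorithm.
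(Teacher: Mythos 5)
Your proposal is correct and follows essentially the same route as the paper: the hardness part is the identical reduction from {\sc Multiterminal Cut} with three like-coloured terminals, and the \FPT\ part is the same strategy of branching over all partitions of the non-uniquely coloured vertices and invoking the Robertson--Seymour algorithm for {\sc Disjoint Connected Subgraphs}. Your version merely fills in details the paper leaves implicit (discarding non-colourful blocks, the exchange argument showing that parts avoiding~$U$ are whole $U$-free components, and the per-component treatment of disconnected graphs), whereas the paper sidesteps these by assuming~$G$ connected and enumerating partitions into exactly~$k$ possibly empty blocks.
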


\begin{proof}
The {\sc Multiterminal Cut} problem is to test for a graph~$G$, integer~$p$ and terminal set~$S$, if there is a set~$E'$ with $|E'|\leq p$ such that every terminal in~$S$ is in a different component of $G-E'$.
This problem is \NP-complete even if $|S|=3$~\cite{DJPSY94}.
To prove the first part, give each of the three vertices in~$S$ colour~$1$ and the vertices in $G-S$ colours $2,\ldots,|V|-2$.

To prove the second part, let~$(G,c)$ be a coloured graph and~$k$ be an integer be an instance of {\sc Colourful Partition}.
We 
may 
assume without loss of generality that~$G$ is connected.
Let~$Q$ with $|Q|=q$ be the set of non-uniquely coloured vertices.
If $k\geq q$, then 
we 
place each of the~$q$ vertices of~$Q$ in a separate component and assign the uniquely coloured vertices to components in an arbitrary way subject to maintaining connectivity of the~$q$ components.
This yields a colourful partition of~$(G,c)$ of size at most~$k$.
Now assume that $k\leq q-1$.
We consider every possible partition of~$Q$ into~$k$ sets $Z_1,\ldots,Z_k$, where
some of the sets~$Z_i$ may be empty.
It remains to solve {\sc Disjoint Connected Subgraphs} on the input $(G,Z_1,\ldots,Z_k)$.
Note that $|Z_1|+\cdots+|Z_k|$ has size~$q$.
Hence, by the 
aforementioned 
result of Robertson and Seymour~\cite{RS95}, solving {\sc Disjoint Connected Subgraphs} takes cubic time.
As there are~$O(q^q)$ partitions to consider, the result follows.
\end{proof}
As an immediate consequence of Theorem~\ref{t-nonunique} we obtain the following result.

\begin{corollary}\label{c-reverse}
There exists a family of instances on which {\sc Colourful Components} and {\sc Colourful Partition} have different parameterized complexities.
\end{corollary}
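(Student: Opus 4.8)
The plan is to obtain Corollary~\ref{c-reverse} directly from Theorem~\ref{t-nonunique}, so the proof is essentially a one-line deduction together with the standard observation that para-\NP-completeness and fixed-parameter tractability are genuinely different notions under $\sf{P}\neq\NP$. Concretely, I take the parameter to be the number of non-uniquely coloured vertices. By Theorem~\ref{t-nonunique}, under this parameterisation {\sc Colourful Components} is para-\NP-complete while {\sc Colourful Partition} is \FPT; and since an \FPT\ algorithm, evaluated at the constant parameter value for which a para-\NP-hard problem is already \NP-hard, would solve that problem in polynomial time, no para-\NP-hard problem can be \FPT\ unless $\sf{P}=\NP$. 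Hence the two problems have different parameterised complexities.

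To make the witnessing family concrete, I would point back to the reduction behind the first half of Theorem~\ref{t-nonunique}: from a {\sc Multiterminal Cut} instance with three terminals it builds a coloured graph in which exactly the three terminals share a colour and every other vertex is uniquely coloured. Thus the class~${\cal F}$ of coloured graphs with at most three non-uniquely coloured vertices already supplies the required family. On~${\cal F}$, {\sc Colourful Components} is \NP-complete, this being precisely the content of the first half of Theorem~\ref{t-nonunique}, whereas {\sc Colourful Partition} restricted to~${\cal F}$ is polynomial-time solvable: it is the specialisation to the fixed parameter value $q\le 3$ of the \FPT\ algorithm of Theorem~\ref{t-nonunique}, consisting of a bounded number of guesses for how the non-uniquely coloured vertices are distributed among the parts, each guess resolved by a single cubic-time call to the {\sc Disjoint Connected Subgraphs} algorithm of Robertson and Seymour~\cite{RS95}.

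There is no real obstacle here; the corollary is an immediate consequence of Theorem~\ref{t-nonunique}. The only point that needs care is to keep the complexity hypothesis explicit, since ``different parameterised complexities'' is a statement relative to $\sf{P}\neq\NP$, and, if one wants more than an abstract separation, to record the explicit family~${\cal F}$ above on which one problem is \NP-complete and the other polynomial-time solvable.
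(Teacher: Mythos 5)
Your proposal is correct and follows exactly the paper's route: the paper derives Corollary~\ref{c-reverse} as an immediate consequence of Theorem~\ref{t-nonunique}, using the number of non-uniquely coloured vertices as the parameter. Your extra remarks (the explicit family with at most three non-uniquely coloured vertices and the $\sf{P}\neq\NP$ caveat) are sound elaborations, not a different argument.
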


\section{Conclusions}\label{s-con}

We conclude our paper with some open problems.
We showed that {\sc Colourful Partition} and {\sc Colourful Components} are \NP-complete for coloured trees of maximum degree at most~$6$ (and colour-multiplicity~$2$).
What is their complexity for coloured trees of maximum degree~$d$ for $3\leq d\leq 5$?
{\sc Colourful Components} is known to be \NP-complete for $3$-coloured graphs of maximum degree~$6$ (Theorem~\ref{t-3colours-degree6});
we also ask if one can prove a result analogous to Theorem~\ref{t-33}: what is its complexity for coloured graphs of maximum degree~$3$?

We also proved that {\sc $2$-Colourful Partition} is \NP-complete for coloured (planar bipartite) graphs of path-width~$3$ (and maximum degree~$3$), but polynomial-time solvable for coloured graphs of treewidth~$2$.
We believe that the latter result can be extended to {\sc $k$-Colourful Partition} $(k\geq 3$), but leave this for future research.
A more interesting question is whether the problem is \FPT\
on graphs of
treewidth~$2$ when parameterized by $k$.

\bibliography{mybib-fsttcs}

\end{document}